\definecolor{mycolor1}{rgb}{0.00000,0.44700,0.74100}%
\DeclarePairedDelimiter\bra{\langle}{\rvert}
\DeclarePairedDelimiter\ket{\lvert}{\rangle}
\DeclarePairedDelimiterX\braket[2]{\langle}{\rangle}{#1 \delimsize\vert #2}
\newtheorem{theorem}{Theorem}
\newtheorem{lemma}[theorem]{Lemma}
\newtheorem{corollary}[theorem]{Corollary}
\newcommand{\eq}[1]{(\ref{eq:#1})}
\newcommand{\thm}[1]{\hyperref[thm:#1]{Theorem~\ref*{thm:#1}}}
\newcommand{\defn}[1]{\hyperref[defn:#1]{Definition~\ref*{defn:#1}}}
\newcommand{\lem}[1]{\hyperref[lem:#1]{Lemma~\ref*{lem:#1}}}
\newcommand{\prop}[1]{\hyperref[prop:#1]{Proposition~\ref*{prop:#1}}}
\newcommand{\fig}[1]{\hyperref[fig:#1]{Figure~\ref*{fig:#1}}}
\newcommand{\tab}[1]{\hyperref[tab:#1]{Table~\ref*{tab:#1}}}
\renewcommand{\sec}[1]{\hyperref[sec:#1]{Section~\ref*{sec:#1}}}
\newcommand{\append}[1]{\hyperref[append:#1]{Appendix~\ref*{append:#1}}}
\newcommand{\cor}[1]{\hyperref[cor:#1]{Corollary~\ref*{cor:#1}}}
\newcommand{\obs}[1]{\hyperref[obs:#1]{Observation~\ref*{obs:#1}}}
\newcommand{\R}{\mathbb{R}}
\newcommand{\C}{\mathbb{C}}
\newcommand{\comment}[1]{}
\newcommand{\norm}[1]{\left\lVert#1\right\rVert}
\pgfplotsset{
	log x ticks with fixed point/.style={
		xticklabel={
			\pgfkeys{/pgf/fpu=true}
			\pgfmathparse{exp(\tick)}%
			\pgfmathprintnumber[fixed relative, precision=3]{\pgfmathresult}
			\pgfkeys{/pgf/fpu=false}
		}
	},
	log y ticks with fixed point/.style={
		yticklabel={
			\pgfkeys{/pgf/fpu=true}
			\pgfmathparse{exp(\tick)}%
			\pgfmathprintnumber[fixed relative, precision=3]{\pgfmathresult}
			\pgfkeys{/pgf/fpu=false}
		}
	}
}
\newenvironment{proof-sketch}{%
	\proof}{\endproof}
\newcommand{\E}{\mathrm{E}}
\newcommand{\CE}{\mathcal{E}}
\newcommand{\CU}{\mathcal{U}}
\newcommand{\CG}{\mathcal{G}}
\newcommand{\CO}{\mathcal{O}}
\newcommand{\neutralize}[1]{\expandafter\let\csname c@#1\endcsname\count@}
\newenvironment{thmbis}[1]
{%
	\neutralize{thm}\phantomsection
	\begin{theorem}}
	{\end{theorem}
	\addtocounter{theorem}{-1}%
}
\newenvironment{corbis}[1]
{%
	\neutralize{thm}\phantomsection
	\begin{corollary}}
	{\end{corollary}
	\addtocounter{theorem}{-1}%
}
\title{\protect\parbox{\textwidth}{\protect\centering Time-dependent Hamiltonian simulation\\ with $L^1$-norm scaling}}
\author[aff1]{Dominic W.\ Berry}
\author[aff2,aff3]{Andrew M.\ Childs}
\author[aff2,aff3]{Yuan Su}
\author[aff3,aff4]{Xin Wang}
\author[aff5,aff6,aff7]{Nathan Wiebe}
\affiliation[aff1]{Department of Physics and Astronomy, Macquarie University, Sydney, NSW 2109, Australia}
\affiliation[aff2]{Department of Computer Science, University of Maryland, College Park, MD 20742, USA}
\affiliation[aff3]{Institute for Advanced Computer Studies and Joint Center for Quantum Information and Computer Science, University of Maryland, College Park, MD 20742, USA}
\affiliation[aff4]{Institute for Quantum Computing, Baidu Research, Beijing 100193, China}
\affiliation[aff5]{Department of Physics, University of Washington, Seattle, WA 98195, USA}
\affiliation[aff6]{Pacific Northwest National Laboratory, Richland, WA 99354, USA}
\affiliation[aff7]{Google Inc., Venice, CA 90291, USA}
\begin{document}
\maketitle

\hypersetup{pdfauthor={Dominic W. Berry, Andrew M. Childs, Yuan Su, Xin Wang, Nathan Wiebe},
	pdftitle={Time-dependent Hamiltonian simulation with L1-norm scaling},
}
	
%%%%%%%%%%%%%%%%%%%%%%%%%%%%%%%%%%%%%%%%%%%%%%%%%%%%%%%%%%%%%%%%%%%%%%%%%%%%%%
\begin{abstract}
	The difficulty of simulating quantum dynamics depends on the norm of the Hamiltonian. When the Hamiltonian varies with time, the simulation complexity should only depend on this quantity instantaneously. We develop quantum simulation algorithms that exploit this intuition. For sparse Hamiltonian simulation, the gate complexity scales with the $L^1$ norm $\int_{0}^{t}\mathrm{d}\tau\norm{H(\tau)}_{\max}$, whereas the best previous results scale with $t\max_{\tau\in[0,t]}\norm{H(\tau)}_{\max}$. We also show analogous results for Hamiltonians that are linear combinations of unitaries. Our approaches thus provide an improvement over previous simulation algorithms that can be substantial when the Hamiltonian varies significantly. We introduce two new techniques: a classical sampler of time-dependent Hamiltonians and a rescaling principle for the Schr\"{o}dinger equation. The rescaled Dyson-series algorithm is nearly optimal with respect to all parameters of interest, whereas the sampling-based approach is easier to realize for near-term simulation. These algorithms could potentially be applied to semi-classical simulations of scattering processes in quantum chemistry.
\end{abstract}

%%%%%%%%%%%%%%%%%%%%%%%%%%%%%%%%%%%%%%%%%%%%%%%%%%%%%%%%%%%%%%%%%%%%%%%%%%%%%%
\newpage
{
	\clearpage\tableofcontents
}
\newpage
%\setcounter{page}{1}

%%%%%%%%%%%%%%%%%%%%%%%%%%%%%%%%%%%%%%%%%%%%%%%%%%%%%%%%%%%%%%%%%%%%%%%%%%%%%%
\section{Introduction}
\label{sec:intro}
%!TEX root = L1NormScalingSimulation.tex

Simulating the Hamiltonian dynamics of a quantum system is one of the most promising applications of a quantum computer. The apparent classical intractability of simulating quantum dynamics led Feynman \cite{Fey82} and others to propose the idea of quantum computation. Quantum computers can simulate various physical systems, including condensed matter physics \cite{BWMMNC18}, quantum field theory \cite{JLP12}, and quantum chemistry \cite{BGBWMPFN18,Cao18,WBA11,MEABY20}. The study of quantum simulation has also led to the discovery of new quantum algorithms, such as algorithms for linear systems \cite{HHL09}, differential equations \cite{BCOW17}, semidefinite optimization \cite{BS17}, formula evaluation \cite{FGG07}, quantum walk \cite{CCDFGS03}, and ground-state and thermal-state preparation \cite{PW09,CS16}.

Let $H(\tau)$ be a Hamiltonian defined for $0\leq\tau\leq t$. The problem of Hamiltonian simulation is to approximate the evolution $\exp_{\mathcal{T}}\big({-}i\int_{0}^{t}\mathrm{d}\tau\ H(\tau)\big)$ using a quantum circuit comprised of elementary quantum gates, where $\exp_{\mathcal{T}}$ denotes the time-ordered matrix exponential. If the Hamiltonian $H(\tau)=H$ does not depend on time, the evolution operator can be represented in closed form as $e^{-itH}$. Then the problem can be greatly simplified and it has been thoroughly studied by previous works on quantum simulation \cite{Llo96,AT03,BACS05,BCK15,BC12,BCCKS14,LC16,LC17,Low18,BCG14,Campbell18,Kieferova18,LW18,CMNRS18,COS18,CS19}.

Simulating a general time-dependent Hamiltonian $H(\tau)$ naturally subsumes the time-independent case, and can be applied to devising quantum control schemes \cite{Pang2017,Nielsen06}, describing quantum chemical reactions \cite{Butler98}, and implementing adiabatic quantum algorithms \cite{Farhi01}. However, the problem becomes considerably harder and there are fewer quantum algorithms available. Wiebe, Berry, H\o{}yer, and Sanders designed a time-dependent Hamiltonian simulation algorithm based on higher-order product formulas \cite{WBHS10}. They assume that $H(\tau)$ is smooth up to a certain order and they give an example in which a desired approximation cannot be achieved due to the non-differentiability of the Hamiltonian. The smoothness assumption is relaxed in subsequent work by Poulin, Qarry, Somma, and Verstraete \cite{PQSV11} based on techniques of Hamiltonian averaging and Monte Carlo estimation. The fractional-query algorithm of Berry, Childs, Cleve, Kothari, and Somma can also simulate time-dependent Hamiltonians \cite{FractionalQuery14}, with an exponentially improved dependence on precision and only logarithmic dependence on the derivative of the Hamiltonian. A related quantum algorithm for time-dependent Hamiltonian simulation was suggested by Berry, Childs, Cleve, Kothari, and Somma based on the truncated Dyson series \cite{BCCKS14}, which is analyzed explicitly in \cite{Kieferova18,LW18}.

In this paper, we study time-dependent Hamiltonian simulation based on a simple intuition: the difficulty of simulating a quantum system should depend on the integrated norm of the Hamiltonian. To elaborate, first consider the special case of simulating a time-independent Hamiltonian. The complexity of such a simulation depends on $t\norm{H}$ \cite{CK10}, where $\norm{\cdot}$ is a matrix norm that quantifies the size of the Hamiltonian.
It is common to express the complexity in terms of the spectral norm $\norm{H}_\infty$ (i.e., the Schatten $\infty$-norm), which quantifies the maximum energy of $H$.

In the general case where the Hamiltonian $H(\tau)$ is time dependent, 
we expect a quantum simulation algorithm to depend on the Hamiltonian locally in time, and therefore to have complexity that scales with the integrated norm $\int_{0}^{t}\mathrm{d}\tau\norm{H(\tau)}$. This is the $L^1$ norm of $\norm{H(\tau)}$ when viewed as a function of $\tau$, so we say such an algorithm has $L^1$-norm scaling. Surprisingly, existing analyses of quantum simulation algorithms fail to achieve this complexity; rather, their gate complexity scales with the worst-case cost $t\max_{\tau\in[0,t]}\norm{H(\tau)}$. It is therefore reasonable to question whether our intuition is correct, or if there exist faster time-dependent Hamiltonian simulation algorithms that can exploit this intuition.\footnotemark

\footnotetext{For the Dyson-series approach, Low and Wiebe claimed that the worst-case scaling may be avoided by a proper segmentation of the time interval \cite{LW18}. However, it is unclear how their analysis can be formalized to give an algorithm with complexity that scales with the $L^1$ norm. In \sec{rescaling}, we propose a rescaling principle for the Schr\"{o}dinger equation and develop a rescaled Dyson-series algorithm with $L^1$-norm scaling.}

Our work answers this question by providing multiple faster quantum algorithms for time-dependent Hamiltonian simulation. These algorithms have gate complexity that scales with $\int_{0}^{t}\mathrm{d}\tau\norm{H(\tau)}$, in contrast to the best previous scaling of $t\max_{\tau\in[0,t]}\norm{H(\tau)}$. As the norm inequality $\int_{0}^{t}\mathrm{d}\tau\norm{H(\tau)}\leq t\max_{\tau\in[0,t]}\norm{H(\tau)}$ always holds but is not saturated in general, these algorithms provide a strict speedup over existing algorithms. We further analyze an application to simulating scattering processes in quantum chemistry, showing that our improvement can be favorable in practice.

We introduce notation and terminology and state our assumptions in \sec{prelim}.
Following standard assumptions about quantum simulation, we consider two different input models of Hamiltonians. The first is the sparse matrix (SM) model common for analyzing Hamiltonian simulation in general, in which the Hamiltonian is assumed to be sparse and access to the locations and values of nonzero matrix elements are provided by oracles. We quantify the complexity of a simulation algorithm by the number of queries and additional gates it uses. The second model, favorable for practical applications such as condensed matter physics and quantum chemistry simulation, assumes that the Hamiltonian can be explicitly decomposed as a linear combination of unitaries (LCU), where the coefficients are efficiently computable on a classical computer and the summands can be exponentiated and controlled on a quantum computer. We ignore the cost of implementing the coefficient oracle and focus mainly on the gate complexity. Quantum simulation algorithms can sometimes work more efficiently in other input models, but we study these two models since they are versatile and provide a fair comparison of the gate complexity.

Reference \cite{FractionalQuery14} claims that the fractional-query algorithm can simulate time-dependent Hamiltonians with $L^\infty$-norm scaling. However, it is not hard to see that its query complexity in fact scales with the $L^1$ norm. While we do not show how to achieve this scaling for the gate complexity, our analysis is simple and suggests that such a result might be possible. We analyze the query complexity of the fractional-query algorithm in \sec{prelim_fracquery}.

We develop two new techniques to simulate time-dependent Hamiltonians with $L^1$-norm scaling. Our first technique is a classical sampling protocol for time-dependent Hamiltonians. In this protocol, we randomly sample a time $\tau\in[0,t]$ and evolve under the time-independent Hamiltonian $H(\tau)$, where the probability distribution is designed to favor those $\tau$ with large $\norm{H(\tau)}$. Campbell introduced a discrete sampling scheme for time-independent Hamiltonian simulation \cite{Campbell18} and our protocol can be viewed as its continuous analog, which we call ``continuous qDRIFT''. We show that continuous qDRIFT is universal, in the sense that any Hamiltonian simulable by \cite{Campbell18} can be simulated by continuous qDRIFT with the same complexity. In addition, we shave off a multiplicative factor in the analysis of \cite{Campbell18} by explicitly evaluating the rate of change of the evolution with respect to scaling the Hamiltonian. Continuous qDRIFT and its analysis are detailed in \sec{qdrift}. Our algorithm is also similar in spirit to the approach of Poulin et al.\ \cite{Pou15} based on Hamiltonian averaging and Monte Carlo estimation, although their algorithm does not have $L^1$-norm scaling. We discuss the relationship between these two approaches in \append{poulin}.

We also present a general principle for rescaling the Schr\"{o}dinger equation in \sec{rescaling}. In the rescaled Schr\"{o}dinger equation, the time-dependent Hamiltonian $H(\tau)$ has the same norm for all $\tau\in[0,t]$, so the norm inequality $\int_{0}^{t}\mathrm{d}\tau\norm{H(\tau)}\leq t\max_{\tau\in[0,t]}\norm{H(\tau)}$ holds with equality. 
Using this principle, we show that the simulation algorithm based on the truncated Dyson series \cite{BCCKS14,LW18,Kieferova18} can also be improved to have $L^1$-norm scaling.

To illustrate how our results might be applied, we identify a specific problem in quantum chemistry for which our $L^1$-norm improvement is advantageous: semi-classical scattering of molecules in a chemical reaction. For such a simulation, $\norm{H(\tau)}$ changes dramatically throughout the evolution, so its $L^1$ norm can be significantly smaller than its $L^\infty$ norm. We discuss this application further in \sec{scatter}.

Finally, we conclude in \sec{discussion} with a brief discussion of the results and some open questions.

%%%%%%%%%%%%%%%%%%%%%%%%%%%%%%%%%%%%%%%%%%%%%%%%%%%%%%%%%%%%%%%%%%%%%%%%%%%%%%
\section{Preliminaries}
\label{sec:prelim}
%!TEX root = L1NormScalingSimulation.tex

%%%%%%%%%%%%%%%%%%%%%%%%%%%%%%%%%%%%%%%%%%%%%%%%%%%%%%%%%%%%%%%%%%%%%%%%%%%%%%
\subsection{Time-dependent Hamiltonian evolution}
\label{sec:prelim_evolve}
Let $H(\tau)$ be a time-dependent Hamiltonian defined for $0\leq\tau\leq t$. By default, we assume that $H(\tau)$ is continuously differentiable and $H(\tau)\neq 0$ everywhere, and we defer the discussion of pathological cases to \sec{discussion}. If the Hamiltonian $H(\tau)=H$ is time independent, the evolution is given in closed form by the matrix exponential $e^{-itH}$. However, there exists no such closed-form expression for a general $H(\tau)$ and we instead represent the evolution by $\exp_{\mathcal{T}}\big(-i\int_{0}^{t}\mathrm{d}\tau\, H(\tau)\big)$, where $\exp_{\mathcal{T}}$ denotes the time-ordered matrix exponential. We have
\begin{equation}
	\frac{\mathrm{d}}{\mathrm{d}t}\exp_{\mathcal{T}}\bigg(-i\int_{0}^{t}\mathrm{d}\tau\, H(\tau)\bigg)
	=-iH(t)\exp_{\mathcal{T}}\bigg(-i\int_{0}^{t}\mathrm{d}\tau\, H(\tau)\bigg).
\end{equation}
If $G(\tau)$ is another time-dependent Hamiltonian, the evolutions generated by $H(\tau)$ and $G(\tau)$ have distance bounded by the following lemma.

\begin{lemma}[$L^1$-norm distance bound of time-ordered evolutions {\cite[Appendix B]{Tran18}}]
	\label{lem:generator}
	Let $H(\tau)$ and $G(\tau)$ be time-dependent Hamiltonians defined on the interval $0\leq\tau\leq t$. Then,
	\begin{equation}
	\norm{\exp_{\mathcal{T}}\bigg(-i\int_{0}^{t}\mathrm{d}\tau\, H(\tau)\bigg)
		-\exp_{\mathcal{T}}\bigg(-i\int_{0}^{t}\mathrm{d}\tau\, G(\tau)\bigg)}_\infty
	\leq\int_{0}^{t}\mathrm{d}\tau\norm{H(\tau)-G(\tau)}_\infty.
	\end{equation}
	Here, $\norm{\cdot}_\infty$ denotes the spectral norm.
\end{lemma}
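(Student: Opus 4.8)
The plan is to use a standard interpolation argument (a continuous ``telescoping'' in the spirit of variation of parameters) that writes the difference of the two propagators as an integral over a single parameter and then exploits unitarity to discard the propagators. Introduce the two-time propagator $U_H(t,s) := \exp_{\mathcal{T}}\big(-i\int_s^t \mathrm{d}\tau\, H(\tau)\big)$ generated by $H$, and likewise $U_G(t,s)$; the objects compared in the lemma are $U_H(t,0)$ and $U_G(t,0)$. These satisfy the Schr\"{o}dinger equation $\frac{\partial}{\partial t}U_H(t,s)=-iH(t)U_H(t,s)$ with $U_H(s,s)=I$, and, because $H$ and $G$ are Hermitian, they are unitary, so $\norm{U_H(t,s)}_\infty=\norm{U_G(t,s)}_\infty=1$.

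First I would define the hybrid operator $W(s):=U_H(t,s)\,U_G(s,0)$ for $s\in[0,t]$, which interpolates between the two evolutions, since $W(0)=U_H(t,0)$ and $W(t)=U_G(t,0)$. By the fundamental theorem of calculus the quantity to be bounded is $U_H(t,0)-U_G(t,0)=-\int_0^t \mathrm{d}s\,\frac{\mathrm{d}}{\mathrm{d}s}W(s)$. To differentiate $W$ I need the derivative of each factor. The forward factor is immediate from the Schr\"{o}dinger equation, $\frac{\partial}{\partial s}U_G(s,0)=-iG(s)U_G(s,0)$, while the derivative of $U_H(t,s)$ in its lower limit follows by differentiating the composition law $U_H(t,s)U_H(s,0)=U_H(t,0)$, whose right-hand side does not depend on $s$; this yields $\frac{\partial}{\partial s}U_H(t,s)=+i\,U_H(t,s)H(s)$. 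Combining the two via the product rule gives $\frac{\mathrm{d}}{\mathrm{d}s}W(s)=i\,U_H(t,s)\,[H(s)-G(s)]\,U_G(s,0)$, so that $U_H(t,0)-U_G(t,0)=-i\int_0^t \mathrm{d}s\,U_H(t,s)[H(s)-G(s)]U_G(s,0)$.

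Finally I would take the spectral norm, move it inside the integral by the triangle inequality for integrals, and use submultiplicativity together with the unitarity of $U_H(t,s)$ and $U_G(s,0)$, each of norm $1$, to strip off both propagators; what remains is exactly $\int_0^t \mathrm{d}\tau\,\norm{H(\tau)-G(\tau)}_\infty$, as claimed. I expect the main obstacle to be the backward-limit derivative identity and, more broadly, justifying that the propagators are differentiable in both time arguments and that differentiation under the integral sign is valid; this is precisely where the standing assumption that $H$ and $G$ are continuously differentiable (so that the time-ordered exponentials exist and are smooth) enters.
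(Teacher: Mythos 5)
Your proof is correct, and it matches the source: the paper does not prove this lemma itself but cites \cite[Appendix B]{Tran18}, where the argument is exactly this interpolation (Duhamel-type) identity $U_H(t,0)-U_G(t,0)=-i\int_0^t \mathrm{d}s\, U_H(t,s)\,[H(s)-G(s)]\,U_G(s,0)$ followed by unitary invariance of the spectral norm. Your derivation of the backward-limit derivative from the composition law and the final norm estimate are both sound, so there is nothing to fix.
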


We will abbreviate the evolution operator as $\E(t,s):=\exp_{\mathcal{T}}\big({-}i\int_{s}^{t}\mathrm{d}\tau\, H(\tau)\big)$ when there is no ambiguity. In the special case where $H(\tau)=H$ is time independent, the evolution $e^{-itH}$ only depends on the time duration so we denote $\E(t):=\E(t,0)$. Therefore, we have the differential equation
\begin{equation}
	\frac{\mathrm{d}}{\mathrm{d}t}\E(t,0)=-iH(t)\E(t,0),\qquad
	\E(0,0)=I.
\end{equation}
We may further obtain an integral representation of $\E(t,0)$. To this end, we apply the fundamental theorem of calculus to the Schr\"{o}dinger equation and obtain
\begin{equation}
	\E(t,0)-I=\E(t,0)-\E(0,0)=\int_{0}^{t}\mathrm{d}\tau\, \frac{\mathrm{d}}{\mathrm{d}\tau}\E(\tau,0)
	=-i\int_{0}^{t}\mathrm{d}\tau\, H(\tau)\E(\tau,0).
\end{equation}
Equivalently, $\E(t,0)$ satisfies the integral equation
\begin{equation}
	\E(t,0)=I-i\int_{0}^{t}\mathrm{d}\tau\, H(\tau)\E(\tau,0).
\end{equation}

For any $0\leq s\leq t$, the evolution operator satisfies the multiplicative property
\begin{equation}
	\E(t,0)=\E(t,s)\E(s,0).
\end{equation}
The operator $\E(0,t)$ with $t\geq 0$ is understood as the inverse evolution operator
\begin{equation}
	\E(0,t):=\E^{-1}(t,0)=\E^{\dagger}(t,0).
\end{equation}
For a thorough mathematical treatment of time-dependent Hamiltonian evolution, we refer the reader to \cite{bk:dollard_friedman}. Finally, the quantum channel corresponding to the unitary evolution $\E(t,0)$ is denoted as $\CE(t,0)$ and is defined by
\begin{equation}
	\CE(t,0)(\rho):=\E(t,0)\rho\E^\dagger(t,0)=\E(t,0)\rho\E(0,t).
\end{equation}
For time-independent Hamiltonians, we denote $\CE(t):=\CE(t,0)$.

%%%%%%%%%%%%%%%%%%%%%%%%%%%%%%%%%%%%%%%%%%%%%%%%%%%%%%%%%%%%%%%%%%%%%%%%%%%%%%
\subsection{Notation for norms}
\label{sec:prelim_norm}
We introduce norm notation for vectors, matrices, operator-valued functions, and linear maps on the space of matrices.

Let $\alpha=\begin{bmatrix}\alpha_1 & \alpha_2 & \cdots & \alpha_L\end{bmatrix}\in\C^L$ be an $L$-dimensional vector. We use $\norm{\alpha}_{p}$ to represent the vector $\ell_p$ norm of $\alpha$. Thus,
\begin{equation}
	\norm{\alpha}_1:=\sum_{j=1}^{L}|\alpha_j|,\qquad
	\norm{\alpha}_2:=\sqrt{\sum_{j=1}^{L}|\alpha_j|^2},\qquad
	\norm{\alpha}_\infty:=\max_{j\in\{1,2,\ldots,L\}}|\alpha_j|.
\end{equation}
For a matrix $A$, we define $\norm{A}_p$ to be the Schatten $p$-norm of $A$ \cite{bk:wat,bk:wilde}. We have
\begin{equation}
	\norm{A}_1:=\mathrm{Tr}\big(\sqrt{A^\dagger A}\big),\qquad
	\norm{A}_2:=\sqrt{\mathrm{Tr}\big(A^\dagger A\big)},\qquad
	\norm{A}_\infty:=\max_{|\psi\rangle}\norm{A|\psi\rangle}_2.
\end{equation}
Finally, if $f\colon[0,t]\rightarrow\C$ is a continuous function, we use $\norm{f}_p$ to mean the $L^p$ norm of the function $f$. Thus,
\begin{equation}
	\norm{f}_1:=\int_{0}^{t}\mathrm{d}\tau\, |f(\tau)|,\qquad
	\norm{f}_2:=\sqrt{\int_{0}^{t}\mathrm{d}\tau\, |f(\tau)|^2},\qquad
	\norm{f}_\infty:=\max_{\tau\in[0,t]}|f(\tau)|.
\end{equation}

We combine these norms to obtain norms for vector-valued and operator-valued functions. Let $\alpha\colon [0,t]\rightarrow\C^L$ be a continuous vector-valued function, with the $j$th coordinate at time $\tau$ denoted $\alpha_j(\tau)$. We use $\norm{\alpha}_{p,q}$ to mean that we take the $\ell_p$ norm $\norm{\alpha(\tau)}_p$ for every $\tau$ and compute the $L^q$ norm of the resulting scalar function. For example,
\begin{equation}
	\norm{\alpha}_{1,1}:=\int_{0}^{t}\mathrm{d}\tau\sum_{j=1}^{L}|\alpha_j(\tau)|,\qquad
	\norm{\alpha}_{1,\infty}:=\max_{\tau\in[0,t]}\sum_{j=1}^{L}|\alpha_j(\tau)|.
\end{equation}
Note that $\norm{\alpha(\tau)}_p$ is continuous as a function of $\tau$, so $\norm{\alpha}_{p,q}$ is well defined and is indeed a norm for vector-valued functions. Similarly, we also define $\norm{A}_{p,q}$ for a continuous operator-valued function by taking the Schatten $p$-norm $\norm{A(\tau)}_p$ for every $\tau$ and computing the $L^q$ norm of the resulting scalar function. In rare cases, we will also encounter time-dependent linear combinations of operators of the form $A(\tau)=\sum_{l=1}^{L}A_l(\tau)$, and we write $\norm{A}_{p,q,r}$ to mean that we take the Schatten $p$-norm $\norm{A_l(\tau)}_p$ of each summand, and apply the $\ell_q$ norm and $L^r$ norm to the resulting vector-valued functions. For example,
\begin{equation}
	\norm{A}_{2,1,\infty}:=\max_{\tau\in[0,t]}\sum_{l=1}^{L}\norm{A_l(\tau)}_2.
\end{equation}

We also define $\norm{A}_{\max}$ as the largest matrix element of $A$ in absolute value,
\begin{equation}
	\norm{A}_{\max}:=\max_{j,k}|A_{j,k}|.
\end{equation}
The norm $\norm{A}_{\max}$ is a vector norm of $A$ but does not satisfy the submultiplicative property of a matrix norm. It relates to the spectral norm by the inequality \cite[Lemma 1]{CK10}
\begin{equation}
	\norm{A}_{\max}\leq\norm{A}_\infty.
\end{equation}
If $A$ is a continuous operator-valued function, we interpret $\norm{A}_{\max,q}$ in a similar way as above. Therefore,
\begin{equation}
	\norm{A}_{\max,1}:=\int_{0}^{t}\mathrm{d}\tau\norm{A(\tau)}_{\max},\qquad
	\norm{A}_{\max,\infty}:=\max_{\tau\in[0,t]}\norm{A(\tau)}_{\max}.
\end{equation}

Finally, we define a norm for linear maps on the space of matrices. Let $\mathcal{E}\colon A\mapsto \mathcal{E}(A)$ be a linear map on the space of matrices on $\mathcal{H}$. The diamond norm of $\mathcal{E}$ is
\begin{equation}
\norm{\mathcal{E}}_\diamond:=\max\{\norm{(\mathcal{E}\otimes\mathds{1}_\mathcal{H})(B)}_1:\norm{B}_1\leq 1\},
\end{equation}
where the maximization is taken over all matrices $B$ on $\mathcal{H}\otimes\mathcal{H}$ satisfying $\norm{B}_1\leq 1$. Below is a useful bound on the diamond-norm distance between two unitary channels.

\begin{lemma}[Diamond-norm distance between unitary channels {\cite[Lemma 7]{BCK15}}]
	\label{lem:diamond_bound}
	Let $V$ and $U$ be unitary matrices, with associated quantum channels $\mathcal{V}:\rho\mapsto V\rho V^\dagger$ and $\mathcal{U}:\rho\mapsto U\rho U^\dagger$. Then,
	\begin{equation}
		\norm{\mathcal{U}-\mathcal{V}}_\diamond \leq 2\norm{U-V}_\infty.
	\end{equation}
\end{lemma}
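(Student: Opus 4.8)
The plan is to unfold the definition of the diamond norm and reduce the claim to a statement about trace norms of conjugated operators. By definition, $\norm{\mathcal{U}-\mathcal{V}}_\diamond$ is the maximum of $\norm{((\mathcal{U}-\mathcal{V})\otimes\mathds{1}_\mathcal{H})(B)}_1$ over all $B$ with $\norm{B}_1\leq 1$. Writing $\tilde U:=U\otimes I$ and $\tilde V:=V\otimes I$, the operator inside the trace norm becomes $\tilde U B\tilde U^\dagger-\tilde V B\tilde V^\dagger$, so the task reduces to bounding this difference of two unitary conjugations acting on an arbitrary input of bounded trace norm.

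First I would introduce an intermediate term so as to split the difference into two pieces in which only one factor changes at a time:
\begin{equation}
\tilde U B\tilde U^\dagger-\tilde V B\tilde V^\dagger
=(\tilde U-\tilde V)B\tilde U^\dagger+\tilde V B(\tilde U^\dagger-\tilde V^\dagger).
\end{equation}
Then I would apply the submultiplicativity of the trace norm under one-sided multiplication by bounded operators, namely $\norm{XY}_1\leq\norm{X}_\infty\norm{Y}_1$ and $\norm{YZ}_1\leq\norm{Y}_1\norm{Z}_\infty$, to each term. Since $\tilde U$, $\tilde V$, and their adjoints are unitary, their spectral norms equal $1$, and (using $\norm{\tilde U^\dagger-\tilde V^\dagger}_\infty=\norm{\tilde U-\tilde V}_\infty$) each term contributes a factor $\norm{\tilde U-\tilde V}_\infty\norm{B}_1$. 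The triangle inequality then yields $\norm{\tilde U B\tilde U^\dagger-\tilde V B\tilde V^\dagger}_1\leq 2\norm{\tilde U-\tilde V}_\infty\norm{B}_1$.

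To finish, I would use the fact that tensoring with the identity leaves the spectral norm unchanged, so $\norm{\tilde U-\tilde V}_\infty=\norm{(U-V)\otimes I}_\infty=\norm{U-V}_\infty$. Taking the maximum over $B$ with $\norm{B}_1\leq 1$ then gives $\norm{\mathcal{U}-\mathcal{V}}_\diamond\leq 2\norm{U-V}_\infty$, as claimed. The only real subtlety is making sure the ancilla register is handled correctly: the bound must not degrade when the channels act on half of an entangled input, and this is exactly what the invariance of the spectral norm under $\otimes\, I$ guarantees. I expect this step---verifying that the spectral-norm gap is unaffected by the purifying system---to be the conceptual crux, while the algebraic splitting and the trace-norm inequalities are routine.
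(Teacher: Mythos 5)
Your argument is correct: the telescoping decomposition $\tilde U B\tilde U^\dagger-\tilde V B\tilde V^\dagger=(\tilde U-\tilde V)B\tilde U^\dagger+\tilde V B(\tilde U^\dagger-\tilde V^\dagger)$, H\"older's inequality $\norm{XYZ}_1\leq\norm{X}_\infty\norm{Y}_1\norm{Z}_\infty$, and the invariance of the spectral norm under tensoring with the identity together give the stated bound for arbitrary $B$ with $\norm{B}_1\leq 1$. The paper does not prove this lemma itself but imports it from the cited reference, whose proof is essentially the same telescoping argument you give, so there is nothing to add.
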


The sampling-based algorithm (\sec{qdrift}) produces a channel close to $\CE(t,0)(\rho)=\exp_{\mathcal{T}}\big(-i\int_{0}^{t}\mathrm{d}\tau H(\tau)\big)\rho\exp_{\mathcal{T}}^\dagger\big(-i\int_{0}^{t}\mathrm{d}\tau H(\tau)\big)$, and its error is naturally quantified by the diamond-norm distance. Other simulation algorithms such as the Dyson-series approach (\sec{rescaling}) produce operators that are close to the unitary $\exp_{\mathcal{T}}\big(-i\int_{0}^{t}\mathrm{d}\tau H(\tau)\big)$, and we quantify their error in terms of the spectral norm. For a fair comparison one may instead describe all simulation algorithms using quantum channels and use the diamond-norm distance as the unified error metric. By \lem{diamond_bound}, we lose at most a factor of $2$ in this conversion.

%%%%%%%%%%%%%%%%%%%%%%%%%%%%%%%%%%%%%%%%%%%%%%%%%%%%%%%%%%%%%%%%%%%%%%%%%%%%%%
\subsection{Hamiltonian input models}
\label{sec:prelim_model}
Quantum simulation algorithms may have different performance depending on the choice of the input model of Hamiltonians. In this section, we describe two input models that are extensively used in previous works: the sparse matrix (SM) model and the linear-combination-of-unitaries (LCU) model. We also discuss other input models that will be used in later sections.

Let $H(\tau)$ be a time-dependent Hamiltonian defined for $0\leq\tau\leq t$. In the SM model, we assume that $H(\tau)$ is $d$-sparse in the sense that the number of nonzero matrix elements within each row and column throughout the entire interval $[0,t]$ is at most $d$. We assume that the locations of the nonzero matrix elements are time independent. Access to the Hamiltonian is given through the oracles
\begin{equation}
\begin{aligned}
\CO_{\text{loc}}|j,s\rangle&=|j,\mathrm{col}(j,s)\rangle,\\
\CO_{\text{val}}|\tau,j,k,z\rangle&=|\tau,j,k,z\oplus H_{jk}(\tau)\rangle.
\end{aligned}
\end{equation}
Here, $\mathrm{col}(j,s)$ returns the column index of the $s$th element in the $j$th row that may be nonzero over the entire time interval $[0,t]$. We quantify the complexity of a quantum simulation algorithm by the number of queries it makes to $\CO_{\text{loc}}$ and $\CO_{\text{val}}$, together with the number of additional elementary gates it requires. Such a model includes many realistic physical systems and is well-motivated from a theoretical perspective \cite{HHL09}.

As the following lemma shows, a $d$-sparse Hamiltonian can be efficiently decomposed as a sum of $1$-sparse terms.
\begin{lemma}[Decomposition of sparse Hamiltonians {\cite[Lemma 4.3 and 4.4]{FractionalQuery14}}]
	\label{lem:sparse_decomp}
	Let $H$ be a time-independent $d$-sparse Hamiltonian accessed through the oracles $\CO_{\text{loc}}$ and $\CO_{\text{val}}$. Then
	\begin{enumerate}
		\item there exists a decomposition $H=\sum_{j=1}^{d^2}H_j$, where each $H_j$ is $1$-sparse with $\norm{H_j}_{\max}\leq\norm{H}_{\max}$, and a query to any $H_j$ can be simulated with $O(1)$ queries to $H$; and
		\item for any $\gamma>0$, there exists an approximate decomposition\footnote{Reference \cite{FractionalQuery14} uses \cite[Lemma 4.3]{FractionalQuery14} and the triangle inequality to show that $\bigl\|{H-\gamma\sum_{j=1}^{\eta}G_j}\bigr\|_{\max}\leq\sqrt{2}\gamma d^2$. However, this bound can be tightened to $\sqrt{2}\gamma$, since the max-norm distance depends on the largest error from rounding off the $d^2$ $1$-sparse matrices. } $\bigl\|{H-\gamma\sum_{j=1}^{\eta}G_j}\bigr\|_{\max}\leq\sqrt{2}\gamma$, where $\eta=O\big(d^2\norm{H}_{\max}/\gamma\big)$, each $G_j$ is $1$-sparse with eigenvalues $\pm 1$, and a query to any $G_j$ can be simulated with $O(1)$ queries to $H$.
	\end{enumerate}
\end{lemma}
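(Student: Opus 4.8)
The plan is to establish the two parts separately, in both cases via a graph-theoretic view of the sparsity pattern of $H$. Associate to $H$ the undirected weighted graph on the index set $\{1,\dots,N\}$ (where $N$ is the dimension) having an edge $\{j,k\}$ whenever $H_{jk}\neq 0$ and a self-loop at $j$ whenever $H_{jj}\neq 0$; Hermiticity of $H$ makes this well defined, and $d$-sparsity means every vertex has degree at most $d$. A $1$-sparse Hermitian matrix is exactly one supported on a subgraph in which every vertex has degree at most one, i.e.\ a partial matching together with some self-loops. Hence Part~1 reduces to splitting the edge set into $O(d^2)$ matchings that can be identified using only local information, and Part~2 reduces to approximating the entries of each resulting $1$-sparse matrix by a sum of $1$-sparse matrices with $\pm 1$ eigenvalues.

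For Part~1, I would color the edges using only the incidence labels supplied by $\CO_{\text{loc}}$. For an edge $\{j,k\}$ let $a\in\{1,\dots,d\}$ be the index with $\mathrm{col}(j,a)=k$, and $b$ the index with $\mathrm{col}(k,b)=j$; the diagonal is handled as one extra class. I would assign to $\{j,k\}$ a color that is a symmetric function of the pair $(a,b)$, giving at most $O(d^2)$ colors. The assignment must be arranged so that each color class is a genuine matching (equivalently, each class defines an involution $j\mapsto k$), and so that, given $j$ and a color, the partner $k$ and value $H_{jk}$ are recoverable with $O(1)$ queries: one reads the two candidate positions off the color, applies $\CO_{\text{loc}}$ to each, and uses $\CO_{\text{val}}$ to check which candidate realizes that color. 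The bound $\norm{H_j}_{\max}\leq\norm{H}_{\max}$ is then immediate, since the entries of each $H_j$ form a subset of those of $H$.

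For Part~2, fix a $1$-sparse $H_j$ from Part~1; up to a permutation of indices it is block diagonal with $1\times 1$ blocks (diagonal entries) and $2\times 2$ blocks $\left(\begin{smallmatrix}0 & \bar z\\ z & 0\end{smallmatrix}\right)$ with $|z|\leq\norm{H}_{\max}$. Writing $z=x+iy$ and truncating $x/\gamma$ and $y/\gamma$ to integers $n_x,n_y$, each entry is approximated by $\gamma(n_x+in_y)$ with error $\sqrt{(x-\gamma n_x)^2+(y-\gamma n_y)^2}\leq\sqrt{2}\gamma$. I would then split $\gamma(n_x+in_y)$ into $|n_x|+|n_y|\leq 2\norm{H}_{\max}/\gamma$ unit contributions $\gamma\left(\begin{smallmatrix}0 & \bar u\\ u & 0\end{smallmatrix}\right)$ with $u\in\{\pm1,\pm i\}$, so the associated $1$-sparse matrix (after factoring out $\gamma$) has eigenvalues $\pm 1$. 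Collecting one unit per block into each $G_j$ keeps every $G_j$ globally $1$-sparse, and the number required is $\eta=O(d^2\cdot\norm{H}_{\max}/\gamma)$. Crucially, $H-\gamma\sum_j G_j$ is again $1$-sparse on each block, so its max-norm is the \emph{largest} per-entry rounding error rather than a sum of such errors; this is precisely the observation in the footnote that sharpens the bound from $\sqrt{2}\gamma d^2$ to $\sqrt{2}\gamma$. Each $G_j$ query is answered by one $\CO_{\text{val}}$ call plus the arithmetic of locating the relevant unit, costing $O(1)$ queries.

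The main obstacle is the coloring in Part~1: one must exhibit a color assignment that is simultaneously symmetric in $j,k$ (so an edge's color is independent of the endpoint from which it is viewed), a \emph{proper} edge coloring (so each class is a matching and hence a legitimate $1$-sparse Hermitian matrix), and locally invertible with only $O(1)$ oracle calls. The naive choices---the unordered pair $\{a,b\}$, or an ordered pair tied to $j<k$---fail to be proper, since two distinct edges at a vertex can have their incidence labels swapped; the careful symmetrization that rules this out while preserving the involution property and the $O(d^2)$ color budget is the crux of the argument.
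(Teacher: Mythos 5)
The paper does not actually prove this lemma: it is imported verbatim from \cite[Lemmas 4.3 and 4.4]{FractionalQuery14}, and the only original content is the footnote sharpening the error bound in Part~2 from $\sqrt{2}\gamma d^2$ to $\sqrt{2}\gamma$. You reconstruct the standard argument behind the citation, and you do get the footnote's point exactly right: because the $d^2$ matrices $H_j$ have pairwise disjoint supports, the per-entry rounding errors do not accumulate under the max-norm, so the bound is the largest single rounding error rather than a sum over the $d^2$ pieces. Your Part~2 is also essentially the standard rounding argument, up to one glossed-over detail: a Hermitian matrix with eigenvalues exactly $\pm 1$ is unitary and hence has a unit-modulus entry in \emph{every} row, so the $G_j$ cannot simply vanish on blocks that need fewer than the maximal number of units; one must pad with cancelling pairs (or place $\pm 1$ on unused diagonal positions), which is how \cite{FractionalQuery14} arranges it.

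The genuine gap is the one you flag yourself: Part~1 is not proved, only reduced to the existence of a proper, locally computable edge coloring with $O(d^2)$ colors, and this existence is the entire content of \cite[Lemma 4.3]{FractionalQuery14}. The difficulty is not a routine symmetrization. With the natural assignment --- edge $\{j,k\}$ with $j<k$ gets color $(a,b)$ where $\mathrm{col}(j,a)=k$ and $\mathrm{col}(k,b)=j$ --- a vertex $w$ can simultaneously be the larger endpoint of one $(a,b)$-edge and the smaller endpoint of another, so each color class is only guaranteed to have maximum degree $2$; your recovery procedure (``check which candidate realizes that color'') then returns two valid partners and the corresponding matrix is $2$-sparse, not $1$-sparse. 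Splitting a degree-$2$ graph into matchings using only local information is precisely the obstruction that forced Aharonov and Ta-Shma into an iterated ($\log^*$-round) coloring procedure, so the fix cannot be waved through; it requires the specific refinement of the color (and of the local decoding rule) given in \cite{FractionalQuery14}. As written, your proposal establishes Part~2 modulo Part~1, but Part~1 --- the crux, as you say --- is asserted rather than proved.
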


For the LCU model, we suppose that the Hamiltonian $H(\tau)$ admits a decomposition
\begin{equation}
	H(\tau)=\sum_{l=1}^{L}\alpha_l(\tau)H_l,
\end{equation}
where the coefficients $\alpha_l(\tau)\geq 0$ are continuously differentiable and nonzero everywhere, and the matrices $H_l$ are both unitary and Hermitian. We assume that the coefficients $\alpha_l(\tau)$ can be efficiently computed by a classical oracle $\CO_{\text{coeff}}$, and we ignore the classical cost of implementing such an oracle. We further assume that each $|0\rangle\langle 0|\otimes I+|1\rangle\langle 1|\otimes H_l$ can be implemented with gate complexity $g_c$, and each $|0\rangle\langle 0|\otimes I+|1\rangle\langle1|\otimes e^{-i\tau H_l}$ for an arbitrarily large $\tau$ can be performed with $g_e$ gates. Such a setting is common in the simulation of condensed matter physics and quantum chemistry. We quantify the complexity of a simulation algorithm by the number of elementary gates it uses.

Quantum simulation algorithms can sometimes work in other input models. For example, the qDRIFT protocol introduced in \sec{qdrift} requires only that the Hamiltonians have the form
\begin{equation}
	H(\tau)=\sum_{l=1}^{L}H_l(\tau),
\end{equation}
where the Hermitian-valued functions $H_l(\tau)$ are continuous, nonzero everywhere, and can be efficiently exponentiated on a quantum computer. We call this the linear combination (LC) model. On the other hand, the Dyson-series algorithm can be described in terms of the $\textsc{Select}$ operation
\begin{equation}
	\textsc{Select}(H):=\sum_{l=1}^{L}|l\rangle\langle l|\otimes H_l,
\end{equation}
irrespective of how this operation is implemented. We consider the SM and LCU models for all the time-dependent simulation algorithms so that we can give a fair comparison of their complexity. 

%%%%%%%%%%%%%%%%%%%%%%%%%%%%%%%%%%%%%%%%%%%%%%%%%%%%%%%%%%%%%%%%%%%%%%%%%%%%%%
\subsection{Simulation algorithms with \texorpdfstring{$L^1$}{L1}-norm scaling}
\label{sec:prelim_l1}
We now explain the meaning of $L^1$-norm scaling in the SM and the LCU models.
Let $H(\tau)$ be a time-dependent Hamiltonian defined for $0\leq\tau\leq t$. We say that an algorithm in the SM model simulates $H(\tau)$ with $L^1$-norm scaling if, given any continuously differentiable upper bound $\Lambda_{\max}(\tau)\geq\norm{H(\tau)}_{\max}$, the algorithm has query complexity and gate complexity that scale with $\norm{\Lambda_{\max}}_1=\int_{0}^{t}\mathrm{d}\tau\,\Lambda_{\max}(\tau)$ up to logarithmic factors. The norm bound $\Lambda_{\max}(\tau)$, together with other auxiliary information, must be accessed by the quantum simulation algorithm; we assume such quantities can be computed efficiently.

In the LCU model, we are given a time-dependent Hamiltonian with the decomposition $H(\tau)=\sum_{l=1}^{L}\alpha_l(\tau)H_l$. We say that an algorithm has $L^1$-norm scaling if, for any continuously differentiable vector-valued function $\Lambda(t)$ with $\Lambda_l(\tau)\geq \alpha_l(\tau)$, the algorithm has query and gate complexity that scale with $\norm{\Lambda}_{\infty,1}=\int_{0}^{t}\mathrm{d}\tau\max_l\Lambda_l(\tau)$ up to logarithmic factors.

For better readability, we express the complexity of simulation algorithms in terms of the norm of the original Hamiltonian, such as $\norm{H}_{\max,1}$ and $\norm{\alpha}_{\infty,1}$, instead of the upper bounds $\norm{\Lambda_{\max}}_1$ and $\norm{\Lambda}_{\infty,1}$. 
We use standard asymptotic notation, with $O$, $\Omega$, and $\Theta$ representing asymptotic upper, lower, and tight bounds, respectively.
We also suppress logarithmic factors using the $\widetilde O$ notation when the complexity expression becomes too complicated. \tab{complexity_compare} compares the results of this paper with 
previous results on simulating time-dependent Hamiltonians.

\begin{table}
	\centering
	\renewcommand{\arraystretch}{1.4}
	\begin{tabular}{c||c|c}
		Algorithms & SM & LCU\\
		\hline\hline
		Monte Carlo estimation (first step) \cite{PQSV11} & $\widetilde{O}\big((d^2\norm{H}_{\max,\infty}t)^2n/\epsilon\big)$ & $O\big((\norm{\alpha}_{1,\infty}t)^2g_e/\epsilon\big)$\\
		Fractional-query \cite{FractionalQuery14}  & $\widetilde{O}\big(d^2\norm{H}_{\max,\infty}tn\big)$
		 &  N/A \\
		Dyson series \cite{BCCKS14,LW18,Kieferova18} & $\widetilde{O}\big(d\norm{H}_{\max,\infty}tn\big)$ & $\widetilde{O}\big(\norm{\alpha}_{\infty,\infty}tL^2g_c\big)$\\
		\hline
		Continuous qDRIFT (\sec{qdrift_complexity}) & $\widetilde{O}\big((d^2\norm{H}_{\max,1})^2n/\epsilon\big)$ & $O\big(\norm{\alpha}_{1,1}^2g_e/\epsilon\big)$\\
		Rescaled Dyson series (\sec{rescaling_dscomplexity}) & $\widetilde{O}\big(d\norm{H}_{\max,1}n\big)$ & $\widetilde{O}\big(\norm{\alpha}_{\infty,1}L^2g_c\big)$
	\end{tabular}
	\caption{Complexity comparison of previous algorithms (top three) 
	and the algorithms introduced in this paper (bottom two) for simulating time-dependent Hamiltonians. Logarithmic factors are suppressed by $\widetilde{O}$ notation and the (non-query) gate complexities are compared. The product formula algorithm of \cite{WBHS10} is omitted as its gate complexity scales polynomially with high-order derivatives and is not directly comparable to other algorithms in the table. The complexity of the full Monte Carlo estimation algorithm \cite{PQSV11} is not analyzed explicitly; only its first step is compared.
	The fractional-query algorithm \cite{FractionalQuery14} does not have an explicit implementation for Hamiltonians in the LCU model, 
	and its implementation in the SM model is streamlined by the Dyson-series approach \cite{BCCKS14,LW18,Kieferova18}.}
	\label{tab:complexity_compare}
\end{table}

Our goal is to develop simulation algorithms that scale with the $L^1$-norm with respect to the time variable $\tau$, for both query complexity and gate complexity. We start by reexamining the fractional-query approach. It was mentioned in \cite{FractionalQuery14} that this approach can simulate time-dependent Hamiltonians with $L^\infty$-norm scaling, but we find that its query complexity scales with the $L^1$ norm. We give this improved analysis in the next section.

%%%%%%%%%%%%%%%%%%%%%%%%%%%%%%%%%%%%%%%%%%%%%%%%%%%%%%%%%%%%%%%%%%%%%%%%%%%%%%
\subsection{Query complexity with \texorpdfstring{$L^1$}{L1}-norm scaling}
\label{sec:prelim_fracquery}
We begin by reviewing the result of \cite{FractionalQuery14} for simulating time-independent Hamiltonians. We assume that the Hamiltonian is given by a linear combination of unitaries $G=\sum_{l=1}^{L}\beta_l G_l$ with nonnegative coefficients $\beta_l$. Here, $G_l$ are both unitary and Hermitian, so they are reflections and their eigenvalues are $\pm 1$.

We say that a quantum operation is a fractional-query algorithm if it is of the form
\begin{equation}
U_m Q^{\tau_m} U_{m-1}\cdots U_1 Q^{\tau_1}U_0,
\end{equation}
where $Q$ is unitary with eigenvalues $\pm 1$, $U_j$ are unitary operations, and $\tau_j\geq 0$. Here, we regard $Q$ as the oracle and $U_j$ as non-query operations, so this algorithm has fractional-query complexity $\sum_{j=1}^{m}\tau_j$. A quantum algorithm that makes (discrete) queries to $Q$ is a fractional-query algorithm with $\tau_j=1$. Conversely, any fractional-query algorithm can be efficiently simulated in the discrete query model. In particular, an algorithm with fractional-query complexity $T$ can be simulated with error at most $\epsilon$ using $O\big(T\frac{\log(T/\epsilon)}{\log\log(T/\epsilon)}\big)$ discrete queries \cite[Lemma 3.8]{FractionalQuery14}.

To apply the fractional-query approach, we approximate the evolution under $G$ using the first-order product formula
\begin{equation}
\label{eq:fracquery_timeindep_pf1}
\norm{e^{-itG}-\bigg(e^{-i\frac{t}{r} \beta_1G_1}\cdots e^{-i\frac{t}{r} \beta_LG_{L}}\bigg)^r}_\infty
=O\bigg(\frac{(\norm{\beta}_1 t)^2}{r}\bigg).
\end{equation}
Observe that $e^{-i\pi G_l}$ are unitary operations with eigenvalues $\pm 1$, so $\big(e^{-i\frac{t}{r} \beta_1G_1}\cdots e^{-i\frac{t}{r} \beta_LG_{L}}\big)^r$ can be viewed as a fractional-query algorithm with query complexity $O(\norm{\beta}_1 t)$, provided that we can make fractional queries to multiple oracles $e^{-i\pi G_1},\ldots,e^{-i\pi G_{L}}$. This can be realized by a standard fractional-query algorithm accessing the single oracle
\begin{equation}
\textsc{Select}(\textsc{Exp-}G)=\sum_{l=1}^{L}|l\rangle\langle l|\otimes e^{-i\pi G_l}
\end{equation}
with the same query complexity \cite[Theorem 4.1]{FractionalQuery14}.

To simulate with accuracy $\epsilon$, we set $r=O\big((\norm{\beta}_1 t)^2/\epsilon\big)$ to ensure that
\begin{equation}
\norm{e^{-itG}-\bigg(e^{-i\frac{t}{r} \beta_1G_1}\cdots e^{-i\frac{t}{r} \beta_LG_{L}}\bigg)^r}_\infty
=O(\epsilon).
\end{equation}
We now convert this multi-oracle algorithm to a single-oracle algorithm with the same fractional-query complexity $T=O(\norm{\beta}_1t)$ and, with precision $O(\epsilon)$, implement it in the discrete query model. Altogether, this approach makes
\begin{equation}
O\bigg(T\frac{\log(T/\epsilon)}{\log\log(T/\epsilon)}\bigg)=O\bigg(\norm{\beta}_1t\frac{\log(\norm{\beta}_1t/\epsilon)}{\log\log(\norm{\beta}_1t/\epsilon)}\bigg)
\end{equation}
queries to the operation $\textsc{Select}(\textsc{Exp-}G)=\sum_{l=1}^{L}|l\rangle\langle l|\otimes e^{-i\pi G_l}$.

As mentioned in \cite{FractionalQuery14}, the fractional-query approach can also be used to simulate time-dependent Hamiltonians by replacing \eq{fracquery_timeindep_pf1} with a product-formula decomposition of the time-ordered evolution. However, \cite{FractionalQuery14} only gives a brief discussion of this issue and the claimed complexity has only $L^\infty$ scaling. We now give an improved analysis of this algorithm for the SM model, showing that its query complexity achieves $L^1$-norm scaling.

\begin{theorem}[Fractional-query algorithm with $L^1$-norm scaling (SM)]
	\label{thm:fracquery_sm}
	A $d$-sparse time-dependent Hamiltonian $H(\tau)$ acting on $n$ qubits can be simulated for time $\tau\in[0,t]$ with accuracy $\epsilon$ using
	\begin{equation}
	O\bigg(d^2\norm{H}_{\max,1}\frac{\log(d^2\norm{H}_{\max,\infty}t/\epsilon)}{\log\log(d^2\norm{H}_{\max,\infty}t/\epsilon)}\bigg)
	\end{equation}
	queries to the oracles $\CO_{\text{loc}}$, $\CO_{\text{val}}$.
\end{theorem}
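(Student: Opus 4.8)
The plan is to reduce the time-dependent simulation to a sequence of time-independent fractional-query simulations, one per short time segment, arranging the per-segment decompositions so that the query weight of each segment tracks the local value $\norm{H(\tau)}_{\max}$ rather than its maximum. First I would partition $[0,t]$ into $r$ equal segments $\tau_0=0<\tau_1<\cdots<\tau_r=t$ and use the multiplicative property $\E(t,0)=\E(\tau_r,\tau_{r-1})\cdots\E(\tau_1,\tau_0)$. On the $k$th segment I replace the time-ordered evolution $\E(\tau_k,\tau_{k-1})$ by the time-independent evolution $e^{-i(\tau_k-\tau_{k-1})H(\tau_{k-1})}$; by \lem{generator} the total error so incurred is at most $\sum_k\int_{\tau_{k-1}}^{\tau_k}\mathrm{d}\tau\,\norm{H(\tau)-H(\tau_{k-1})}_\infty$, which continuous differentiability of $H$ makes $O(\norm{H'}_\infty t^2/r)$, hence negligible for large $r$. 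Crucially, $r$ controls only the circuit depth and the discretization error, not the eventual query count.

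Next, on each segment I apply part~2 of \lem{sparse_decomp} to $H(\tau_{k-1})$ with a rounding parameter $\gamma$, obtaining $H(\tau_{k-1})\approx\gamma\sum_{j=1}^{\eta_k}G_j^{(k)}$ with reflections $G_j^{(k)}$ and $\eta_k=O(d^2\norm{H(\tau_{k-1})}_{\max}/\gamma)$, each simulable with $O(1)$ queries to $\CO_{\text{loc}},\CO_{\text{val}}$ (evaluated at time $\tau_{k-1}$). Running the time-independent analysis of \sec{prelim_fracquery} on this decomposition, the coefficient vector has $\norm{\beta^{(k)}}_1=\gamma\eta_k=O(d^2\norm{H(\tau_{k-1})}_{\max})$, so the segment contributes fractional-query complexity $O(\norm{\beta^{(k)}}_1(\tau_k-\tau_{k-1}))=O(d^2\norm{H(\tau_{k-1})}_{\max}(\tau_k-\tau_{k-1}))$, \emph{independently of $\gamma$}. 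The parameter $\gamma$ serves only to bound the decomposition error: since the residual is $d$-sparse with entries at most $\sqrt2\gamma$, its spectral norm is $O(d\gamma)$, so by \lem{generator} the total decomposition error is $O(d\gamma t)$, made $\le\epsilon$ by taking $\gamma=O(\epsilon/(dt))$; the within-segment first-order product-formula error from \eq{fracquery_timeindep_pf1} is likewise absorbed into the choice of $r$. Summing the per-segment weights gives a Riemann sum $O\big(d^2\sum_k\norm{H(\tau_{k-1})}_{\max}(\tau_k-\tau_{k-1})\big)$ converging to $T:=O\big(d^2\int_0^t\mathrm{d}\tau\,\norm{H(\tau)}_{\max}\big)=O(d^2\norm{H}_{\max,1})$, which is exactly where the $L^1$ scaling appears.

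Finally, I would assemble the whole construction as a single fractional-query algorithm: the many factors $e^{-i\theta_j G_j^{(k)}}$ are realized by one $\textsc{Select}$ oracle indexed by the reflection label, with the segment time $\tau_{k-1}$ supplied classically to $\CO_{\text{val}}$, after which \cite[Lemma 3.8]{FractionalQuery14} converts the total fractional-query complexity $T$ into $O\big(T\tfrac{\log(T/\epsilon)}{\log\log(T/\epsilon)}\big)$ discrete queries, each costing $O(1)$ calls to $\CO_{\text{loc}},\CO_{\text{val}}$. Since $T\le O(d^2\norm{H}_{\max,\infty}t)$ and $x\mapsto\log(x/\epsilon)/\log\log(x/\epsilon)$ is monotone, the logarithmic factor can be written in terms of $d^2\norm{H}_{\max,\infty}t/\epsilon$, giving the stated bound. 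I expect the main obstacle to be reconciling the desire for a uniform single-oracle circuit (which naively forces every segment to use the worst-case number of reflections $\eta_{\max}=O(d^2\norm{H}_{\max,\infty}/\gamma)$, and hence $L^\infty$ scaling) with the $L^1$ target. The resolution is to pad each segment up to $\eta_{\max}$ terms with identity operations carrying zero fractional-query weight, so that the circuit structure is uniform while $T=\sum_j\tau_j$ counts only the active reflections and therefore scales with the integrated norm rather than its maximum.
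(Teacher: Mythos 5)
Your proposal is correct and follows essentially the same route as the paper's proof: equal-length segmentation with the time-ordered evolution replaced by $e^{-i(t/r)H(kt/r)}$ and error controlled via \lem{generator}, the reflection decomposition of \lem{sparse_decomp} with $\gamma=O(\epsilon/(td))$, the observation that each segment's fractional-query weight $\eta_k\gamma\,t/r=O(d^2\norm{H(kt/r)}_{\max}t/r)$ is independent of $\gamma$, a Riemann-sum argument yielding $T=O(d^2\norm{H}_{\max,1})$, and conversion to discrete queries via a single time-indexed $\textsc{Select}$ oracle. Your closing remark about uniformizing the oracle (the paper encodes the segment index $k$ in an ancilla rather than padding) touches exactly the issue the paper flags as the obstruction to extending this to gate complexity, but it does not affect the query-complexity claim.
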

\begin{proof}
	For readability, we assume that $\norm{H}_{\max,1}$, $\norm{H}_{\max,\infty}$, and $\norm{H'}_{\infty,\infty}$ are the norm upper bounds provided to the algorithm.
	We first decompose $\exp_{\mathcal{T}}\bigl({-}i\int_{0}^{t}\mathrm{d}\tau\, H(\tau)\bigr)$ into a product of evolutions of time-independent Hamiltonians $H(kt/r)$, each evolving for time $t/r$. By \lem{generator}, we have
	\begin{equation}
	\begin{aligned}
	\norm{\exp_{\mathcal{T}}\biggl(-i\int_{\frac{kt}{r}}^{\frac{(k+1)t}{r}}\mathrm{d}\tau\,  H(\tau)\biggr)-e^{-i\frac{t}{r}H\big(\frac{kt}{r}\big)}}_\infty
	&\leq\int_{\frac{kt}{r}}^{\frac{(k+1)t}{r}}\mathrm{d}s
	\norm{H(s)-H\bigg(\frac{kt}{r}\bigg)}_\infty\\
	&\leq\int_{\frac{kt}{r}}^{\frac{(k+1)t}{r}}\mathrm{d}s
	\bigg(s-\frac{kt}{r}\bigg)\norm{H'}_{\infty,\infty}\\
	&=\frac{\norm{H'}_{\infty,\infty}t^2}{2r^2},
	\end{aligned}
	\end{equation}
	which implies
	\begin{equation}\label{eq:rmin}
	\norm{\exp_{\mathcal{T}}\biggl(-i\int_{0}^{t}\mathrm{d}s\ H(s)\biggr)-\prod_{k=0}^{r-1}e^{-i\frac{t}{r}H\bigl(\frac{kt}{r}\bigr)}}_\infty
	\leq\frac{\norm{H'}_{\infty,\infty}t^2}{2r}.
	\end{equation}
	To approximate with precision $\epsilon$, it suffices to choose
	\begin{equation}
	\label{eq:r_cond1}
	\begin{aligned}
	r=O\bigg(\frac{\norm{H'}_{\infty,\infty}t^2}{\epsilon}\bigg).
	\end{aligned}
	\end{equation}
Note that we use $O$ here because we can choose $r$ to be the minimum integer satisfying \eq{rmin}, giving an upper bound on the number of steps that suffice to achieve error at most $\epsilon$.
	
	We then decompose the evolution under each time-independent sparse Hamiltonian $H(kt/r)$ for time $t/r$ with precision $O(\epsilon/r)$. By \lem{sparse_decomp}, $H(kt/r)$ can be decomposed into a sum of $\eta=O(d^2\norm{H(kt/r)}_{\max}/\gamma)$ terms $G_j(kt/r)$ such that
	\begin{equation}
	\label{eq:max_norm_bound}
		\norm{H\big(kt/r\big)-\gamma\sum_{j=1}^{\eta}G_j\big(kt/r\big)}_{\max}\leq\sqrt{2}\gamma.
	\end{equation} 
	Furthermore, each $G_j$ is $1$-sparse and Hermitian with eigenvalues $\pm 1$ and the value and location of each non-zero matrix element in $G_j$ can be accessed using $O(1)$ queries to $H$.	
 We choose $\gamma=O\big(\epsilon/td\big)$ so that
	\begingroup
	\allowdisplaybreaks
	\begin{align}
	\norm{e^{-i\frac{t}{r}H\big(\frac{kt}{r}\big)}-e^{-i\frac{t}{r}\gamma\sum_{j=1}^{\eta}G_j\big(\frac{kt}{r}\big)}}_\infty
	&=	\norm{e^{-i\frac{t}{r}\left[H\big(\frac{kt}{r}\big)-\gamma\sum_{j=1}^{\eta}G_j\big(\frac{kt}{r}\big)\right]}-I}_\infty \nonumber \\
	&\le \frac{t}{r}\norm{H(kt/r)-\gamma\sum_{j=1}^{\eta}G_j(kt/r)}_\infty \nonumber \\
	&\le \frac{td}{r}\norm{H(kt/r)-\gamma\sum_{j=1}^{\eta}G_j(kt/r)}_{\max} \nonumber \\
	&\le \frac{ t d }{r}\sqrt{2} \gamma \nonumber 
	= O\left( \frac{td}{r} \frac{\epsilon}{td}\right)\nonumber 
	=O\bigg(\frac{\epsilon}{r}\bigg).
	\end{align}
	\endgroup
	In the third line we have used the inequality between the spectral norm and max norm, in the fourth line we have used the bound on the max norm \eq{max_norm_bound}, and in the fifth line we have used $\gamma=O\big(\epsilon/td\big)$. 
	This implies $\eta=O\big(d^3\norm{H(kt/r)}_{\max}t/\epsilon\big)$ and the fractional query complexity is
	\begin{equation}
	\label{eq:fracquery_seg}
	\eta \frac{t}{r}\gamma=O\bigg(\frac{d^2\norm{H(kt/r)}_{\max}t}{r}\bigg).
	\end{equation}
	We apply the first-order product formula to obtain
	\begin{equation}
	\norm{e^{-i\frac{t}{r}\gamma\sum_{j=1}^{\eta}G_j\big(\frac{kt}{r}\big)}-e^{-i\frac{t}{r}\gamma G_1\big(\frac{kt}{r}\big)}\cdots e^{-i\frac{t}{r}\gamma G_{\eta}\big(\frac{kt}{r}\big)}}_\infty
	=O\bigg(\frac{(d^2\norm{H(kt/r)}_{\max}t)^2}{r^2}\bigg).
	\end{equation}
	Therefore, it is possible to choose $r$ as
	\begin{equation}
	\label{eq:s_def}
	r=O\bigg(\frac{(d^2\norm{H(kt/r)}_{\max}t)^2}{\epsilon}\bigg)
	=O\bigg(\frac{(d^2\norm{H}_{\max,\infty}t)^2}{\epsilon}\bigg),
	\end{equation}
	such that the error of the first-order product-formula decomposition is at most
	\begin{equation}
	\norm{e^{-i\frac{t}{r}\gamma\sum_{j=1}^{\eta}G_j\big(\frac{kt}{r}\big)}-e^{-i\frac{t}{r}\gamma G_1\big(\frac{kt}{r}\big)}\cdots e^{-i\frac{t}{r}\gamma G_{\eta}\big(\frac{kt}{r}\big)}}_\infty
	=O\bigg(\frac{\epsilon}{r}\bigg).
	\end{equation}
By choosing $r$ as the maximum of \eq{r_cond1} and \eq{s_def}, we ensure that the error in each of the $r$ time steps is $O(\epsilon/r)$, so the total error is $O(\epsilon)$.

	Altogether, we find a fractional-query algorithm with total query complexity
	\begin{equation}
	\label{eq:fracquery_total}
	T=O\bigg(\sum_{k=0}^{r-1}\frac{d^2\norm{H(kt/r)}_{\max}t}{r}\bigg)
	\end{equation}
	and error
	\begin{equation}
	\norm{\exp_{\mathcal{T}}\biggl(-i\int_{0}^{t}\mathrm{d}s\ H(s)\biggr)
		-\prod_{k=0}^{r-1}\biggl(e^{-i\frac{t}{rs}\gamma G_1\bigl(\frac{kt}{r}\bigr)}\cdots e^{-i\frac{t}{rs}\gamma G_{\eta}\bigl(\frac{kt}{r}\bigr)}\biggr)^s}_\infty
	\leq\epsilon.
	\end{equation}
	
	We now convert this multi-oracle algorithm to a single-oracle algorithm with the same fractional-query complexity and, with precision $O(\epsilon)$, implement it in the discrete query model.
	The single oracle for the
	standard fractional-query algorithm is now
\begin{equation}\label{eq:oraclet}
\textsc{Select}(\textsc{Exp-}G)=\sum_{k=0}^{r-1} \sum_{l=0}^{\eta(k)-1} |k\rangle\langle k| \otimes |l\rangle\langle l|\otimes e^{-i\pi G_l(kt/r)}.
\end{equation}
This oracle encodes the time-dependence of $H$ in an ancilla. The operators $U_j$ in the fractional-query algorithm then need to increment the time register.

Altogether, we make
	\begin{equation}
	\label{eq:query_seg}
	O\bigg(T\frac{\log(T/\epsilon)}{\log\log(T/\epsilon)}\bigg)
	=O\bigg(\sum_{k=0}^{r-1}\frac{d^2\norm{H(kt/r)}_{\max}t}{r}\frac{\log(d^2\norm{H}_{\max,\infty}t/\epsilon)}{\log\log(d^2\norm{H}_{\max,\infty}t/\epsilon)}\bigg)
	\end{equation}
	discrete queries.
	
	We now show how the query complexity of this approach achieves $L^1$-norm scaling. The intuition is that the total query complexity $\widetilde{O}\big(\sum_{k=0}^{r-1}d^2\norm{H(kt/r)}_{\max}t/r\big)$ should be close to $\widetilde{O}\big(d^2\int_{0}^{t}\mathrm{d}\tau\norm{H(\tau)}_{\max}\big)$ when $r$ is sufficiently large. Specifically,
	\begin{equation}
	\label{eq:query_approx}
	\begin{aligned}
	\bigg|\sum_{k=0}^{r-1}\norm{H\bigg(\frac{kt}{r}\bigg)}_{\max}\frac{t}{r}-\int_{0}^{t}\mathrm{d}\tau\norm{H(\tau)}_{\max}\bigg|
	&\leq\sum_{k=0}^{r-1}\int_{\frac{kt}{r}}^{\frac{(k+1)t}{r}}\mathrm{d}\tau
	\bigg|\norm{H(\tau)}_{\max}-\norm{H\bigg(\frac{kt}{r}\bigg)}_{\max}\bigg|\\
	&\leq\sum_{k=0}^{r-1}\int_{\frac{kt}{r}}^{\frac{(k+1)t}{r}}\mathrm{d}\tau
	\norm{H(\tau)-H\bigg(\frac{kt}{r}\bigg)}_{\max}\\
	&\leq\frac{\norm{H'}_{\max,\infty}t^2}{2r}.
	\end{aligned}
	\end{equation}
	To achieve an additive error of $\delta$, it suffices to choose
	\begin{equation}
	\label{eq:r_cond2}
	r=O\bigg(\frac{\norm{H'}_{\max,\infty}t^2}{\delta}\bigg).
	\end{equation}
	Since $\delta$ can be made arbitrarily close to $0$, we have the total query complexity of
	\begin{equation}
	\begin{aligned}
	&\ O\bigg(d^2\bigg(\int_{0}^{t}\mathrm{d}\tau\norm{H(\tau)}_{\max}+\delta\bigg)\frac{\log(d^2\norm{H}_{\max,\infty}t/\epsilon)}{\log\log(d^2\norm{H}_{\max,\infty}t/\epsilon)}\bigg)\\
	=&\ O\bigg(d^2\norm{H}_{\max,1}\frac{\log(d^2\norm{H}_{\max,\infty}t/\epsilon)}{\log\log(d^2\norm{H}_{\max,\infty}t/\epsilon)}\bigg)
	\end{aligned}
	\end{equation}
as claimed.
\end{proof}

The above analysis shows that the fractional-query algorithm can simulate a time-dependent Hamiltonian with query complexity that scales with the $L^1$-norm.
However, this approach does not directly give a useful result for the gate complexity.
The difficulty arises from the factor of $g$ in \cite[Proof of Theorem 1.1]{FractionalQuery14}, which corresponds to the complexity of applying a sequence of driving operations $U_j$.
These operations need to increment $k$ (indexing the time), as well as $l$, which takes $\eta(k)$ values depending on $k$.
Applying the sequence of operations $U_j$ therefore requires determining new values of $l$ and $k$, which can depend on the sum of $\eta(k)$ over a long sequence of values of $k$.
This will introduce significant gate complexity, so a fast algorithm would require a more efficient procedure for implementing the driving operations.

Instead, we develop other quantum algorithms that achieve $L^1$-norm scaling for not only the query complexity but also the gate complexity. We employ two main techniques: the continuous qDRIFT sampling protocol (\sec{qdrift}) and a rescaling principle for the Schr\"{o}dinger equation (\sec{rescaling}).

%%%%%%%%%%%%%%%%%%%%%%%%%%%%%%%%%%%%%%%%%%%%%%%%%%%%%%%%%%%%%%%%%%%%%%%%%%%%%%
\section{Continuous qDRIFT}
\label{sec:qdrift}
%!TEX root = L1NormScalingSimulation.tex

We show in \sec{qdrift_universal} that continuous qDRIFT is universal, in the sense that any time-independent Hamiltonian simulable by the algorithm of \cite{Campbell18} can be simulated by our protocol. We then discuss the simulation complexity in both the SM and the LCU models in \sec{qdrift_complexity}.

The continuous qDRIFT protocol also has similarities with the approach of Poulin et al.\  \cite{Pou15} based on Hamiltonian averaging and Monte Carlo sampling, although their approach does not have $L^1$-norm scaling. We give a detailed comparison between these two approaches in \append{poulin}.

%%%%%%%%%%%%%%%%%%%%%%%%%%%%%%%%%%%%%%%%%%%%%%%%%%%%%%%%%%%%%%%%%%%%%%%%%%%%%%
\subsection{A classical sampler of time-dependent Hamiltonians}
\label{sec:qdrift_sampler}
Let $H(\tau)$ be a time-dependent Hamiltonian defined for $0\leq\tau\leq t$. For this section only, we relax our requirements on the Hamiltonians: we assume that $H(\tau)$ is nonzero everywhere and is continuous except on a finite number of points. We further suppose that each $H(\tau)$ can be directly exponentiated on a quantum computer. The ideal evolution under $H(\tau)$ for time $t$ is given by $\E(t,0)=\exp_{\mathcal{T}}\big(-i\int_{0}^{t}\mathrm{d}\tau\, H(\tau)\big)$ and the corresponding quantum channel is
\begin{equation}
	\CE(t,0)(\rho)=\E(t,0)\rho\E^\dagger(t,0)
	=\exp_{\mathcal{T}}\biggl(-i\int_{0}^{t}\mathrm{d}\tau\, H(\tau)\biggr)
	\rho
	\exp_{\mathcal{T}}^\dagger\biggl(-i\int_{0}^{t}\mathrm{d}\tau\, H(\tau)\biggr).
\end{equation}
The high-level idea of the sampling algorithm is to approximate the ideal channel by a mixed unitary channel
\begin{equation}
\CU(t,0)(\rho)=\int_{0}^{t}\mathrm{d}\tau\, p(\tau)e^{-i \frac{H(\tau)}{p(\tau)}}\rho e^{i \frac{H(\tau)}{p(\tau)}},
\end{equation}
where $p(\tau)$ is a probability density function defined for $0\leq \tau\leq t$. This channel can be realized by a classical sampling protocol. With a proper choice of $p(\tau)$, this channel approximates the ideal channel and can thus be used for quantum simulation.

We begin with a full definition of $\CU(t,0)$. Inspired by \cite{Campbell18}, we choose $p(\tau)$ to be biased toward those $\tau$ with large $\norm{H(\tau)}_\infty$. A natural choice is
\begin{equation}
	p(\tau):=\frac{\norm{H(\tau)}_{\infty}}{\norm{H}_{\infty,1}}.
\end{equation}
Note that $\CU(t,0)$ is a valid quantum channel (in particular, $p(\tau)$ can never be zero). Furthermore, it can be implemented with unit cost: for any input state $\rho$, we randomly sample a value $\tau$ according to $p(\tau)$ and perform $e^{-i {H(\tau)}/{p(\tau)}}$.
Note also that ${H(\tau)}/{p(\tau)}$ in the exponential implicitly depends on $t$. Indeed, $\norm{H}_{\infty,1}$ includes an integral over time, so $p(\tau)$ decreases with the total evolution time $t$. 
We call this classical sampling protocol and the channel it implements ``continuous qDRIFT''.

This protocol assumes that the spectral norm $\norm{H(\tau)}_\infty$ is known a priori and that we can efficiently sample from the distribution $p(\tau)$. In practice, it is often easier to obtain a spectral-norm upper bound $\Lambda(\tau)\geq\norm{H(\tau)}_\infty$. Such an upper bound can also be used to implement continuous qDRIFT, provided that it has only finitely many discontinuities. Specifically, we define
\begin{equation}
p_\Lambda(\tau):=\frac{\Lambda(\tau)}{\norm{\Lambda}_1},
\end{equation}
so $p_\Lambda(\tau)$ is a probability density function. Using $p_\Lambda$ to implement continuous qDRIFT, we obtain the channel
\begin{equation}
	\CU_\Lambda(t,0)(\rho):=\int_{0}^{t}\mathrm{d}\tau\, p_\Lambda(\tau)e^{-i \frac{H(\tau)}{p_\Lambda(\tau)}}\rho e^{i \frac{H(\tau)}{p_\Lambda(\tau)}},
\end{equation}
whose analysis is similar to that presented here. For readability, we assume that we can efficiently sample from $p(\tau)=\norm{H(\tau)}_{\infty}/\norm{H}_{\infty,1}$ and we analyze $\CU(t,0)$.

We show that continuous qDRIFT approximates the ideal channel with error that depends on the $L^1$-norm.
\begin{theorem}[$L^1$-norm error bound for continuous qDRIFT, short-time version]
	\label{thm:qdrift_short_time}
	Let $H(\tau)$ be a time-dependent Hamiltonian defined for $0\leq\tau\leq t$; assume it is continuous except on a finite number of points and nonzero everywhere. Define $\E(t,0)=\exp_{\mathcal{T}}\bigl(-i\int_{0}^{t}\mathrm{d}\tau H(\tau)\bigr)$ and let $\CE(t,0)(\cdot)=\E(t,0)(\cdot)\E^\dagger(t,0)$ be the corresponding quantum channel. Let $\CU(t,0)$ be the continuous qDRIFT channel
	\begin{equation}
	\CU(t,0)(\rho)=
	\int_0^t\mathrm{d}\tau\, p(\tau)e^{-i \frac{H(\tau)}{p(\tau)}}\rho e^{i \frac{H(\tau)}{p(\tau)}},
	\end{equation}
	where $p(\tau)=\norm{H(\tau)}_{\infty}/\norm{H}_{\infty,1}$. Then
	\begin{equation}
	\norm{\CE(t,0)-\CU(t,0)}_\diamond\leq 4\norm{H}_{\infty,1}^2.
	\label{eq:qdrift_short_time_bound}
	\end{equation}
\end{theorem}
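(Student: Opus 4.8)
The plan is to introduce a common first-order reference superoperator and bound the distance of each channel to it, then combine by the triangle inequality. Define $\overline{H}:=\int_0^t\mathrm{d}\tau\,H(\tau)$, the superoperators $\mathcal{L}_\tau(\cdot):=-i[H(\tau),\cdot]$, and the first-order reference $\mathcal{R}:=\mathcal{I}+\int_0^t\mathrm{d}\tau\,\mathcal{L}_\tau$, i.e.\ $\mathcal{R}(\rho)=\rho-i[\overline{H},\rho]$. I would show $\norm{\CE(t,0)-\mathcal{R}}_\diamond\le 2\norm{H}_{\infty,1}^2$ and $\norm{\CU(t,0)-\mathcal{R}}_\diamond\le 2\norm{H}_{\infty,1}^2$ separately and conclude by the triangle inequality. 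The point is that $\mathcal{R}$ is the common first-order expansion of both channels; for $\CU$ this hinges on the choice $p(\tau)=\norm{H(\tau)}_\infty/\norm{H}_{\infty,1}$, because then $A_\tau:=H(\tau)/p(\tau)$ has the $\tau$-independent norm $\norm{A_\tau}_\infty=\norm{H}_{\infty,1}$, while $\int_0^t\mathrm{d}\tau\,p(\tau)=1$ and $p(\tau)^{-1}\mathcal{L}_\tau=\mathcal{A}_\tau$ with $\mathcal{A}_\tau(\cdot):=-i[A_\tau,\cdot]$. I will use three facts about the diamond norm: submultiplicativity under composition, $\norm{\mathcal{W}}_\diamond=1$ for any unitary channel $\mathcal{W}$, and $\norm{\mathcal{L}_\tau}_\diamond\le 2\norm{H(\tau)}_\infty$ (and likewise $\norm{\mathcal{A}_\tau}_\diamond\le 2\norm{A_\tau}_\infty$), obtained by bounding left- and right-multiplication separately.

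For the ideal channel I would use the Liouvillian integral equation. Differentiating $\CE(t,0)(\rho)=\E(t,0)\rho\E^\dagger(t,0)$ and using $\frac{\mathrm d}{\mathrm dt}\E(t,0)=-iH(t)\E(t,0)$ gives $\frac{\mathrm d}{\mathrm dt}\CE(t,0)=\mathcal{L}_t\circ\CE(t,0)$, so $\CE(t,0)=\mathcal{I}+\int_0^t\mathrm{d}\tau\,\mathcal{L}_\tau\circ\CE(\tau,0)$. Subtracting $\mathcal{R}$ yields $\CE(t,0)-\mathcal{R}=\int_0^t\mathrm{d}\tau\,\mathcal{L}_\tau\circ(\CE(\tau,0)-\mathcal{I})$. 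Writing $G(\tau):=2\int_0^\tau\mathrm{d}s\,\norm{H(s)}_\infty$ and using that $\CE(\tau,0)$ is a channel, the same integral equation bounds $\norm{\CE(\tau,0)-\mathcal{I}}_\diamond\le\int_0^\tau\mathrm{d}s\,\norm{\mathcal{L}_s}_\diamond\le G(\tau)$; since also $\norm{\mathcal{L}_\tau}_\diamond\le G'(\tau)$, I obtain $\norm{\CE(t,0)-\mathcal{R}}_\diamond\le\int_0^t\mathrm{d}\tau\,G'(\tau)G(\tau)=\tfrac12 G(t)^2\le 2\norm{H}_{\infty,1}^2$.

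For the qDRIFT channel I would expand pointwise rather than globally. Each summand is the unitary channel $\mathcal{V}_\tau(\rho)=e^{-iA_\tau}\rho e^{iA_\tau}=e^{\mathcal{A}_\tau}(\rho)$, so Taylor's theorem with integral remainder gives $\mathcal{V}_\tau-\mathcal{I}-\mathcal{A}_\tau=\int_0^1\mathrm{d}s\,(1-s)\,\mathcal{A}_\tau^2\circ e^{s\mathcal{A}_\tau}$. Since $e^{s\mathcal{A}_\tau}$ is a unitary channel and the diamond norm is submultiplicative, $\norm{\mathcal{V}_\tau-\mathcal{I}-\mathcal{A}_\tau}_\diamond\le\tfrac12\norm{\mathcal{A}_\tau}_\diamond^2\le 2\norm{A_\tau}_\infty^2=2\norm{H}_{\infty,1}^2$. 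Because $\mathcal{I}=\int_0^t\mathrm{d}\tau\,p(\tau)\mathcal{I}$ and $\int_0^t\mathrm{d}\tau\,\mathcal{L}_\tau=\int_0^t\mathrm{d}\tau\,p(\tau)\mathcal{A}_\tau$, the reference superoperator absorbs exactly the zeroth- and first-order pieces, leaving $\CU(t,0)-\mathcal{R}=\int_0^t\mathrm{d}\tau\,p(\tau)(\mathcal{V}_\tau-\mathcal{I}-\mathcal{A}_\tau)$; the triangle inequality together with $\int_0^t\mathrm{d}\tau\,p(\tau)=1$ then gives $\norm{\CU(t,0)-\mathcal{R}}_\diamond\le 2\norm{H}_{\infty,1}^2$, and adding the two bounds completes the proof.

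The main obstacle is not any single estimate but assembling the diamond-norm calculus for superoperators that are not themselves channels: one must justify $\norm{\mathcal{L}_\tau}_\diamond\le 2\norm{H(\tau)}_\infty$ and submultiplicativity for such maps, verify the superoperator integral equation, and confirm $\mathcal{V}_\tau=e^{\mathcal{A}_\tau}$, all in the presence of the finitely many allowed discontinuities of $H$ (the fundamental theorem of calculus still applies off the finite discontinuity set, so the integral identities persist). The conceptual crux, and the origin of both the clean quadratic bound and the multiplicative-factor improvement over \cite{Campbell18}, is the scale-invariant choice of $p$, which makes $\norm{A_\tau}_\infty=\norm{H}_{\infty,1}$ constant in $\tau$; the pointwise second-order remainder is then uniform and integrates against $p$ to give exactly $2\norm{H}_{\infty,1}^2$, rather than a looser $e^{x}-1-x$-type estimate.
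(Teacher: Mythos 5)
Your proposal is correct, and it reaches the same constant $4\norm{H}_{\infty,1}^2$ by a genuinely different route than the paper. The paper interpolates \emph{both} channels in an auxiliary scaling parameter $s$ (replacing $H(\tau)$ by $sH(\tau)$), shows the first $s$-derivatives agree at $s=0$, and applies the fundamental theorem of calculus twice in $s$, bounding the second derivatives via a dedicated Hamiltonian-scaling lemma (\lem{scaling}, itself proved by variation of parameters); the factor $\int_0^1\mathrm{d}s\int_0^s\mathrm{d}v=\tfrac12$ against the sum $2+2+4$ of second-derivative bounds produces the $4$. You instead compare each channel to the common first-order reference $\mathcal{R}=\mathcal{I}+\int_0^t\mathrm{d}\tau\,\mathcal{L}_\tau$, handling $\CE$ with a Gr\"{o}nwall-type argument from the Liouvillian integral equation in \emph{physical} time and $\CU$ with a pointwise operator-Taylor remainder, then add $2+2$. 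Your route is more elementary --- it needs no analogue of \lem{scaling} and only standard diamond-norm calculus (submultiplicativity, $\norm{\mathcal{W}}_\diamond=1$ for channels, $\norm{\mathcal{L}_\tau}_\diamond\le 2\norm{H(\tau)}_\infty$), all of which are correct as you state them --- and it isolates more transparently why the choice $p(\tau)=\norm{H(\tau)}_\infty/\norm{H}_{\infty,1}$ matters, namely that $\norm{A_\tau}_\infty=\norm{H}_{\infty,1}$ is constant in $\tau$. What the paper's parametrization buys in exchange is reusability: the same scaling lemma and second-derivative bounds are invoked again for the LC-model version (\thm{qdrift_lc_short_time}) via the universality theorem, and the symmetric treatment of the two channels makes that transfer immediate. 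Your remark about discontinuities is also adequate, since all your identities are integral identities that are insensitive to a finite exceptional set.
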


Note that this bound is only useful when $t$ is small enough that the right-hand side of \eq{qdrift_short_time_bound} is less than $1$ (the norm $\norm{H}_{\infty,1}$ involves an integral over $t$, so it increases with $t$).

To prove this theorem, we need a formula that computes the rate at which the evolution operator changes when the Hamiltonian is scaled. To illustrate the idea, consider the degenerate case where the Hamiltonian $H$ is time independent. Then the evolution under $H$ for time $t$ is given by $e^{-itH}$. A direct calculation shows that
\begin{equation}
\frac{\mathrm{d}}{\mathrm{d}s}e^{-itsH}=-itHe^{-itsH},
\end{equation}
so the rate is $-itHe^{-itsH}$ in the time-independent case. This calculation becomes significantly more complicated for a time-dependent Hamiltonian. The following lemma gives an explicit formula for
\begin{equation}
\frac{\mathrm{d}}{\mathrm{d}s}\exp_{\mathcal{T}}\biggl(-i\int_{0}^{t}\mathrm{d}\tau\, sH(\tau)\biggr).
\end{equation}
We sketch the proof of this formula for completeness, but refer the reader to \cite[p.\ 35]{bk:dollard_friedman} for mathematical justifications that are beyond the scope of this paper.

\begin{lemma}[Hamiltonian scaling]
	\label{lem:scaling}
	Let $H(\tau)$ be a time-dependent Hamiltonian defined for $0\leq\tau\leq t$ and assume it has finitely many discontinuities. Denote $\E_s(t,v)=\exp_{\mathcal{T}}\bigl(-i\int_{v}^{t}\mathrm{d}\tau\, sH(\tau)\bigr)$. Then,
	\begin{equation}
	\frac{\mathrm{d}}{\mathrm{d}s}\E_s(t,v)
	=\int_{v}^{t}\mathrm{d}\tau\, 
	\E_s(t,\tau)
	\bigl[-iH(\tau)\bigr]
	\E_s(\tau,v).
	\end{equation}
\end{lemma}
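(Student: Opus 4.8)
The plan is to differentiate the time-ordered exponential $\E_s(t,v)=\exp_{\mathcal{T}}\bigl(-i\int_{v}^{t}\mathrm{d}\tau\, sH(\tau)\bigr)$ with respect to the scaling parameter $s$ by reducing to the already-established differential equation in the \emph{time} variable. The key structural observation is that $\E_s$ satisfies, for fixed $s$, the Schr\"{o}dinger-type equation $\frac{\partial}{\partial t}\E_s(t,v)=-isH(t)\,\E_s(t,v)$ with $\E_s(v,v)=I$, exactly as in \sec{prelim_evolve} but with $H$ replaced by $sH$. I would introduce the derivative $D(t,v):=\frac{\mathrm{d}}{\mathrm{d}s}\E_s(t,v)$ and derive a differential equation that $D$ itself obeys in the variable $t$. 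Differentiating the Schr\"{o}dinger equation with respect to $s$ and interchanging the order of the two derivatives (justified for smooth, or finitely-discontinuous, Hamiltonians by the reference \cite[p.\ 35]{bk:dollard_friedman}) gives an inhomogeneous linear equation
\begin{equation}
\frac{\partial}{\partial t}D(t,v)=-isH(t)\,D(t,v)-iH(t)\,\E_s(t,v),\qquad D(v,v)=0,
\end{equation}
where the inhomogeneous term $-iH(t)\E_s(t,v)$ comes from differentiating the explicit factor of $s$ in front of $H$.

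Next I would solve this inhomogeneous linear equation by the variation-of-parameters (Duhamel) formula. The homogeneous part is generated by $-isH(t)$, whose propagator is precisely $\E_s(t,\tau)$, so the general solution is obtained by propagating the source term forward:
\begin{equation}
D(t,v)=\int_{v}^{t}\mathrm{d}\tau\,\E_s(t,\tau)\bigl[-iH(\tau)\bigr]\E_s(\tau,v),
\end{equation}
using the initial condition $D(v,v)=0$ together with the multiplicative property $\E_s(t,\tau)\E_s(\tau,v)=\E_s(t,v)$ from \sec{prelim_evolve}. To verify this candidate, one differentiates it in $t$: the boundary term from the upper limit contributes $\E_s(t,t)[-iH(t)]\E_s(t,v)=-iH(t)\E_s(t,v)$, the integrand picks up the factor $-isH(t)$ from differentiating the leftmost propagator $\E_s(t,\tau)$, and together these reproduce exactly the inhomogeneous equation above, confirming the formula.

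The main obstacle is the analytic justification rather than the formal manipulation. Interchanging $\frac{\partial}{\partial s}$ and $\frac{\partial}{\partial t}$, and differentiating the time-ordered exponential termwise, are delicate for operator-valued functions, particularly because the hypothesis only assumes finitely many discontinuities in $H$ rather than global smoothness. I would handle this by working on each subinterval where $H$ is continuous, where the Dyson series converges absolutely and uniformly so that termwise differentiation is legitimate, and then patching across the finitely many discontinuity points using the multiplicative property of the evolution; the rigorous treatment of these exchanges of limits is exactly what \cite[p.\ 35]{bk:dollard_friedman} supplies, so I would invoke it and present the Duhamel verification as the transparent computational heart of the argument.
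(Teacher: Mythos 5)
Your proposal is correct and follows essentially the same route as the paper's proof: differentiate the Schr\"{o}dinger equation in $t$ with respect to $s$, obtain the inhomogeneous linear equation for $\frac{\mathrm{d}}{\mathrm{d}s}\E_s(t,v)$ with vanishing initial condition at $t=v$, solve by variation of parameters, and patch across the finitely many discontinuities using the multiplicative property. The only cosmetic differences are that you justify $D(v,v)=0$ directly from $\E_s(v,v)=I$ being independent of $s$ (the paper derives it via the integral equation) and you add an explicit verification of the Duhamel formula; both are fine.
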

\begin{proof}[Proof sketch]
	We first consider the special case where $H(\tau)$ is continuous in $\tau$. We invoke the variation-of-parameters formula \cite[Theorem 4.9]{bk:knapp} to construct the claimed integral representation for $\frac{\mathrm{d}}{\mathrm{d}s}\E_s(t,v)$. To this end, we need to find a differential equation satisfied by $\frac{\mathrm{d}}{\mathrm{d}t}\frac{\mathrm{d}}{\mathrm{d}s}\E_s(t,v)$ and the corresponding initial condition $\frac{\mathrm{d}}{\mathrm{d}s}\E_s(t,v)\big\rvert_{t=v}$. We differentiate the Schr\"{o}dinger equation $\frac{\mathrm{d}}{\mathrm{d}t}\E_s(t,v)=-isH(t)\E_s(t,v)$ with respect to $s$ to get
	\begin{equation}
	\frac{\mathrm{d}}{\mathrm{d}t}\frac{\mathrm{d}}{\mathrm{d}s}\E_s(t,v)
	=-isH(t)\frac{\mathrm{d}}{\mathrm{d}s}\E_s(t,v)
	-iH(t)\E_s(t,v).
	\end{equation}
	Invoking the variation-of-parameters formula, we find an integral representation
	\begin{equation}
	\label{eq:scale_diff}
	\begin{aligned}
	\frac{\mathrm{d}}{\mathrm{d}s}\E_s(t,v)
	&=\E_s(t,v)\cdot
	\bigg[\frac{\mathrm{d}}{\mathrm{d}s}\E_s(t,v)\Big\rvert_{t=v}\bigg]
	+\E_s(t,v)
	\int_{v}^{t}\mathrm{d}\tau\, 
	\E_s^\dagger(\tau,v)
	\big[-iH(\tau)\big]
	\E_s(\tau,v)\\
	&=\E_s(t,v)\cdot
	\bigg[\frac{\mathrm{d}}{\mathrm{d}s}\E_s(t,v)\Big\rvert_{t=v}\bigg]
	+\int_{v}^{t}\mathrm{d}\tau\, 
	\E_s(t,\tau)
	\big[-iH(\tau)\big]
	\E_s(\tau,v).
	\end{aligned}
	\end{equation}
	It thus remains to find the initial condition $\frac{\mathrm{d}}{\mathrm{d}s}\E_s(t,v)\big\rvert_{t=v}$.
	
	We start from the Schr\"{o}dinger equation $\frac{\mathrm{d}}{\mathrm{d}t}\E_s(t,v)=-isH(t)\E_s(t,v)$ and apply the fundamental theorem of calculus with initial condition $\E_s(v,v)=I$, obtaining the integral representation
	\begin{equation}
	\E_s(t,v)
	=I-is\int_{v}^{t}\mathrm{d}\tau\, H(\tau)\E_s(\tau,v).
	\end{equation}
	Differentiating this equation with respect to $s$ gives
	\begin{equation}
	\frac{\mathrm{d}}{\mathrm{d}s}\E_s(t,v)
	=-i\int_{v}^{t}\mathrm{d}\tau\, H(\tau)\E_s(t,v)
	-is\int_{v}^{t}\mathrm{d}\tau\, H(\tau)\frac{\mathrm{d}}{\mathrm{d}s}\E_s(\tau,v),
	\end{equation}
	which implies
	\begin{equation}
	\label{eq:scale_initial}
	\frac{\mathrm{d}}{\mathrm{d}s}\E_s(t,v)\Big\rvert_{t=v}=0.
	\end{equation}
	Combining \eq{scale_diff} and \eq{scale_initial} establishes the claimed integral representation for $\frac{\mathrm{d}}{\mathrm{d}s}\E_s(t,v)$.
	
	Now consider the case where $H(\tau)$ is piecewise continuous with one discontinuity at $t_1\in[v,t]$. We use the multiplicative property to break the evolution at $t_1$, so that each subevolution is generated by a continuous Hamiltonian. We have 
	\begin{equation}
	\begin{aligned}
		\frac{\mathrm{d}}{\mathrm{d}s}\E_s(t,v)
		&=\frac{\mathrm{d}}{\mathrm{d}s}\big[\E_s(t,t_1)\E_s(t_1,v)\big]\\
		&=\frac{\mathrm{d}}{\mathrm{d}s}\E_s(t,t_1)\cdot\E_s(t_1,v)
		+\E_s(t,t_1)\cdot\frac{\mathrm{d}}{\mathrm{d}s}\E_s(t_1,v)\\
		&=\int_{t_1}^{t}\mathrm{d}\tau\, 
		\E_s(t,\tau)
		\big[-iH(\tau)\big]
		\E_s(\tau,t_1)\cdot\E_s(t_1,v)\\
		&+\E_s(t,t_1)\cdot\int_{0}^{t_1}\mathrm{d}\tau\, 
		\E_s(t_1,\tau)
		\big[-iH(\tau)\big]
		\E_s(\tau,v)\\
		&=\int_{t_1}^{t}\mathrm{d}\tau\, 
		\E_s(t,\tau)
		\big[-iH(\tau)\big]
		\E_s(\tau,v)\\
		&+\int_{v}^{t_1}\mathrm{d}\tau\, 
		\E_s(t,\tau)
		\big[-iH(\tau)\big]
		\E_s(\tau,v)\\
		&=\int_{v}^{t}\mathrm{d}\tau\, 
		\E_s(t,\tau)
		\big[-iH(\tau)\big]
		\E_s(\tau,v).
	\end{aligned}
	\end{equation}
	The general case of finitely many discontinuities follows by induction.
\end{proof}

Note that our argument implicitly assumes the existence of the derivatives and that we can interchange the order of $\frac{\mathrm{d}}{\mathrm{d}s}$ and $\frac{\mathrm{d}}{\mathrm{d}t}$. A rigorous justification of these assumptions is beyond the scope of the paper; we refer the reader to \cite[p.\ 35]{bk:dollard_friedman} for details.

\begin{proof}[Proof of {\thm{qdrift_short_time}}]
	Define two parametrized quantum channels
	\begin{equation}
	\CE_s(t,0)(\rho)=\E_s(t,0)\rho \E_s^\dagger(t,0),\qquad
	\CU_s(t,0)(\rho)=\int_0^t\mathrm{d}\tau\, p(\tau)e^{-i s\frac{H(\tau)}{p(\tau)}}\rho e^{i s\frac{H(\tau)}{p(\tau)}}
	\end{equation}
	and observe that
	\begin{equation}
	\CE_0(t,0)(\rho)=\rho,\quad
	\CE_1(t,0)(\rho)=\CE(t,0)(\rho),\quad
	\CU_0(t,0)(\rho)=\rho,\quad
	\CU_1(t,0)(\rho)=\CU(t,0)(\rho).
	\end{equation}
	To bound the diamond-norm error $\norm{\CE_1(t,0)-\CU_1(t,0)}_\diamond$, we should take a state $\sigma$ on the joint system of the original register and an ancilla register with the same dimension and upper bound $\norm{(\CE_1(t,0)\otimes\mathds{1})(\sigma)-(\CU_1(t,0)\otimes\mathds{1})(\sigma)}_1$. For readability, we instead show how to bound the error $\norm{\CE_1(t,0)(\rho)-\CU_1(t,0)(\rho)}_1$, but 
	the derivation works in exactly the same way for
	the distance $\norm{(\CE_1(t,0)\otimes\mathds{1})(\sigma)-(\CU_1(t,0)\otimes\mathds{1})(\sigma)}_1$ and the resulting bound is the same.
	
	Invoking \lem{scaling}, we have
	\begin{equation}
	\frac{\mathrm{d}}{\mathrm{d}s}\E_s(t,0)\Big\rvert_{s=0}
	=\int_{0}^{t}\mathrm{d}\tau\, 
	\E_s(t,\tau)\Big\rvert_{s=0}
	\big[-iH(\tau)\big]
	\E_s(\tau,0)\Big\rvert_{s=0}
	=-i\int_{0}^{t}\mathrm{d}\tau\, H(\tau).
	\end{equation}
	Thus, the first derivatives of $\CE_s(t,0)(\rho)$ and $\CU_s(t,0)(\rho)$ at $s=0$ agree with each other:
	\begin{equation}
	\begin{aligned}
	\frac{\mathrm{d}}{\mathrm{d}s}\CE_s(t,0)(\rho)\Big\rvert_{s=0}
	&=\bigg[-i\int_{0}^{t}\mathrm{d}\tau\, H(\tau),\rho\bigg]\\
	&=\int_{0}^{t}\mathrm{d}\tau\, p(\tau)\bigg[-i\frac{H(\tau)}{p(\tau)},\rho\bigg]
	=\frac{\mathrm{d}}{\mathrm{d}s}\CU_s(t,0)(\rho)\Big\rvert_{s=0}.
	\end{aligned}
	\end{equation}
	Applying the fundamental theorem of calculus twice, we obtain
	\begin{equation}
	\begin{aligned}
	\CE_1(t,0)(\rho)-\CU_1(t,0)(\rho)
	&=\big(\CE_1(t,0)(\rho)-\CE_0(t,0)(\rho)\big)-\big(\CU_1(t,0)(\rho)-\CU_0(t,0)(\rho)\big)\\
	&=\int_{0}^{1}\mathrm{d}s\int_{0}^{s}\mathrm{d}v\ \frac{\mathrm{d}^2}{\mathrm{d}v^2}\big[\CE_v(t,0)(\rho)-\CU_v(t,0)(\rho)\big] \\
	&=\int_{0}^{1}\mathrm{d}s\int_{0}^{s}\mathrm{d}v\ 
	\Bigg\{\frac{\mathrm{d}^2}{\mathrm{d}v^2}\E_v(t,0)\cdot\rho\cdot \E_v^\dagger(t,0)\\
	&\quad+2\frac{\mathrm{d}}{\mathrm{d}v}\E_v(t,0)\cdot\rho\cdot \frac{\mathrm{d}}{\mathrm{d}v}\E_v^\dagger(t,0)
	+\E_v(t,0)\cdot\rho\cdot \frac{\mathrm{d}^2}{\mathrm{d}v^2}\E_v^\dagger(t,0)\\
	&\quad
	-\int_0^t\mathrm{d}\tau\, p(\tau)e^{-iv \frac{H(\tau)}{p(\tau)}}
	\bigg[-i\frac{H(\tau)}{p(\tau)},\bigg[-i\frac{H(\tau)}{p(\tau)},\rho\bigg]\bigg]
	e^{iv \frac{H(\tau)}{p(\tau)}}\Bigg\}.
	\end{aligned}
	\end{equation}
	By properties of the Schatten norms and the definition $p(\tau)=\norm{H(\tau)}_{\infty}/\norm{H}_{\infty,1}$, we find that
	\begin{equation}
	\begin{aligned}
	&\norm{\CE_1(t,0)(\rho)-\CU_1(t,0)(\rho)}_1\\
	\leq&\int_{0}^{1}\mathrm{d}s\int_{0}^{s}\mathrm{d}v\ 
	\bigg\{2\norm{\frac{\mathrm{d}^2}{\mathrm{d}v^2}\E_v(t,0)}_\infty
	+2\norm{\frac{\mathrm{d}}{\mathrm{d}v}\E_v(t,0)}^2_\infty
	+4\norm{H}_{\infty,1}^2\bigg\}.
	\end{aligned}
	\end{equation}
	\lem{scaling} immediately yields an upper bound on $\norm{\frac{\mathrm{d}}{\mathrm{d}v}\E_v(t,0)}_\infty$:
	\begin{equation}
	\norm{\frac{\mathrm{d}}{\mathrm{d}v}\E_v(t,0)}_\infty\leq\int_{0}^{t}\mathrm{d}\tau\norm{H(\tau)}_\infty=\norm{H}_{\infty,1}.
	\end{equation}
	It thus remains to bound $\norm{\frac{\mathrm{d}^2}{\mathrm{d}v^2}\E_v(t,0)}_\infty$.
	
	Using \lem{scaling} twice, we have
	\begin{equation}
	\begin{aligned}
	\frac{\mathrm{d}^2}{\mathrm{d}v^2}\E_v(t,0)
	&=\frac{\mathrm{d}}{\mathrm{d}v}
	\int_{0}^{t}\mathrm{d}\tau\, 
	\E_v(t,\tau)
	\big[-iH(\tau)\big]
	\E_v(\tau,0)\\
	&=\int_{0}^{t}\mathrm{d}\tau
	\int_{\tau}^{t}\mathrm{d}\tau'\ 
	\E_v(t,\tau')
	\big[-iH(\tau')\big]
	\E_v(\tau',\tau)
	\big[-iH(\tau)\big]
	\E_v(\tau,0)\\
	&\quad +\int_{0}^{t}\mathrm{d}\tau\, 
	\E_v(t,\tau)
	\big[-iH(\tau)\big]
	\int_{0}^{\tau}\mathrm{d}\tau'\ 
	\E_v(\tau,\tau')
	\big[-iH(\tau')\big]
	\E_v(\tau',0),
	\end{aligned}
	\end{equation}
	which implies
	\begin{equation}
	\begin{aligned}
	\norm{\frac{\mathrm{d}^2}{\mathrm{d}v^2}\E_v(\tau,0)}_\infty
	&\leq\int_{0}^{t}\mathrm{d}\tau
	\int_{\tau}^{t}\mathrm{d}\tau'
	\norm{H(\tau')}_\infty\norm{H(\tau)}_\infty
	+\int_{0}^{t}\mathrm{d}\tau
	\int_{0}^{\tau}\mathrm{d}\tau'
	\norm{H(\tau)}_\infty\norm{H(\tau')}_\infty\\
	&=\norm{H}_{\infty,1}^2.
	\end{aligned}
	\end{equation}
	We finally obtain the desired bound
	\begin{equation}
	\norm{\CE_1(t,0)(\rho)-\CU_1(t,0)(\rho)}_1
	\leq\int_{0}^{1}\mathrm{d}s\int_{0}^{s}\mathrm{d}v\ 
	\biggl[2\norm{H}_{\infty,1}^2
	+2\norm{H}_{\infty,1}^2
	+4\norm{H}_{\infty,1}^2\biggr]
	=4\norm{H}_{\infty,1}^2
	\end{equation}
as claimed.
\end{proof}

The above error bound works well for a short-time evolution. When $t$ is large, in order to control the error of simulation, we divide the entire evolution into segments $[t_j,t_{j+1}]$ with $0=t_0<t_1<\cdots<t_r=t$ and apply continuous qDRIFT within each. We employ a variable-time scheme to segment the evolution, so that our $L^1$-norm scaling result can be generalized to a long-time evolution. Specifically, we have:

\begin{theorem}[$L^1$-norm error bound for continuous qDRIFT, long-time version]
	\label{thm:qdrift_long_time}
	Let $H(\tau)$ be a time-dependent Hamiltonian defined for $0\leq\tau\leq t$. Assume that it is continuous except at a finite number of points and nonzero everywhere. Define $\E(t,0)=\exp_{\mathcal{T}}\big(-i\int_{0}^{t}\mathrm{d}\tau\, H(\tau)\big)$ and let $\CE(t,0)(\cdot)=\E(t,0)(\cdot)\E^\dagger(t,0)$ be the corresponding quantum channel. Let $\CU(t,0)$ be the continuous qDRIFT channel
	\begin{equation}
	\CU(t,0)(\rho)=
	\int_0^t\mathrm{d}\tau\, p(\tau)e^{-i \frac{H(\tau)}{p(\tau)}}\rho e^{i \frac{H(\tau)}{p(\tau)}},
	\end{equation}
	where $p(\tau)=\norm{H(\tau)}_{\infty}/\norm{H}_{\infty,1}$. Then, for any positive integer $r$, there exists a division $0=t_0<t_1<\cdots<t_r=t$, such that
	\begin{equation}
		\norm{\CE(t,0)-\prod_{j=0}^{r-1}\CU(t_{j+1},t_j)}_\diamond\leq 4\frac{\norm{H}_{\infty,1}^2}{r}.
	\end{equation}
	To ensure that the simulation error is at most $\epsilon$, it thus suffices to choose
	\begin{equation}
	\label{eq:qdrift_gate_complexity}
	r\geq 4\bigg\lceil\frac{\norm{H}_{\infty,1}^2}{\epsilon}\bigg\rceil.
	\end{equation}
\end{theorem}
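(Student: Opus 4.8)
The plan is to reduce the long-time statement to the short-time bound of \thm{qdrift_short_time} by splitting the evolution into $r$ segments and controlling the accumulated error with a telescoping argument. First I would invoke the multiplicative property of the time-ordered evolution, which lifts to the channel identity $\CE(t,0)=\CE(t_r,t_{r-1})\circ\cdots\circ\CE(t_1,t_0)$ for any division $0=t_0<\cdots<t_r=t$; the continuous qDRIFT approximation is the analogous composition $\prod_{j=0}^{r-1}\CU(t_{j+1},t_j)$. Since each $\CE(t_{j+1},t_j)$ is a unitary channel and each $\CU(t_{j+1},t_j)$ is a mixed-unitary channel, both are completely positive and trace preserving, so each has unit diamond norm. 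Using submultiplicativity of the diamond norm under composition together with a standard hybrid (telescoping) expansion, the total error is subadditive:
\begin{equation}
\norm{\CE(t,0)-\prod_{j=0}^{r-1}\CU(t_{j+1},t_j)}_\diamond
\leq\sum_{j=0}^{r-1}\norm{\CE(t_{j+1},t_j)-\CU(t_{j+1},t_j)}_\diamond.
\end{equation}

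Next I would apply \thm{qdrift_short_time} to each segment $[t_j,t_{j+1}]$, giving $\norm{\CE(t_{j+1},t_j)-\CU(t_{j+1},t_j)}_\diamond\leq 4\bigl(\int_{t_j}^{t_{j+1}}\mathrm{d}\tau\,\norm{H(\tau)}_\infty\bigr)^2$. The per-segment errors scale with the \emph{square} of the integrated norm on that segment, so to minimize their sum for a fixed total $\norm{H}_{\infty,1}$ I would choose the division so that each segment carries an equal share of the integrated norm; this is the variable-time scheme. Concretely, set $F(s):=\int_0^s\mathrm{d}\tau\,\norm{H(\tau)}_\infty$. Because $H(\tau)$ is nonzero everywhere and continuous except at finitely many points, $F$ is continuous and strictly increasing with $F(0)=0$ and $F(t)=\norm{H}_{\infty,1}$, so the intermediate value theorem produces points $t_j:=F^{-1}(j\norm{H}_{\infty,1}/r)$ with $\int_{t_j}^{t_{j+1}}\mathrm{d}\tau\,\norm{H(\tau)}_\infty=\norm{H}_{\infty,1}/r$ for every $j$. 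Summing the $r$ identical contributions then yields $\sum_{j=0}^{r-1}4(\norm{H}_{\infty,1}/r)^2=4\norm{H}_{\infty,1}^2/r$, the claimed bound. The complexity statement follows by demanding $4\norm{H}_{\infty,1}^2/r\leq\epsilon$ and taking $r=4\lceil\norm{H}_{\infty,1}^2/\epsilon\rceil$, which satisfies $r\geq 4\norm{H}_{\infty,1}^2/\epsilon$.

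The main obstacle is the first step: justifying that the diamond-norm error of a composition of channels is at most the sum of the per-factor diamond-norm errors. This requires the telescoping identity $\prod_j\mathcal{A}_j-\prod_j\mathcal{B}_j=\sum_k(\prod_{j>k}\mathcal{A}_j)(\mathcal{A}_k-\mathcal{B}_k)(\prod_{j<k}\mathcal{B}_j)$ together with the facts that the diamond norm is submultiplicative under composition and that channels have unit diamond norm, so that pre- and post-composing the difference $\mathcal{A}_k-\mathcal{B}_k$ by channels cannot increase its diamond norm. A secondary point to handle carefully is \emph{why} the variable-time partition is needed: with $a_j:=\int_{t_j}^{t_{j+1}}\mathrm{d}\tau\,\norm{H(\tau)}_\infty$, a uniform-in-time partition would give $\sum_j 4a_j^2$, and since $\sum_j a_j^2$ subject to $\sum_j a_j=\norm{H}_{\infty,1}$ is minimized exactly when all $a_j$ are equal, it is the equal-integrated-norm choice that attains $4\norm{H}_{\infty,1}^2/r$ rather than a larger quantity.
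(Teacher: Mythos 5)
Your proposal is correct and follows essentially the same route as the paper's proof: an equal-integrated-norm partition obtained from the intermediate value theorem, a telescoping (hybrid) bound that makes the diamond-norm error subadditive over segments, and an application of \thm{qdrift_short_time} to each segment. The only difference is that you spell out the subadditivity step (submultiplicativity of the diamond norm plus unit norm of channels) and the optimality of the equal-share partition, both of which the paper leaves implicit.
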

\begin{proof}
	The times $t_1,\cdots,t_{r-1}$ are selected as follows. We aim to simulate with accuracy
	\begin{equation}
	4\frac{\norm{H}_{\infty,1}^2}{r^2}
	\end{equation}
	for each segment. To achieve this, we define $t_1,\cdots,t_{r-1}$ so that
	\begin{equation}
	\label{eq:var_time}
	\int_{0}^{t_1}\mathrm{d}\tau\norm{H(\tau)}_\infty
	=\int_{t_1}^{t_2}\mathrm{d}\tau\norm{H(\tau)}_\infty
	=\cdots
	=\int_{t_{r-1}}^{t_r}\mathrm{d}\tau\norm{H(\tau)}_\infty
	=\frac{1}{r}\int_{0}^{t}\mathrm{d}\tau\norm{H(\tau)}_\infty.
	\end{equation}
	The existence of such times is guaranteed by the intermediate value theorem. By telescoping, we find from \thm{qdrift_short_time} that
	\begin{equation}
	\begin{aligned}
	\norm{\CE(t,0)-\prod_{j=0}^{r-1}\CU(t_{j+1},t_j)}_\diamond
	&\leq\sum_{j=0}^{r-1}\norm{\CU(t_{j+1},t_j)-\CE(t_{j+1},t_j)}_\diamond\\
	&\leq\sum_{j=0}^{r-1}4\bigg(\int_{t_{j}}^{t_{j+1}}\mathrm{d}\tau\norm{H(\tau)}_\infty\bigg)^2\\
	&=4r\bigg(\frac{1}{r}\int_{0}^{t}\mathrm{d}\tau\norm{H(\tau)}_\infty\bigg)^2
	=4\frac{\norm{H}_{\infty,1}^2}{r},
	\end{aligned}
	\end{equation}
	which establishes the claimed error bound.
\end{proof}

%%%%%%%%%%%%%%%%%%%%%%%%%%%%%%%%%%%%%%%%%%%%%%%%%%%%%%%%%%%%%%%%%%%%%%%%%%%%%%
\subsection{Universality}
\label{sec:qdrift_universal}

We now show that the continuous qDRIFT method introduced above can be applied in the far more general LC model where the Hamiltonian is a sum of time-dependent terms. In this sense it can be regarded as a universal method.

Recall from \sec{prelim_model} that in the general LC model, the Hamiltonian can be expressed as
\begin{equation}
H(\tau)=\sum_{l=1}^{L}H_l(\tau),
\end{equation}
where each $H_l(\tau)$ is continuous and can be efficiently exponentiated on a quantum computer. This includes many familiar models as special cases:
\begin{enumerate}[(i)]
	\item Campbell considered simulating a time-independent Hamiltonian of the form $H=\sum_{l=1}^{L}\alpha_lH_l$, $\norm{H_l}_\infty\leq 1$ \cite{Campbell18}, which is subsumed by the LC model with the time dependence dropped;
	\item if $H(\tau)$ is a time-dependent $d$-sparse Hamiltonian, then \lem{sparse_decomp} shows that it can be decomposed in the form $H(\tau)=\sum_{j=1}^{d^2}H_j(\tau)$, which again belongs to the LC model as the exponentiation of $H_j(\tau)$ can be performed efficiently; and
	\item the LC model is naturally more general than LCU as each summand is not necessarily unitary.
\end{enumerate}

It is not hard to design a classical sampler for time-dependent Hamiltonians in the LC model. A natural choice is
\begin{equation}
\CU(t,0)(\rho):=\sum_{l=1}^{L}\int_0^t\mathrm{d}\tau\, p_l(\tau)e^{-i \frac{H_l(\tau)}{p_l(\tau)}}\rho e^{i \frac{H_l(\tau)}{p_l(\tau)}},
\end{equation}
where $p_l(\tau)$ is the probability distribution
\begin{equation}
p_l(\tau):=\frac{\norm{H_l(\tau)}_\infty}{\norm{H}_{\infty,1,1}}.
\end{equation}
To analyze the performance of this sampler, we adapt the analysis in \thm{qdrift_short_time} and \thm{qdrift_long_time}, which becomes more complicated as we are now sampling a discrete-continuous probability distribution $p_l(\tau)$. Fortunately, a significant amount of effort can be saved with the help of the following universal property.

\begin{theorem}[Universality of continuous qDRIFT]
	\label{thm:universality}
	Let $H(\tau)=\sum_{l=1}^{L}H_l(\tau)$ be a time-dependent Hamiltonian defined for $0\leq\tau\leq t$ that is nonzero everywhere. Assume that each $H_l(\tau)$ is continuous and nonzero everywhere. Define the probability distribution
	\begin{equation}
	p_l(\tau):=\frac{\norm{H_l(\tau)}_\infty}{\norm{H}_{\infty,1,1}}.
	\end{equation}
	Then there exists a time-dependent Hamiltonian $G(\tau)$ defined for $0\leq\tau\leq t$ with finitely many discontinuities, such that the following correspondence holds:
	\begin{enumerate}
		\item $\norm{G}_{\infty,1}=\norm{H}_{\infty,1,1}$.
		\item $\int_{0}^{t}\mathrm{d}\tau\, G(\tau)=\sum_{l=1}^{L}\int_{0}^{t}\mathrm{d}\tau\, H_l(\tau)$.
		\item $\int_0^t\mathrm{d}\tau\, q(\tau)e^{-i \frac{G(\tau)}{q(\tau)}}\rho e^{i \frac{G(\tau)}{q(\tau)}}
		=\sum_{l=1}^{L}\int_0^t\mathrm{d}\tau\, p_l(\tau)e^{-i \frac{H_l(\tau)}{p_l(\tau)}}\rho e^{i \frac{H_l(\tau)}{p_l(\tau)}}$, where we have the probability distribution $q(\tau):=\norm{G(\tau)}_\infty/\norm{G}_{\infty,1}$.
	\end{enumerate}
\end{theorem}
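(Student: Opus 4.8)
The plan is to ``flatten'' the discrete-continuous sampler that draws a pair $(l,\tau)$ according to $p_l(\tau)$ into a single continuous sampler over $[0,t]$, by concatenating suitably reparametrized copies of the individual terms $H_l$. Concretely, I would partition $[0,t]$ into $L$ consecutive blocks $0=s_0<s_1<\cdots<s_L=t$ and, on the $l$-th block $[s_{l-1},s_l]$, let $G$ replay the entire history of $H_l$ via an increasing affine bijection $\psi_l\colon[s_{l-1},s_l]\to[0,t]$. The crucial point is the scaling: I would define
\begin{equation}
  G(\sigma):=\psi_l'(\sigma)\,H_l\bigl(\psi_l(\sigma)\bigr),\qquad \sigma\in[s_{l-1},s_l],
\end{equation}
where the Jacobian factor $\psi_l'(\sigma)>0$ is exactly what is needed to make the three claimed correspondences hold at once. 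Since each $H_l$ is continuous and nonzero and $\psi_l'>0$, the operator $G$ is Hermitian, nonzero everywhere, and continuous on the interior of each block, with possible jumps only at the $L-1$ interior partition points; hence it is defined on $[0,t]$ with finitely many discontinuities, as required.

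With this definition $\norm{G(\sigma)}_\infty=\psi_l'(\sigma)\norm{H_l(\psi_l(\sigma))}_\infty$, so the change of variables $\tau=\psi_l(\sigma)$, $\mathrm{d}\tau=\psi_l'(\sigma)\,\mathrm{d}\sigma$ gives $\int_{s_{l-1}}^{s_l}\mathrm{d}\sigma\,\norm{G(\sigma)}_\infty=\int_0^t\mathrm{d}\tau\,\norm{H_l(\tau)}_\infty$. Summing over $l$ establishes property~1, namely $\norm{G}_{\infty,1}=\sum_{l}\int_0^t\mathrm{d}\tau\,\norm{H_l(\tau)}_\infty=\norm{H}_{\infty,1,1}$. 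The same change of variables applied to $G$ itself yields $\int_{s_{l-1}}^{s_l}\mathrm{d}\sigma\,G(\sigma)=\int_0^t\mathrm{d}\tau\,H_l(\tau)$, and summation over $l$ gives property~2.

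For property~3 I would check the block-$l$ contribution to the continuous qDRIFT channel of $G$ against the $l$-th term on the right-hand side. The Jacobian cancels in the normalized direction, $G(\sigma)/\norm{G(\sigma)}_\infty=H_l(\psi_l(\sigma))/\norm{H_l(\psi_l(\sigma))}_\infty$, so using $\norm{G}_{\infty,1}=\norm{H}_{\infty,1,1}$ one obtains the pointwise identity of exponents $G(\sigma)/q(\sigma)=H_l(\psi_l(\sigma))/p_l(\psi_l(\sigma))$, while the sampling weight transforms as $q(\sigma)\,\mathrm{d}\sigma=p_l(\tau)\,\mathrm{d}\tau$ under $\tau=\psi_l(\sigma)$. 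Substituting both into $\int_{s_{l-1}}^{s_l}\mathrm{d}\sigma\,q(\sigma)e^{-iG(\sigma)/q(\sigma)}\rho\,e^{iG(\sigma)/q(\sigma)}$ and changing variables reproduces $\int_0^t\mathrm{d}\tau\,p_l(\tau)e^{-iH_l(\tau)/p_l(\tau)}\rho\,e^{iH_l(\tau)/p_l(\tau)}$; summation over $l$ gives property~3.

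The construction and verifications are elementary once the scaling is chosen correctly; the only genuinely delicate point is pinning down the Jacobian factor $\psi_l'$ in the definition of $G$ and confirming that the norm identity $\norm{G}_{\infty,1}=\norm{H}_{\infty,1,1}$ is self-consistent with the definition $q(\sigma)=\norm{G(\sigma)}_\infty/\norm{G}_{\infty,1}$. Everything else reduces to a single change of variables per block, together with the observation that the block boundaries contribute only finitely many discontinuities.
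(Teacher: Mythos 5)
Your proposal is correct and is essentially the paper's own proof: the paper defines $G$ piecewise on blocks of length $p_l t$ (with $p_l=\int_0^t\mathrm{d}\tau\,p_l(\tau)$ the marginal of choosing index $l$), on which $G(\sigma)=H_l(\psi_l(\sigma))/p_l=\psi_l'(\sigma)H_l(\psi_l(\sigma))$, i.e., exactly your Jacobian-weighted replay, and verifies all three properties by the same per-block change of variables. Your observation that the block endpoints need not be the cumulative marginals is a mild (and valid) generalization, but does not change the substance of the argument.
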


Before presenting the proof, we explain how \thm{universality} can be applied to simulation in the LC model. We expect that the mixed-unitary channel $\sum_{l=1}^{L}\int_0^t\mathrm{d}\tau\, p_l(\tau)e^{-i \frac{H_l(\tau)}{p_l(\tau)}}\rho e^{i \frac{H_l(\tau)}{p_l(\tau)}}$ approximates the ideal evolution with $L^1$-norm scaling as in \thm{qdrift_short_time} and \thm{qdrift_long_time}, but direct analysis would be considerably more complicated. However, universality (Statement 3 of \thm{universality}) shows that this channel is the same as $\int_0^t\mathrm{d}\tau\, q(\tau)e^{-i \frac{G(\tau)}{q(\tau)}}\rho e^{i \frac{G(\tau)}{q(\tau)}}$. Thus, the analysis of \sec{qdrift_sampler} can be applied with the help of \thm{universality}.

\begin{proof}[Proof of {\thm{universality}}]
	We define $G(\tau)$ to be the piecewise Hamiltonian
	\begin{equation}
	\label{eq:piecewise}
	G(\tau)=
	\begin{cases}
	\frac{H_1\big(\frac{\tau}{p_1}\big)}{p_1},\quad &0\leq\tau< p_1 t,\\
	\frac{H_2\big(\frac{\tau-p_1t}{p_2}\big)}{p_2},&p_1t\leq\tau< (p_1+p_2)t,\\
	\qquad\vdots\\
	\frac{H_L\big(\frac{\tau-(p_1+p_2+\cdots+p_{L-1})t}{p_L}\big)}{p_L},&(p_1+p_2+\cdots+p_{L-1})t\leq\tau\leq t,
	\end{cases}
	\end{equation}
	where we use the abbreviation
	\begin{equation}
	p_l:=\norm{p_l}_1=\int_{0}^{t}\mathrm{d}\tau\, p_l(\tau)
	\end{equation}
	for the marginal probability distribution. Statements $1$ and $2$ can both be proved by directly evaluating the integrals
	\begin{equation}
	\begin{aligned}
	\norm{G}_{\infty,1}
	=&\int_{0}^{p_1t}\mathrm{d}\tau\frac{\norm{H_1\big(\frac{\tau}{p_1}\big)}_\infty}{p_1}
	+\int_{p_1t}^{(p_1+p_2)t}\mathrm{d}\tau\frac{\norm{H_2\big(\frac{\tau-p_1t}{p_2}\big)}_\infty}{p_2}\\
	&+\cdots+\int_{(p_1+p_2+\cdots+p_{L-1})t}^{t}\mathrm{d}\tau\frac{\norm{H_L\big(\frac{\tau-(p_1+p_2+\cdots+p_{L-1})t}{p_L}\big)}_\infty}{p_L}\\
	=&\int_{0}^{t}\mathrm{d}\tau\norm{H_1(\tau)}_\infty
	+\int_{0}^{t}\mathrm{d}\tau\norm{H_2(\tau)}_\infty
	+\cdots
	+\int_{0}^{t}\mathrm{d}\tau\norm{H_L(\tau)}_\infty
	=\norm{H}_{\infty,1,1}
	\end{aligned}
	\end{equation}
	and
	\begin{equation}
	\begin{aligned}
	\int_{0}^{t}\mathrm{d}\tau\, G(\tau)
	=&\int_{0}^{p_1t}\mathrm{d}\tau\frac{H_1\big(\frac{\tau}{p_1}\big)}{p_1}+\int_{p_1t}^{(p_1+p_2)t}\mathrm{d}\tau\frac{H_2\big(\frac{\tau-p_1t}{p_2}\big)}{p_2}\\
	&+\cdots+\int_{(p_1+p_2+\cdots+p_{L-1})t}^{t}\mathrm{d}\tau\frac{H_L\big(\frac{\tau-(p_1+p_2+\cdots+p_{L-1})t}{p_L}\big)}{p_L}
	=\sum_{l=1}^{L}\int_{0}^{t}\mathrm{d}\tau\, H_l(\tau).
	\end{aligned}
	\end{equation}
	
	We use Statement $1$ to deduce that
	\begin{equation}
	q(\tau)=\frac{\norm{G(\tau)}_\infty}{\norm{G(\tau)}_{\infty,1}}=
	\begin{cases}
	\frac{\norm{H_1\big(\frac{\tau}{p_1}\big)}_\infty}{p_1\norm{H}_{\infty,1,1}},\quad &0\leq\tau< p_1 t,\\
	\frac{\norm{H_2\big(\frac{\tau-p_1t}{p_2}\big)}_\infty}{p_2\norm{H}_{\infty,1,1}},&p_1t\leq\tau< (p_1+p_2)t,\\
	\qquad\vdots\\
	\frac{\norm{H_L\big(\frac{\tau-(p_1+p_2+\cdots+p_{L-1})t}{p_L}\big)}_\infty}{p_L\norm{H}_{\infty,1,1}},&(p_1+p_2+\cdots+p_{L-1})t\leq\tau\leq t.
	\end{cases}
	\end{equation}
	Therefore,
	\begin{align}
	&\int_{0}^{t}\mathrm{d}\tau\, q(\tau)e^{-i \frac{G(\tau)}{q(\tau)}}\rho e^{i \frac{G(\tau)}{q(\tau)}}\nonumber\\
	&=\int_{0}^{p_1t}\mathrm{d}\tau \frac{\norm{H_1\big(\frac{\tau}{p_1}\big)}_\infty}{p_1\norm{H}_{\infty,1,1}}
	\exp\Bigg(-i \frac{H_1\big(\frac{\tau}{p_1}\big)}{\norm{H_1\big(\frac{\tau}{p_1}\big)}_\infty}\norm{H}_{\infty,1,1}\Bigg)
	\rho
	\exp\Bigg(i \frac{H_1\big(\frac{\tau}{p_1}\big)}{\norm{H_1\big(\frac{\tau}{p_1}\big)}_\infty}\norm{H}_{\infty,1,1}\Bigg)\nonumber\\
	&+\int_{p_1t}^{(p_1+p_2)t}\mathrm{d}\tau \frac{\norm{H_2\big(\frac{\tau-p_1t}{p_2}\big)}_\infty}{p_2\norm{H}_{\infty,1,1}}
	\exp\Bigg(-i\frac{H_2\big(\frac{\tau-p_1t}{p_2}\big)}{\norm{H_2\big(\frac{\tau-p_1t}{p_2}\big)}_\infty}\norm{H}_{\infty,1,1}\Bigg)
	\rho
	\exp\Bigg(i\frac{H_2\big(\frac{\tau-p_1t}{p_2}\big)}{\norm{H_2\big(\frac{\tau-p_1t}{p_2}\big)}_\infty}\norm{H}_{\infty,1,1}\Bigg)\nonumber\\
	&+\cdots+\int_{(p_1+p_2+\cdots+p_{L-1})t}^{t}\mathrm{d}\tau \frac{\norm{H_L\big(\frac{\tau-(p_1+p_2+\cdots+p_{L-1})t}{p_L}\big)}_\infty}{p_L\norm{H}_{\infty,1,1}}\nonumber\\
	&\cdot
	\exp\Bigg(-i\frac{H_L\big(\frac{\tau-(p_1+p_2+\cdots+p_{L-1})t}{p_L}\big)}{\norm{H_L\big(\frac{\tau-(p_1+p_2+\cdots+p_{L-1})t}{p_L}\big)}_\infty}\norm{H}_{\infty,1,1}\Bigg)
	\rho
	\exp\Bigg(i\frac{H_L\big(\frac{\tau-(p_1+p_2+\cdots+p_{L-1})t}{p_L}\big)}{\norm{H_L\big(\frac{\tau-(p_1+p_2+\cdots+p_{L-1})t}{p_L}\big)}_\infty}\norm{H}_{\infty,1,1}\Bigg)\nonumber\\
	&=\sum_{l=1}^{L}\int_0^t\mathrm{d}\tau\, p_l(\tau)e^{-i \frac{H_l(\tau)}{p_l(\tau)}}\rho e^{i \frac{H_l(\tau)}{p_l(\tau)}},
	\end{align}
	which completes the proof of Statement $3$.
\end{proof}

\begin{thmbis}{thm:qdrift_short_time}[$L^1$-norm error bound for continuous qDRIFT (LC), short-time version]
	\label{thm:qdrift_lc_short_time}
	Let $H(\tau)=\sum_{l=1}^{L}H_l(\tau)$ be a time-dependent Hamiltonian defined for $0\leq\tau\leq t$ that is nonzero everywhere. Assume that each $H_l(\tau)$ is continuous and nonzero everywhere. Define $\E(t,0)=\exp_{\mathcal{T}}\big(-i\int_{0}^{t}\mathrm{d}\tau\, H(\tau)\big)$ and let $\CE(t,0)(\cdot)=\E(t,0)(\cdot)\E^\dagger(t,0)$ be the corresponding quantum channel. Let $\CU(t,0)$ be the continuous qDRIFT channel
	\begin{equation}
	\CU(t,0)(\rho):=\sum_{l=1}^{L}\int_0^t\mathrm{d}\tau\, p_l(\tau)e^{-i \frac{H_l(\tau)}{p_l(\tau)}}\rho e^{i \frac{H_l(\tau)}{p_l(\tau)}},
	\end{equation}
	where $p_l(\tau)$ is the probability distribution $p_l(\tau):=\norm{H_l(\tau)}_\infty/\norm{H}_{\infty,1,1}$. Then,
	\begin{equation}
	\norm{\CE(t,0)-\CU(t,0)}_\diamond\leq 4\norm{H}_{\infty,1,1}^2.
	\end{equation}
\end{thmbis}

In the special case where $H=\sum_{l=1}^{L}H_l$ is time independent, our bound reduces to
\begin{equation}
	\norm{\CE(t,0)-\CU(t,0)}_\diamond\leq 4\bigg(\sum_l \norm{H_l}_\infty \bigg)^2t^2.
\end{equation}
This tightens a bound due to Campbell \cite[Eq.~(B12)]{Campbell18} by a multiplicative factor from a tail bound. Note that \cite{Campbell18} considered the distance $\norm{\CE(t,0)-\CU(t,0)}_\diamond/2$, which is different from our definition of the diamond-norm distance $\norm{\CE(t,0)-\CU(t,0)}_\diamond$.

\begin{proof}[Proof of {\thm{qdrift_lc_short_time}}]
	Consider the channel
	\begin{equation}
		\CG(t,0)(\rho):=\int_0^t\mathrm{d}\tau\, q(\tau)e^{-i \frac{G(\tau)}{q(\tau)}}\rho e^{i \frac{G(\tau)}{q(\tau)}},
	\end{equation}
	where $q(\tau):=\norm{G(\tau)}_\infty/\norm{G}_{\infty,1}$ and $G(\tau)$ is defined by \eq{piecewise}. 
    By \thm{universality}, it suffices to bound $\norm{\CE(t,0)-\CG(t,0)}_\diamond$.
	
	Define two parametrized quantum channels
	\begin{equation}
	\CE_s(t,0)(\rho)=\E_s(t,0)\rho \E_s^\dagger(t,0),\qquad
	\CG_s(t,0)(\rho)=\int_0^t\mathrm{d}\tau\, q(\tau)e^{-i s\frac{G(\tau)}{q(\tau)}}\rho e^{i s\frac{G(\tau)}{q(\tau)}}
	\end{equation}
	and observe that
	\begin{equation}
	\CE_0(t,0)(\rho)=\rho\qquad
	\CE_1(t,0)(\rho)=\CE(t,0)(\rho)\qquad
	\CG_0(t,0)(\rho)=\rho\qquad
	\CG_1(t,0)(\rho)=\CG(t,0)(\rho).
	\end{equation}
	For readability, we only consider the trace norm $\norm{\CE_1(t,0)(\rho)-\CG_1(t,0)(\rho)}_1$, whose analysis can be easily adapted to bound $\norm{(\CE_1(t,0)\otimes\mathds{1})(\sigma)-(\CG_1(t,0)\otimes\mathds{1})(\sigma)}_1$ and thus the diamond-norm distance $\norm{\CE_1(t,0)-\CG_1(t,0)}_\diamond$.
	
	By \lem{scaling} and \thm{universality}, we find that the first derivatives of $\CE_s(t,0)(\rho)$ and $\CG_s(t,0)(\rho)$ at $s=0$ agree with each other:
	\begin{equation}
	\begin{aligned}
	\frac{\mathrm{d}}{\mathrm{d}s}\CE_s(t,0)(\rho)\Big\rvert_{s=0}
	=\bigg[-i\int_{0}^{t}\mathrm{d}\tau\, H(\tau),\rho\bigg]
	=\bigg[-i\int_0^t\mathrm{d}\tau\, G(\tau),\rho\bigg]
	=\frac{\mathrm{d}}{\mathrm{d}s}\CG_s(t,0)(\rho)\Big\rvert_{s=0}.
	\end{aligned}
	\end{equation}
	Thus, we can apply the fundamental theorem of calculus twice and obtain
	\begin{equation}
	\begin{aligned}
	&\ \CE_1(t,0)(\rho)-\CG_1(t,0)(\rho)\\
	=&\ \big(\CE_1(t,0)(\rho)-\CE_0(t,0)(\rho)\big)-\big(\CG_1(t,0)(\rho)-\CG_0(t,0)(\rho)\big)\\
	=&\ \int_{0}^{1}\mathrm{d}s\int_{0}^{s}\mathrm{d}v\ \frac{\mathrm{d}^2}{\mathrm{d}v^2}\big[\CE_v(t,0)(\rho)-\CG_v(t,0)(\rho)\big] \\
	=&\ \int_{0}^{1}\mathrm{d}s\int_{0}^{s}\mathrm{d}v\ 
	\Bigg\{\frac{\mathrm{d}^2}{\mathrm{d}v^2}\E_v(t,0)\cdot\rho\cdot \E_v^\dagger(t,0)\\
	&\qquad\qquad\qquad+2\frac{\mathrm{d}}{\mathrm{d}v}\E_v(t,0)\cdot\rho\cdot \frac{\mathrm{d}}{\mathrm{d}v}\E_v^\dagger(t,0)\\
	&\qquad\qquad\qquad+\E_v(t,0)\cdot\rho\cdot \frac{\mathrm{d}^2}{\mathrm{d}v^2}\E_v^\dagger(t,0)\\
	&\qquad\qquad\qquad
	-\int_0^t\mathrm{d}\tau\, q(\tau)e^{-iv \frac{G(\tau)}{q(\tau)}}
	\bigg[-i\frac{G(\tau)}{q(\tau)},\bigg[-i\frac{G(\tau)}{q(\tau)},\rho\bigg]\bigg]
	e^{iv \frac{G(\tau)}{q(\tau)}}\Bigg\},
	\end{aligned}
	\end{equation}
	which implies
	\begin{equation}
	\begin{aligned}
	&\norm{\CE_1(t,0)(\rho)-\CG_1(t,0)(\rho)}_1\\
	\leq&\int_{0}^{1}\mathrm{d}s\int_{0}^{s}\mathrm{d}v\ 
	\bigg\{2\norm{H}_{\infty,1,1}^2
	+2\norm{H}_{\infty,1,1}^2
	+4\norm{G}_{\infty,1}^2\bigg\}
	=4\norm{H}_{\infty,1,1}^2.
	\end{aligned}
	\end{equation}
\end{proof}

\begin{thmbis}{thm:qdrift_long_time}[$L^1$-norm error bound for continuous qDRIFT (LC), long-time version]
	\label{thm:qdrift_lc_long_time}
	Let $H(\tau)=\sum_{l=1}^{L}H_l(\tau)$ be a time-dependent Hamiltonian defined for $0\leq\tau\leq t$ that is nonzero everywhere. Assume that each $H_l(\tau)$ is continuous and nonzero everywhere. Define $\E(t,0)=\exp_{\mathcal{T}}\big(-i\int_{0}^{t}\mathrm{d}\tau\, H(\tau)\big)$ and let $\CE(t,0)(\cdot)=\E(t,0)(\cdot)\E^\dagger(t,0)$ be the corresponding quantum channel. Let $\CU(t,0)$ be the continuous qDRIFT channel
	\begin{equation}
	\CU(t,0)(\rho):=\sum_{l=1}^{L}\int_0^t\mathrm{d}\tau\, p_l(\tau)e^{-i \frac{H_l(\tau)}{p_l(\tau)}}\rho e^{i \frac{H_l(\tau)}{p_l(\tau)}},
	\end{equation}
	where $p_l(\tau)$ is the probability distribution $p_l(\tau):=\norm{H_l(\tau)}_\infty/\norm{H}_{\infty,1,1}$. Then, for any positive integer $r$, there exists a division $0=t_0<t_1<\cdots<t_r=t$, such that
	\begin{equation}
	\norm{\CE(t,0)-\prod_{j=0}^{r-1}\CU(t_{j+1},t_j)}_\diamond\leq 4\frac{\norm{H}_{\infty,1,1}^2}{r}.
	\end{equation}
	To ensure that the simulation error is at most $\epsilon$, it thus suffices to choose
	\begin{equation}
	r\geq 4\bigg\lceil\frac{\norm{H}_{\infty,1,1}^2}{\epsilon}\bigg\rceil.
	\end{equation}
\end{thmbis}

The proof of \thm{qdrift_lc_long_time} follows from~\thm{qdrift_lc_short_time} using the same reasoning as that used to prove \thm{qdrift_long_time}.

%%%%%%%%%%%%%%%%%%%%%%%%%%%%%%%%%%%%%%%%%%%%%%%%%%%%%%%%%%%%%%%%%%%%%%%%%%%%%%
\subsection{Complexity of the continuous qDRIFT algorithm}
\label{sec:qdrift_complexity}

As an immediate consequence of universality, we obtain the complexity of the continuous qDRIFT algorithm for simulating time-dependent Hamiltonians in both the SM and the LCU models.
\begin{corollary}[Continuous qDRIFT algorithm with $L^1$-norm scaling (SM)]
	\label{cor:qdrift_sm}
	A $d$-sparse time-dependent Hamiltonian $H(\tau)$ acting on $n$ qubits can be simulated for time $\tau\in[0,t]$ with accuracy $\epsilon$ using
	\begin{equation}
	O\bigg(\frac{d^4\norm{H}_{\max,1}^2}{\epsilon}\bigg)
	\end{equation}
	queries to $\CO_{\text{loc}}$, $\CO_{\text{val}}$ and an additional
	\begin{equation}
	\widetilde{O}\bigg(\frac{d^4\norm{H}_{\max,1}^2}{\epsilon}n\bigg)
	\end{equation}
	gates, assuming that the probability distribution $p_j(\tau):=\norm{H(\tau)}_{\max}/d^2\norm{H}_{\max,1},\ j\in\{1,\ldots,d^2\}$ can be efficiently sampled.
\end{corollary}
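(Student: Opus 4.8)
The plan is to cast the $d$-sparse Hamiltonian into the LC model and then apply the long-time continuous qDRIFT bound \thm{qdrift_lc_long_time}. First I would invoke the exact sparse decomposition in Part~1 of \lem{sparse_decomp}, writing $H(\tau)=\sum_{j=1}^{d^2}H_j(\tau)$ where each $H_j(\tau)$ is $1$-sparse and Hermitian, satisfies $\norm{H_j(\tau)}_{\max}\leq\norm{H(\tau)}_{\max}$, and can be queried using $O(1)$ queries to $\CO_{\text{loc}},\CO_{\text{val}}$. This realizes $H(\tau)$ as an instance of the LC model with $L=d^2$ summands, each continuous and exponentiable.

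The step that links the two norm quantities is the observation that a $1$-sparse Hermitian matrix splits (up to permutation) into $1\times 1$ and $2\times 2$ blocks, so its spectral norm coincides with its max-norm: $\norm{H_j(\tau)}_\infty=\norm{H_j(\tau)}_{\max}\leq\norm{H(\tau)}_{\max}$. Integrating over $\tau$ and summing over $j$ then gives $\norm{H}_{\infty,1,1}=\int_0^t\mathrm{d}\tau\sum_{j=1}^{d^2}\norm{H_j(\tau)}_\infty\leq d^2\norm{H}_{\max,1}$. Because the stated sampling distribution replaces each $\norm{H_j(\tau)}_\infty$ by the uniform upper bound $\norm{H(\tau)}_{\max}$, I would run the upper-bound ($\Lambda$) variant of continuous qDRIFT with $\Lambda_j(\tau)=\norm{H(\tau)}_{\max}$; tracing through the proof of \thm{qdrift_lc_short_time} with these upper bounds in place of the exact norms replaces the error bound by $4\norm{\Lambda}_{1,1}^2=4\big(d^2\norm{H}_{\max,1}\big)^2$, and hence by \thm{qdrift_lc_long_time} the segmented channel achieves diamond-norm error at most $4\big(d^2\norm{H}_{\max,1}\big)^2/r$. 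Choosing $r=4\lceil d^4\norm{H}_{\max,1}^2/\epsilon\rceil=O\big(d^4\norm{H}_{\max,1}^2/\epsilon\big)$ therefore guarantees accuracy $\epsilon$.

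It then remains to tally the cost per segment. Each segment samples one index $j$ according to $p_j(\tau)$ and applies $e^{-iH_j(\tau)/p_j(\tau)}$. Since $H_j(\tau)$ is $1$-sparse, this exponential is implemented by querying $\CO_{\text{loc}}$ to find the single partner column in each row, querying $\CO_{\text{val}}$ to read the matrix element, computing the rotation angle, and applying a controlled $2\times 2$ rotation in the corresponding two-dimensional subspace; this costs $O(1)$ queries and $\widetilde O(n)$ additional gates for the $n$-bit index and value arithmetic. Multiplying by the $O\big(d^4\norm{H}_{\max,1}^2/\epsilon\big)$ segments yields $O\big(d^4\norm{H}_{\max,1}^2/\epsilon\big)$ queries and $\widetilde O\big(d^4\norm{H}_{\max,1}^2 n/\epsilon\big)$ gates, as claimed.

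The hard part will be confirming that the upper-bound variant of continuous qDRIFT inherits the error bound of \thm{qdrift_lc_short_time} with $d^2\norm{H}_{\max,1}$ controlling the error—this follows because $\norm{H_j(\tau)}_\infty^2/\Lambda_j(\tau)\leq\Lambda_j(\tau)$ keeps the double-commutator term bounded by $\norm{\Lambda}_{1,1}^2$—together with the bookkeeping required to sample the discrete-continuous distribution $p_j(\tau)$ and exponentiate the $1$-sparse term within the stated gate budget. The remaining steps are direct substitutions into the bounds already established.
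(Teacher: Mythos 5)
Your proposal is correct and follows essentially the same route as the paper: decompose via \lem{sparse_decomp} into $d^2$ one-sparse terms, use $\norm{H_j(\tau)}_\infty=\norm{H_j(\tau)}_{\max}\leq\norm{H(\tau)}_{\max}$ to bound $\norm{H}_{\infty,1,1}\leq d^2\norm{H}_{\max,1}$, apply \thm{qdrift_lc_long_time} with the upper-bound distribution, and charge $O(1)$ queries plus $\widetilde{O}(n)$ gates per one-sparse exponentiation. Your explicit check that $\norm{H_j(\tau)}_\infty^2/\Lambda_j(\tau)\leq\Lambda_j(\tau)$ controls the double-commutator term is exactly the verification the paper leaves implicit when it asserts that the argument ``still works'' with the upper bound.
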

\begin{proof}
	For any $\tau\in[0,t]$, \lem{sparse_decomp} shows that $H(\tau)$ admits a decomposition $H(\tau)=\sum_{j=1}^{d^2}H_j(\tau)$, where each $H_j(\tau)$ is $1$-sparse and a query to any $H_j(\tau)$ can be simulated with $O(1)$ queries to $H(\tau)$. We use the continuous qDRIFT algorithm to simulate $H(\tau)=\sum_{j=1}^{d^2}H_j(\tau)$. We estimate
	\begin{equation}
	\begin{aligned}
		\norm{H}_{\infty,1,1}
		&=\sum_{j=1}^{d^2}\int_{0}^{t}\mathrm{d}\tau\, \norm{H_j(\tau)}_\infty
		=\sum_{j=1}^{d^2}\int_{0}^{t}\mathrm{d}\tau\, \norm{H_j(\tau)}_{\max}\\
		&\leq d^2\int_{0}^{t}\mathrm{d}\tau\, \norm{H(\tau)}_{\max}
		=d^2\norm{H}_{\max,1},
	\end{aligned}
	\end{equation}
	where the second equality follows because $H_j(\tau)$ is $1$-sparse, and the inequality follows from \lem{sparse_decomp}. Assuming $\norm{H_j(\tau)}_\infty/\norm{H}_{\infty,1,1}$ can be sampled efficiently, \thm{qdrift_lc_long_time} implies that the algorithm has sample complexity and thus query complexity
	\begin{equation}
		O\bigg(\frac{d^4\norm{H}_{\max,1}^2}{\epsilon}\bigg).
	\end{equation}
	
	For each elementary exponentiation, we initialize a quantum register in the computational basis state $|\tau,j\rangle$ and use it to control the $1$-sparse term we need to simulate. This can be done with gate complexity $\widetilde{O}\big(n\big)$. Since the number of $1$-sparse simulations is the query complexity, we obtain the gate complexity
	\begin{equation}
	\widetilde{O}\bigg(\frac{d^4\norm{H}_{\max,1}^2}{\epsilon}n\bigg)
	\end{equation}
	as claimed.
	
	Our above argument assumes that $\norm{H_j(\tau)}_\infty$ is known a priori and that the distribution $\norm{H_j(\tau)}_\infty/\norm{H}_{\infty,1,1}$ can be efficiently sampled. However, the argument still works if we replace each $\norm{H_j(\tau)}_\infty$ by the upper bound
	\begin{equation}
		\norm{H_j(\tau)}_\infty=\norm{H_j(\tau)}_{\max}\leq\norm{H(\tau)}_{\max},
	\end{equation}
	which means we sample the distribution $p_j(\tau):=\norm{H(\tau)}_{\max}/d^2\norm{H}_{\max,1},\ j\in\{1,\ldots,d^2\}$. The claimed query and gate complexities follow from a similar analysis.
\end{proof}

\begin{corbis}{cor:qdrift_sm}[Continuous qDRIFT algorithm with $L^1$-norm scaling (LCU)]
	\label{cor:qdrift_lcu}
	A time-dependent Hamiltonian with the LCU decomposition $H(\tau)=\sum_{l=1}^{L}\alpha_l(\tau)H_l$, where the controlled exponentiation of each $H_l$ can be performed with $g_e$ gates, can be simulated for time $\tau\in[0,t]$ with accuracy $\epsilon$ with gate complexity
	\begin{equation}
	4\bigg\lceil\frac{\norm{\alpha}_{1,1}^2}{\epsilon}\bigg\rceil g_e,
	\end{equation}
	assuming that the probability distribution $p_l(\tau):=\alpha_l(\tau)/\norm{\alpha}_{1,1}$ can be efficiently sampled.
\end{corbis}
\begin{proof}
	For any $H(\tau)=\sum_{l=1}^{L}\alpha_l(\tau)H_l$, we estimate
	\begin{equation}
		\norm{H}_{\infty,1,1}
		=\sum_{l=1}^{L}\int_{0}^{t}\mathrm{d}\tau\, \alpha_l(\tau)\norm{H_l}_\infty
		=\norm{\alpha}_{1,1}.
	\end{equation}
	The claimed complexity then follows from \thm{qdrift_lc_long_time}.
\end{proof}

%%%%%%%%%%%%%%%%%%%%%%%%%%%%%%%%%%%%%%%%%%%%%%%%%%%%%%%%%%%%%%%%%%%%%%%%%%%%%%
\section{Rescaled Dyson-series algorithm}
\label{sec:rescaling}
%!TEX root = L1NormScalingSimulation.tex

In this section, we propose a general principle for rescaling the Schr\"{o}dinger equation (\sec{rescaling_principle}). We then apply this principle to improve the Dyson-series algorithm (\sec{rescaling_dscomplexity}) to achieve $L^1$-norm scaling.

%%%%%%%%%%%%%%%%%%%%%%%%%%%%%%%%%%%%%%%%%%%%%%%%%%%%%%%%%%%%%%%%%%%%%%%%%%%%%%
\subsection{A rescaling principle for the Schr\"{o}dinger equation}
\label{sec:rescaling_principle}
Let $H(\tau)$ be a time-dependent Hamiltonian defined for $0\leq\tau\leq t$. The evolution under $H(\tau)$ for time $t$ is given by the unitary operator $\E(t,0)=\exp_{\mathcal{T}}\bigl(-i\int_{0}^{t}\mathrm{d}\tau\, H(\tau)\bigr)$, which satisfies the Schr\"{o}dinger equation
\begin{equation}
\frac{\mathrm{d}}{\mathrm{d}t}\E(t,0)=-iH(t)\E(t,0).
\end{equation}

We now propose a rescaling principle that helps to achieve $L^1$-norm scaling.
The goal is to effectively have a Hamiltonian with constant spectral norm.
Recall that for a time-independent Hamiltonian one can multiply the time by a constant and divide the Hamiltonian by the same constant and obtain the same time evolution.
We can achieve something similar with a time-dependent Hamiltonian by rescaling the total evolution time to
\begin{align}
	s=f(t)&:=\int_{0}^{t}\mathrm{d}\tau \norm{H(\tau)}_\infty
\end{align}
and using the rescaled Hamiltonian
\begin{align}
    \widetilde{H}\big(\varsigma\big)&:=H(f^{-1}(\varsigma))/\norm{H(f^{-1}(\varsigma))}_\infty .
\end{align}
From this definition, it is obvious that the Hamiltonian has constant norm, because
\begin{equation}
	\bigl\|{\widetilde{H}\big(\varsigma\big)}\bigr\|_\infty=\norm{\frac{H(f^{-1}(\varsigma))}{\norm{H(f^{-1}(\varsigma))}_\infty}}_\infty=1.
\end{equation}
Moreover, we find that the time-evolution operator satisfies
\begin{equation}
\begin{aligned}
	\frac{\mathrm{d}}{\mathrm{d}s}\E(t,0)
	&=\frac{\mathrm{d}}{\mathrm{d}t}\E(t,0)\cdot\frac{\mathrm{d}t}{\mathrm{d}s}\\
	&=-iH(t)\E(t,0)\cdot\frac{1}{\norm{H(t)}_\infty}\\
	&=-i\widetilde{H}\big(s\big)\E(t,0).
\end{aligned}
\end{equation}
Solving this equation shows that we can obtain exactly the same time-evolution operator using the rescaled time and Hamiltonian:
\begin{equation}
\E(t,0)
=\exp_{\mathcal{T}}\biggl(-i\int_{0}^{s}\mathrm{d}\varsigma\, \widetilde{H}(\varsigma)\biggr).
\end{equation}

We also have the norm equality
\begin{equation}
	s\max_{\varsigma\in[0,s]}\bigl\|{\widetilde{H}(\varsigma)}\bigr\|_\infty
	=s
	=\norm{H}_{\infty,1},
\end{equation}
so any algorithm that simulates the rescaled Hamiltonian $\widetilde{H}(\varsigma)$ with complexity that scales with the $L^\infty$ norm can simulate the original Hamiltonian with $L^1$-norm scaling.

While our above discussion considers the spectral norm $\norm{\cdot}_\infty$, other norms may be used depending on the input model of the Hamiltonian. Indeed, in the analysis for the SM model below we use the max-norm instead of the spectral norm.

Note that it may be hard in practice to compute the exact value of $\norm{H(\tau)}$. However, we can instead use the change of variable
\begin{equation}
	s=f(t):=\int_{0}^{t}\mathrm{d}\tau\, \Lambda(\tau),
\end{equation}
where $\Lambda(\tau)\geq\norm{H(\tau)}$ is any upper bound on the norm that can be efficiently computed.

%%%%%%%%%%%%%%%%%%%%%%%%%%%%%%%%%%%%%%%%%%%%%%%%%%%%%%%%%%%%%%%%%%%%%%%%%%%%%%
\subsection{Complexity of the rescaled Dyson-series algorithm}
\label{sec:rescaling_dscomplexity}
In this section, we show how the Dyson-series algorithm \cite{BCCKS14,LW18,Kieferova18} can be rescaled to have $L^1$-norm scaling.
We address this first for the SM model of Hamiltonian access before handling the LCU model (see \sec{prelim_model} for definitions of these models).

Unlike continuous qDRIFT, the rescaled Dyson-series algorithm requires additional oracle access to the input Hamiltonian. Specifically, we need oracles that implement the inverse change-of-variable
\begin{equation}
\label{eq:oracle_var}
\CO_{\text{var}}|\varsigma,z\rangle=|\varsigma,z\oplus f^{-1}(\varsigma)\rangle
\end{equation}
and compute the max-norm
\begin{equation}
\label{eq:oracle_norm}
\CO_{\text{norm}}|\tau,z\rangle=|\tau,z\oplus \norm{H(\tau)}_{\max}\rangle.
\end{equation}
A quantum computer with access to these oracles can simulate time-dependent Hamiltonians with $L^1$-norm scaling.
Note that because $f(\tau)$ increases monotonically, we can use binary search to compute $f^{-1}(\varsigma)$ up to precision $\delta$ using $O(\log(t/\delta))$ queries to $f$, so we expect it to be straightforward to implement the oracle $\mathcal{O}_{\rm var}$ in practice.

\begin{theorem}[Rescaled Dyson-series algorithm with $L^1$-norm scaling (SM)]
	\label{thm:rescaling_sm}
	For $\tau \in [0,t]$, let $H(\tau)$ be a $d$-sparse Hamiltonian acting on $n$ qubits.
	Let $f(t):=\int_{0}^{t}\mathrm{d}\tau \norm{H(\tau)}_{\max}$, and suppose we have an upper bound on the max-norm, denoted $\norm{H\big(f^{-1}(\varsigma)\big)}_{\max}$, that is positive and continuously differentiable.
	Then $H$ can be simulated for time $t$ with accuracy $\epsilon$ using
	\begin{equation}
	\label{eq:dyson_sm}
	O\bigg(d\norm{H}_{\max,1}\frac{\log(d\norm{H}_{\max,1}/\epsilon)}{\log\log(d\norm{H}_{\max,1}/\epsilon)}\bigg)
	\end{equation}
	queries to the oracles $\CO_{\text{loc}}$, $\CO_{\text{val}}$, $\CO_{\text{var}}$, $\CO_{\text{norm}}$ and an additional
	\begin{equation}
	\begin{aligned}\label{eq:gatecount}
	\widetilde{O}\big(d\norm{H}_{\max,1}n\big)
	\end{aligned}
	\end{equation}
	gates. 
\end{theorem}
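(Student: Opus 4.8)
The plan is to reduce this statement to the existing sparse Dyson-series simulation by applying the rescaling principle of \sec{rescaling_principle}, using the max-norm in place of the spectral norm as anticipated there. First I would set $s=f(t)=\int_0^t\mathrm{d}\tau\,\norm{H(\tau)}_{\max}$ and define the rescaled Hamiltonian $\widetilde{H}(\varsigma):=H(f^{-1}(\varsigma))/\norm{H(f^{-1}(\varsigma))}_{\max}$. By the rescaling principle the target evolution factors exactly as $\E(t,0)=\exp_{\mathcal{T}}\bigl(-i\int_0^s\mathrm{d}\varsigma\,\widetilde{H}(\varsigma)\bigr)$, where now $\norm{\widetilde{H}(\varsigma)}_{\max}=1$ for every $\varsigma$ and the total rescaled time is $s=\norm{H}_{\max,1}$. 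Because $\widetilde{H}$ has constant max-norm, its $L^1$ and $L^\infty$ norms over $[0,s]$ coincide, so any algorithm that simulates $\widetilde{H}$ with $L^\infty$-norm scaling automatically yields $L^1$-norm scaling in the original variables.

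The next step is to verify that $\widetilde{H}$ is accessible in the SM model at cost $O(1)$ per query. Dividing by a scalar does not alter the sparsity pattern, so $\widetilde{H}$ is $d$-sparse with the same time-independent nonzero locations as $H$; hence $\CO_{\text{loc}}$ carries over unchanged. To return a matrix element $\widetilde{H}_{jk}(\varsigma)$ I would first call $\CO_{\text{var}}$ to obtain $f^{-1}(\varsigma)$, then call $\CO_{\text{val}}$ to obtain $H_{jk}(f^{-1}(\varsigma))$ and $\CO_{\text{norm}}$ to obtain $\norm{H(f^{-1}(\varsigma))}_{\max}$, and finally divide. Thus one value query for $\widetilde{H}$ costs $O(1)$ queries to $\CO_{\text{loc}},\CO_{\text{val}},\CO_{\text{var}},\CO_{\text{norm}}$.

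With this access in hand I would invoke the sparse Dyson-series algorithm of \cite{BCCKS14,LW18,Kieferova18} on $\widetilde{H}$ for time $s$ to accuracy $\epsilon$. Substituting $\norm{\widetilde{H}}_{\max,\infty}=1$ and total time $s=\norm{H}_{\max,1}$ into the $L^\infty$ query bound $O\bigl(d\,\norm{\widetilde{H}}_{\max,\infty}\,s\,\tfrac{\log(ds/\epsilon)}{\log\log(ds/\epsilon)}\bigr)$ reproduces \eq{dyson_sm}, and the corresponding gate count becomes $\widetilde{O}\bigl(d\norm{H}_{\max,1}n\bigr)$, matching \eq{gatecount}; the $O(1)$ oracle overhead per effective query only inflates suppressed constants and logarithmic factors.

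The main obstacle I anticipate is the error analysis of the Dyson-series time discretization for the rescaled Hamiltonian. The standard analysis discretizes the rescaled time on a grid whose error depends on the derivative of the simulated Hamiltonian, and by the chain and quotient rules $\widetilde{H}'(\varsigma)$ acquires factors of $1/\norm{H(f^{-1}(\varsigma))}_{\max}$ together with the original derivative $H'$. The task is to confirm that this enlarged derivative enters only through the logarithmic factors, consistent with the fact that both the fractional-query and Dyson-series approaches depend only logarithmically on the Hamiltonian's derivative, so that the leading-order complexity is preserved up to the $\widetilde{O}$ suppression. I would also check that $f^{-1}$ and the normalizing max-norm need only be computed to polylogarithmic precision in the relevant parameters, so that their implementation (e.g.\ via the binary-search realization of $\CO_{\text{var}}$ noted above) does not affect the leading count.
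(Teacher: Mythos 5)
Your proposal follows essentially the same route as the paper's proof: rescale via $f(t)=\int_0^t\mathrm{d}\tau\,\norm{H(\tau)}_{\max}$, build the rescaled value oracle from $\CO_{\text{var}}$, $\CO_{\text{val}}$, $\CO_{\text{norm}}$ at $O(1)$ overhead, and invoke the Dyson-series algorithm of \cite{LW18} on $\widetilde{H}$ for time $s=\norm{H}_{\max,1}$. The obstacle you flag at the end is exactly what the paper addresses: it bounds $\|\widetilde{H}'\|_{\infty,1}$ by the quotient rule (picking up factors of $1/\min_\tau\norm{H(\tau)}_{\max}$) and observes that this quantity only enters the gate count through $\log M$, so it is absorbed into the $\widetilde{O}$ notation, confirming your anticipated resolution.
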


\begin{proof}
	We simulate the rescaled Hamiltonian $\widetilde{H}(\varsigma):=H\big(f^{-1}(\varsigma)\big)/\norm{H\big(f^{-1}(\varsigma)\big)}_{\max}$ for a total time of $\norm{H}_{\max,1}:=\int_0^t\mathrm{d}\tau\norm{H(\tau)}_{\max}$ using the rescaling function
	\begin{equation}
	f(t):=\int_{0}^{t}\mathrm{d}\tau \norm{H(\tau)}_{\max}.
	\end{equation}
	Following \cite[Theorem 9]{LW18}, we construct a unitary operation that block-encodes
	\begin{equation}
		\sum_{\varsigma\in [0,t/M,2t/M,\ldots, (M-1)t/M]}|\varsigma\rangle\langle\varsigma|\otimes\frac{\widetilde{H}(\varsigma)}{d\bigl\|{\widetilde{H}}\bigr\|_{\max,\infty}}
		=\sum_{\varsigma}|\varsigma\rangle\langle\varsigma|\otimes\frac{\widetilde{H}(\varsigma)}{d}.
	\end{equation}
	This construction is similar to \cite[Lemma 8]{LW18}, except that the Hamiltonian is rescaled. Specifically, we use oracles $\CO_{\text{var}}$ and $\CO_{\text{norm}}$ to implement the transformation
	\begin{equation}
		|\varsigma,0,0\rangle
		\mapsto|\varsigma,f^{-1}(\varsigma),\norm{H(f^{-1}(\varsigma))}_{\max}\rangle,
	\end{equation}
	from which we obtain the rescaled Hamiltonian by querying $\CO_{\text{val}}$ and re-normalizing the result with $\norm{H(f^{-1}(\varsigma))}_{\max}$ to compute $\widetilde{H}_{jk}(\varsigma)$:
	\begin{equation}
	\begin{aligned}
	|f^{-1}(\varsigma),\norm{H(f^{-1}(\varsigma))}_{\max},j,k,z\rangle
	\mapsto|f^{-1}(\varsigma),\norm{H(f^{-1}(\varsigma))}_{\max},j,k,z\oplus\widetilde{H}_{jk}(\varsigma)\rangle.
	\end{aligned}
	\end{equation}
	We then uncompute the ancilla registers storing $f^{-1}(\varsigma)$ and $\norm{H(f^{-1}(\varsigma))}_{\max}$. Overall, this implements a rescaled oracle
	\begin{equation}
	\widetilde{\CO}_{\text{val}}|\varsigma,j,k,z\rangle=|\varsigma,j,k,z\oplus \widetilde{H}_{jk}(\varsigma)\rangle.
	\end{equation}
	The remaining algorithm proceeds as in \cite{LW18}. As the implementation of each $\widetilde{\CO}_{\text{val}}$ requires $O(1)$ queries to the oracles $\CO_{\text{loc}}$, $\CO_{\text{val}}$, $\CO_{\text{var}}$, $\CO_{\text{norm}}$, the overall query complexity is obtained by applying \cite[Theorem 9]{LW18} to the rescaled Hamiltonian, giving query complexity
	\begin{equation}
	    O\bigg(T\frac{\log(T/\epsilon)}{\log\log(T/\epsilon)}\bigg)
	\end{equation}
where
	\begin{equation}
	    T = d \big\|\widetilde{H}\big\|_{\max,\infty} s = d \norm{H}_{\max,1} .
	\end{equation}
Using this expression for $T$ gives the query complexity in \eq{dyson_sm}.

We now analyze the gate complexity.
If the entries of the Hamiltonian are given to within precision
\begin{equation}
    O \left( \frac \epsilon{td} \right),
\end{equation}
then the overall error due to the finite precision is $O(\epsilon)$.
Since the maximum value of any matrix entry of $H$ is $\|H\|_{\max,\infty}$, the number of bits required is
\begin{equation}
   n_p \in  \Theta \left( \log \left(\frac{d\|H\|_{\max,\infty} t} \epsilon\right) \right).
\end{equation}
The implementation involves performing arithmetic on these values, which can be performed with complexity\footnote{In \cite{Kieferova18} and \cite{LW18}, the most complicated operations used are additions, which can be performed with complexity $O(n_p)$. Here we are normalizing the Hamiltonian, so we must also perform multiplication and/or division, for which the straightforward approach has complexity $O(n_p^2)$.
While it is possible to perform multiplication and division with lower asymptotic complexity, such algorithms are only advantageous for very large instances, and do not affect the result as presented in \thm{rescaling_sm}, where logarithmic contributions to the gate complexity are suppressed.} $O(n_p^2)$.
Since this is a logarithmic gate cost for each oracle query,
it gives a contribution to the gate complexity of
$\widetilde{O}\bigl(d\norm{H}_{\max,1}\bigr)$.

The number of time steps is \cite[Corollary 4]{LW18}
\begin{equation}\label{eq:mval}
    M \in \Theta \left( \frac{t}{\alpha\epsilon}\left(\frac{\|{\widetilde H'}\|_{\infty,1}}{t} +\|\widetilde H\|_{\infty,\infty}^2\right) \right)
\end{equation}
where $\alpha = d \|\widetilde H\|_{\max,\infty}$.
The complexity to prepare the time registers is $\log M$ times the query complexity.
We may ignore the second term in \eq{mval}, because it is negligible compared to the complexity of the arithmetic.

We have
\begin{equation}
    \|{\widetilde H'}\|_{\infty,1} = \int_0^s \mathrm{d}\varsigma \, \biggl\| \frac {\mathrm{d}\widetilde H}{\mathrm{d}\varsigma} \biggr\|_\infty .
\end{equation}
Evaluating this derivative, we get
\begin{equation}
		 \frac {\mathrm{d}\widetilde H}{\mathrm{d}\varsigma}
		=\frac{\mathrm{d}\tau}{\mathrm{d}\varsigma}\frac{\mathrm{d}}{\mathrm{d}\tau}\bigg(\frac{H(\tau)}{\norm{H(\tau)}_{\max}}\bigg)
		=\frac{\mathrm{d}\tau}{\mathrm{d}\varsigma} \left( \frac{H'\big(\tau\big)}
		{\norm{H\big(\tau\big)}_{\max}}-\frac{H(\tau)\norm{H\big(\tau\big)}_{\max}'}{\norm{H(\tau)}_{\max}^2}\right) ,
\end{equation}
so we obtain
\begin{align}
\|{\widetilde H'}\|_{\infty,1} &= \int_0^t \mathrm{d}\tau \left\|  \frac{H'\big(\tau\big)}
		{\norm{H(\tau)}_{\max}}-\frac{H\big(\tau\big)\norm{H\big(\tau\big)}_{\max}'}{\norm{H(\tau)}_{\max}^2} \right\|_\infty \nonumber \\
	&\le \frac{\norm{H'(\tau)}_{\infty,1}}
	{\min_{\tau\in [0,t]}\norm{H(\tau)}_{\max}} + \frac{\norm{\norm{H(\tau)}_{\infty}\norm{H(\tau)}'_{\max}}_1}
	{\min_{\tau\in [0,t]}\norm{H(\tau)}_{\max}^2} \nonumber \\
	&\le \frac{\norm{H'(\tau)}_{\infty,1}}
	{\min_{\tau\in [0,t]}\norm{H(\tau)}_{\max}} + \frac{\norm{H(\tau)}_{\infty,2}\norm{\norm{H(\tau)}'_{\max}}_2}
	{\min_{\tau\in [0,t]}\norm{H(\tau)}_{\max}^2} .
\end{align}
The gate complexity of the preparation of the time registers is $\log M$ times the query complexity, where we have shown that
\begin{equation}
    M \in \Theta \left( \frac{\|{\widetilde H'}\|_{\infty,1}}{\epsilon d\|H\|_{\max, \infty}} \right),
\end{equation}
where $\|{\widetilde H'}\|_{\infty,1}$ is polynomial in norms of $H$ and its derivative.
Since this a logarithmic cost, the contribution to the complexity from preparation of the time registers is $\widetilde{O}\bigl(d\norm{H}_{\max,1}\bigr)$.

The remaining contribution to the gate complexity comes from acting on the system itself.
The cost of this is $O(n)$ for each of the oracle queries, which gives gate complexity $\widetilde{O}\big(d\norm{H}_{\max,1}n\big)$ (this is the dominant cost in \eq{gatecount}).
\end{proof}

Thus, the rescaled Dyson-series algorithm can simulate time-dependent Hamiltonians in the SM model with $L^1$-norm scaling.  Next we turn our attention to the LCU model. For an input Hamiltonian $H(\tau)=\sum_{l=1}^{L}\alpha_l(\tau)H_l$, this approach assumes quantum access to the coefficient oracle
\begin{equation}
	O_{\text{coeff}}|\tau,l,z\rangle=|\tau,l,z\oplus\alpha_l(\tau)\rangle,
\end{equation}
in contrast to the continuous qDRIFT which only needs classical access. Given a classical circuit that computes the coefficients $\alpha_l(\tau)$, we can express it as a sequence of elementary gates and construct a corresponding quantum circuit with the same gate complexity. In our analysis, we ignore the implementation details and count the number of uses of the quantum oracle $\CO_{\text{coeff}}$. The definitions of $\CO_{\text{var}}$ and $\CO_{\text{norm}}$ are similar to the SM case, except that the norm $\norm{H(\tau)}_{\max}$ is replaced by $\norm{\alpha(\tau)}_{\infty}$.

\begin{thmbis}{thm:rescaling_sm}[Rescaled Dyson-series algorithm with $L^1$-norm scaling (LCU)]
	\label{thm:rescaling_lcu}
	For $\tau \in [0,t]$, let $H$ be a time-dependent Hamiltonian with the decomposition $H(\tau)=\sum_{l=1}^{L}\alpha_l(\tau)H_l$, where each controlled $H_l$ can be performed with $g_c$ gates. Let $f(t):=\int_{0}^{t}\mathrm{d}\tau \norm{\alpha(\tau)}_{\infty}$, and suppose we have an upper bound on the $\ell_\infty$ norm of the coefficients, denoted $\norm{\alpha\big(f^{-1}(\varsigma)\big)}_{\infty}$, that is continuously differentiable.
	Then $H$ can be simulated for time $t$ with accuracy $\epsilon$ using
	\begin{equation}
		O\bigg(L\norm{\alpha}_{\infty,1}\frac{\log(L\norm{\alpha}_{\infty,1}/\epsilon)}{\log\log(L\norm{\alpha}_{\infty,1}/\epsilon)}\bigg)
	\end{equation}
	queries to the oracles $\CO_{\text{coeff}}$, $\CO_{\text{var}}$, $\CO_{\text{norm}}$ and an additional
	\begin{equation}
	\begin{aligned}
	\widetilde{O}\bigl(\norm{\alpha}_{\infty,1}L^2g_c\bigr)
	\end{aligned}
	\end{equation}
	gates.
\end{thmbis}
\begin{proof}
	We simulate the rescaled Hamiltonian $\widetilde{H}(\varsigma):=H\big(f^{-1}(\varsigma)\big)/\norm{\alpha\big(f^{-1}(\varsigma)\big)}_{\infty}$ for time $\norm{\alpha}_{\infty,1}$ using the rescaling function
	\begin{equation}
	f(t):=\int_{0}^{t}\mathrm{d}\tau \norm{\alpha(\tau)}_{\infty}.
	\end{equation}
	The rescaled Hamiltonian takes the form
	\begin{equation}
		\widetilde{H}(\varsigma)=\sum_{l=1}^{L}\widetilde{\alpha}_l(f^{-1}(\varsigma))H_l,
	\end{equation}
	where $\widetilde{\alpha}_l(\tau):=\alpha_l(\tau)/\norm{\alpha(\tau)}_{\infty}$. Therefore, we have a global upper bound on the absolute value of the coefficients
	\begin{equation}
		\norm{\widetilde{\alpha}}_{\infty,\infty}\leq 1.
	\end{equation}
	The remaining construction is similar to \cite[Section V C]{Kieferova18}, except that the Hamiltonian is rescaled. Specifically, we use oracles $\CO_{\text{var}}$ and $\CO_{\text{norm}}$ to implement the transformation
	\begin{equation}
	|\varsigma,0,0\rangle
	\mapsto|\varsigma,f^{-1}(\varsigma),\norm{\alpha(f^{-1}(\varsigma))}_{\infty}\rangle,
	\end{equation}
	from which we obtain the rescaled coefficients by querying $\CO_{\text{coeff}}$ and doing arithmetic, giving
	\begin{equation}
	\begin{aligned}
	|f^{-1}(\varsigma),\norm{\alpha(f^{-1}(\varsigma))}_{\infty},l,z\rangle
	\mapsto
	|f^{-1}(\varsigma),\norm{\alpha(f^{-1}(\varsigma))}_{\infty},l,z\oplus\widetilde{\alpha}_l(\varsigma)\rangle.
	\end{aligned}
	\end{equation}
	We then uncompute the ancilla registers storing $f^{-1}(\varsigma)$ and $\norm{\alpha(f^{-1}(\varsigma))}_{\infty}$. Overall, this implements a rescaled oracle
	\begin{equation}
	\widetilde{\CO}_{\text{coeff}}|\varsigma,l,z\rangle=|\varsigma,l,z\oplus\widetilde{\alpha}_l(\varsigma)\rangle.
	\end{equation}
	The remaining algorithm proceeds as in \cite{Kieferova18}. As the implementation of each $\widetilde{\CO}_{\text{coeff}}$ requires $O(1)$ queries to the oracles $\CO_{\text{coeff}}$, $\CO_{\text{var}}$, $\CO_{\text{norm}}$, the overall query complexity is obtained by applying \cite[Theorem 2]{Kieferova18} to the rescaled Hamiltonian. The analysis of the gate complexity proceeds along similar lines to that of \thm{rescaling_sm}. The multiplicative factor of $Lg_c$ is the cost of implementing the $\textsc{Select}$ operation
	\begin{equation}
		\textsc{Select}(H)=\sum_{l=1}^{L}|l\rangle\langle l|\otimes H_l .
	\end{equation}
That complexity may be obtained from \cite[Lemma G.7]{CMNRS18}, or the unary iteration procedure of \cite[Section III A]{BGBWMPFN18}.
\end{proof}

%%%%%%%%%%%%%%%%%%%%%%%%%%%%%%%%%%%%%%%%%%%%%%%%%%%%%%%%%%%%%%%%%%%%%%%%%%%%%%
\section{Applications to chemistry and scattering theory}
\label{sec:scatter}
%!TEX root = L1NormScalingSimulation.tex

There are numerous cases in physics where one needs to simulate time-dependent quantum systems.  Indeed, the pulse sequences that constitute individual quantum gates or adiabatic sweeps are described by time-dependent Hamiltonians.
Here, we look at the particular case of simulating semi-classical scattering of molecules within a chemical reaction as an example of time-dependent Hamiltonian dynamics~\cite{tully1998mixed,fernandez2006modeling}.

Chemical scattering problems involve colliding reagents.
As the molecules move closer, the electronic configuration changes due to strengthening Coulomb interactions, which is ultimately responsible for either the reagents forming a bond or flying apart depending on the initial conditions and the nature of the reagents.
In the non-relativistic case, the Hamiltonian for two colliding atoms $A$ and $B$ at positions $x_A$ and $x_B$, respectively, and $M$ electrons with positions $x_m$ for $m=1,\ldots,M$, can be expressed as
\begin{align}
H &= H_{\text{nuc}} + H_{\text{elec}}\nonumber\\
H_{\text{nuc}}&=\frac{p_A^2}{2m_A} +\frac{p^2_B}{2m_B} + \frac{Z_A Z_B}{|x_A - x_B|} \nonumber\\
H_{\text{elec}}&= \sum_{m=1}^M \frac{p_m^2}{2m_e}- \sum_{m=1}^M \frac{Z_A}{|x_m-x_A|} - \sum_{m=1}^M \frac{Z_B}{|x_m-x_B|} + \sum_{m < m'} \frac{1}{|x_m -x_{m'}|}.
\end{align}
Here $p_m= [[p_m]_x, [p_m]_y, [p_m]_z]$ and $x_m=[[x_m]_x, [x_m]_y, [x_m]_z]$ are three-dimensional vectors of operators, whereas the corresponding nuclear terms (such as $x_A$ and $x_B$) are three-dimensional vectors of scalars.  We further define $|x_m -x_m'|$ to be the operator $$|x_m -x_{m'}| :=\sqrt{([x_m]_x - [x_{m'}]_x)^2 + ([x_m]_y - [x_{m'}]_y)^2+([x_m]_z - [x_{m'}]_z)^2}.$$

The wave function can be thought of as having a nuclear as well as an electronic component. First, we assume that the nuclear and the electronic wave functions are decoupled~\cite{gerber1982time,whittier1999quantum}:
\begin{equation}
\psi(x_A,x_B,x_1,\ldots,x_M;t)\approx \psi_{\text{nuc}}(x_A,x_B;t) \psi_{\text{elec}}(x_1,\ldots,x_M;t).
\end{equation}
This approximation is justified by the fact that the nuclear mass is substantially greater than the electronic mass.
We then follow the time-dependent self-consistent field (TDSCF) approximation, which further treats $x_A$ and $x_B$ as classical degrees of freedom $x_A(t)$ and $x_B(t)$ with conjugate momenta ($p_A(t), p_B(t)$).
This simplification is justified by Ehrenfest's theorem, which states that for a sufficiently narrow quantum wave packet, the equation of motion for the centroid follows the classical trajectory (to leading order in $\hbar$).
Under this approximation, the electronic dynamics satisfy
\begin{equation}
i\partial_t \ket{\psi_{\text{elec}}(t)} = \left( \sum_{m=1}^M \frac{p_m^2}{2m_e}-\frac{Z_A}{|x_m-x_A(t)|} -  \frac{Z_B}{|x_m-x_B(t)|} + \sum_{m < m'} \frac{1}{|x_m -x_{m'}|} \right)\ket{\psi_{\text{elec}}(t)},
\end{equation}
where we have suppressed the implicit dependence of the electronic wave function on $x_1,\ldots,x_M$.
The equation of motion for the two nuclear positions in the time-dependent self-consistent field approximation within the Ehrenfest method is given by the Hamilton-Jacobi equation:
\begin{align}
\partial_t[p_A(t)]_i &= - \partial_{[x_A]_i} \bra{\psi_{\text{elec}}(t)}H_{\text{elec}} \ket{\psi_{\text{elec}}(t)} - \partial_{[x_A]_i} H_{\text{nuc}}(t),\nonumber\\
\partial_t[x_A(t)]_i &=  \partial_{[p_A]_i} H_{\text{nuc}}(t),\label{eq:HJ}
\end{align}
and similarly for $x_B.$  The function $H_{\text{nuc}}(t)$ here is simply the Hamiltonian $H_{\text{nuc}}$ with the classical substitutions $x_A \rightarrow x_A(t)$, $p_A \rightarrow p_A(t)$ and similarly for $x_B$ and $p_B$.  Similarly, we define $H_{\text{elec}}(t)$ to be the electronic Hamiltonian under this classical substitution.

The evolution in the Ehrenfest method is governed by a pair of tightly coupled quantum and classical dynamical equations, wherein the full Schr\"{o}dinger equation only needs to be solved to understand part of the dynamics for the system.  Indeed, as the Born-Oppenheimer approximation instantaneously holds under the above approximations, we can further express the electronic dynamics within a second-quantized framework with respect to a basis of molecular orbitals as

\begin{equation}
H_{\text{elec}}(t) = \sum_{pq} h_{pq}(t) a^\dagger_p a_q + \frac{1}{2}\sum_{pqrs} h_{pqrs}(t) a^\dagger_pa^\dagger_q a_r a_s,
\end{equation}
where for some basis of orthonormal molecular orbitals ${\psi_p(\vec{x};t)}$ (which are implicitly time dependent if these basis functions are chosen to be functions of the nuclear positions, as would be appropriate for an atomic orbital basis),
\begin{align}
h_{pq}(t) &= \iint \mathrm{d}_{\vec{x}_1}\mathrm{d}_{\vec{x}_2}\psi_p^*(\vec{x}_1;t) \left(\sum_{m=1}^M \frac{p_m^2}{2m_e}-\frac{Z_A}{|x_m-x_A(t)|} -  \frac{Z_B}{|x_m-x_B(t)|}\right)\psi_q(\vec{x}_2;t)\\
h_{pqrs}(t) &= \iiiint \mathrm{d}_{\vec x_1}\mathrm{d}_{\vec x_2}\mathrm{d}_{\vec x_3}\mathrm{d}_{\vec x_4}\psi_p^*(\vec x_1;t) \psi_q^*(\vec x_2;t) \sum_{m < m'} \frac{1}{|x_m -x_{m'}|} \psi_r(\vec x_3;t)\psi_s(\vec x_4;t).\end{align}
Thus under the above approximations, the dynamics that need to be simulated take the form of a standard second quantized simulation of chemistry, except the Hamiltonian is time dependent.  The generalization of this to multiple nuclei is similarly straightforward, with the summation over two nuclear positions replaced by summation over all $L$ positions.

Consider the case where two reagents move towards each other from distant points with large momenta. To get an intuitive understanding of this evolution, it is instructive to examine the case of two molecules colliding using a classical force field.  This will give us an expression that is qualitatively accurate for $x_A(t)$ and $x_B(t)$.  To do this, we use a Lennard-Jones potential to model the interaction between two helium nuclei.  The potential as a function of separation between the nuclei $r(t) = |x_A(t) -x_B(t)|$ is assumed to be of the form $V(r)=\epsilon\left(\frac{r_m^{12}}{r^{12}}-\frac{2 r_m^6}{r^6}  \right)$, where $\epsilon \approx 10$ K and $r_m \approx 2.6$ \AA~\cite{talu2001reference}. Setting the initial radial velocity to be approximately the root mean square (RMS) velocity of helium at $25\,\degree\mathrm{C}$, we solve the classical equations of motion to find the trajectory shown in~\fig{trajectory}.

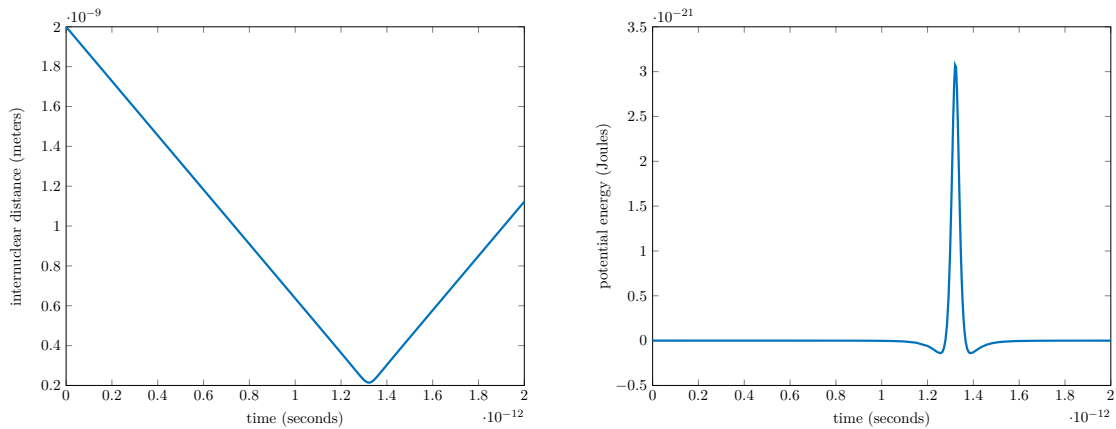
\begin{figure}[t]
	\centering
\begin{tikzpicture}[scale=0.525]
\begin{axis}[%
xlabel = time (seconds), 
ylabel = internuclear distance (meters),
width=4.521in,
height=3.566in,
at={(0.758in,0.481in)},
scale only axis,
xmin=0,
xmax=2e-12,
ymin=2e-10,
ymax=2e-09,
axis background/.style={fill=white},
every axis plot/.append style={ultra thick}
]
\addplot [color=mycolor1, forget plot]
table[row sep=crcr]{%
	0	2e-09\\
	5e-14	1.93184999696794e-09\\
	1e-13	1.86369998685748e-09\\
	1.5e-13	1.79554996760868e-09\\
	2e-13	1.72739993644371e-09\\
	2.5e-13	1.6592498895958e-09\\
	3e-13	1.59109982224485e-09\\
	3.5e-13	1.52294972721486e-09\\
	4e-13	1.45479959429022e-09\\
	4.5e-13	1.38664940863744e-09\\
	5e-13	1.31849915125234e-09\\
	5.5e-13	1.2503487905919e-09\\
	6e-13	1.18219827805748e-09\\
	6.5e-13	1.11404753379456e-09\\
	7e-13	1.04589646209988e-09\\
	7.5e-13	9.77744868343269e-10\\
	8e-13	9.09592411938907e-10\\
	8.5e-13	8.41438336055993e-10\\
	9e-13	7.73282075696726e-10\\
	9.5e-13	7.05121576618258e-10\\
	1e-12	6.36952261991363e-10\\
	1.05e-12	5.68751290693006e-10\\
	1.1e-12	5.00528444566762e-10\\
	1.15e-12	4.32226238349944e-10\\
	1.2e-12	3.63653918855061e-10\\
	1.21453426008449e-12	3.43629864808454e-10\\
	1.22906852016898e-12	3.23528988571234e-10\\
	1.24360278025347e-12	3.03339620072528e-10\\
	1.25813704033796e-12	2.83093498483341e-10\\
	1.26484617862794e-12	2.73754307278116e-10\\
	1.27155531691793e-12	2.64449948009562e-10\\
	1.27826445520791e-12	2.55237194338059e-10\\
	1.2849735934979e-12	2.46218040932867e-10\\
	1.29168273178788e-12	2.37536622629269e-10\\
	1.29839187007786e-12	2.29506240532663e-10\\
	1.30510100836785e-12	2.22594337702493e-10\\
	1.31181014665783e-12	2.17372068169345e-10\\
	1.31602567315468e-12	2.15250144318447e-10\\
	1.32024119965152e-12	2.14210564210077e-10\\
	1.32445672614836e-12	2.14329001418702e-10\\
	1.32867225264521e-12	2.15587856874763e-10\\
	1.33288777914205e-12	2.17894185851448e-10\\
	1.33710330563889e-12	2.21106896269637e-10\\
	1.34131883213574e-12	2.25062359466784e-10\\
	1.34553435863258e-12	2.29591808005001e-10\\
	1.34945915407049e-12	2.34187947362864e-10\\
	1.35338394950839e-12	2.39055880482273e-10\\
	1.3573087449463e-12	2.44120327352532e-10\\
	1.3612335403842e-12	2.49323824909099e-10\\
	1.36706219637564e-12	2.57221355131953e-10\\
	1.37289085236707e-12	2.65243434463405e-10\\
	1.37871950835851e-12	2.73329679318237e-10\\
	1.38454816434995e-12	2.81443985978484e-10\\
	1.39720952068707e-12	2.99088169172525e-10\\
	1.4098708770242e-12	3.16696720579333e-10\\
	1.42253223336132e-12	3.34239702663289e-10\\
	1.43519358969845e-12	3.51719633598105e-10\\
	1.45926250738848e-12	3.84830925153818e-10\\
	1.48333142507851e-12	4.1783049941497e-10\\
	1.50740034276854e-12	4.50762313528507e-10\\
	1.53146926045857e-12	4.83653548339168e-10\\
	1.58146926045857e-12	5.51907527703049e-10\\
	1.63146926045857e-12	6.20110502576033e-10\\
	1.68146926045857e-12	6.88294337751152e-10\\
	1.73146926045857e-12	7.56470673575796e-10\\
	1.78146926045857e-12	8.24640039701906e-10\\
	1.83146926045857e-12	8.92806513300362e-10\\
	1.88146926045857e-12	9.60971409884483e-10\\
	1.93146926045857e-12	1.02913540792992e-09\\
	1.94860194534393e-12	1.05249189055369e-09\\
	1.96573463022928e-12	1.07584831475201e-09\\
	1.98286731511464e-12	1.09920468888961e-09\\
	2e-12	1.12256101998942e-09\\
};
\end{axis}
\end{tikzpicture}
\hspace{0.5cm}
\begin{tikzpicture}[scale=0.525]
\begin{axis}[%
xlabel = time (seconds), 
 ylabel = potential energy (Joules),
width=4.521in,
height=3.566in,
at={(0.758in,0.481in)},
scale only axis,
xmin=0,
xmax=2e-12,
ymin=-5e-22,
ymax=3.5e-21,
axis background/.style={fill=white},
every axis plot/.append style={ultra thick}
]
\addplot [color=mycolor1, forget plot]
  table[row sep=crcr]{%
0	-2.45788410592268e-27\\
5e-14	-3.02622021639525e-27\\
1e-13	-3.75391439396308e-27\\
1.5e-13	-4.69412533273978e-27\\
2e-13	-5.92081257463483e-27\\
2.5e-13	-7.53818305656968e-27\\
3e-13	-9.69507649996447e-27\\
3.5e-13	-1.26072439610286e-26\\
4e-13	-1.65924862259064e-26\\
4.5e-13	-2.21271870696981e-26\\
5e-13	-2.99392619436143e-26\\
5.5e-13	-4.11646572711223e-26\\
6e-13	-5.76178395725825e-26\\
6.5e-13	-8.22730481631516e-26\\
7e-13	-1.20148562829305e-25\\
7.5e-13	-1.7999123918975e-25\\
8e-13	-2.77618509633264e-25\\
8.5e-13	-4.42861747489152e-25\\
9e-13	-7.347656646707e-25\\
9.5e-13	-1.27694695392704e-24\\
1e-12	-2.34576636092833e-24\\
1.05e-12	-4.60910268491424e-24\\
1.1e-12	-9.82653193344972e-24\\
1.15e-12	-2.30937872399911e-23\\
1.2e-12	-5.98159563044226e-23\\
1.21453426008449e-12	-7.93617624687783e-23\\
1.22906852016898e-12	-1.03818306498229e-22\\
1.24360278025347e-12	-1.29646788032431e-22\\
1.25813704033796e-12	-1.40054130136295e-22\\
1.26484617862794e-12	-1.26138612578223e-22\\
1.27155531691793e-12	-8.49688750352065e-23\\
1.27826445520791e-12	4.85412881384555e-24\\
1.2849735934979e-12	1.77573439364405e-22\\
1.29168273178788e-12	4.83635831286041e-22\\
1.29839187007786e-12	9.7741451504936e-22\\
1.30510100836785e-12	1.67115565944801e-21\\
1.31181014665783e-12	2.45026256878973e-21\\
1.31602567315468e-12	2.85224160121871e-21\\
1.32024119965152e-12	3.07084207177981e-21\\
1.32445672614836e-12	3.04516765934848e-21\\
1.32867225264521e-12	2.78441760729718e-21\\
1.33288777914205e-12	2.35972200543493e-21\\
1.33710330563889e-12	1.86643022455648e-21\\
1.34131883213574e-12	1.38654935625778e-21\\
1.34553435863258e-12	9.70652740175688e-22\\
1.34945915407049e-12	6.58262991564544e-22\\
1.35338394950839e-12	4.16382149432098e-22\\
1.3573087449463e-12	2.35926225689341e-22\\
1.3612335403842e-12	1.05327548171589e-22\\
1.36706219637564e-12	-2.02343162163834e-23\\
1.37289085236707e-12	-9.00160585820489e-23\\
1.37871950835851e-12	-1.24980279187874e-22\\
1.38454816434995e-12	-1.38933387299492e-22\\
1.39720952068707e-12	-1.34153298363989e-22\\
1.4098708770242e-12	-1.12823106153523e-22\\
1.42253223336132e-12	-9.02591556006383e-23\\
1.43519358969845e-12	-7.08540841910952e-23\\
1.45926250738848e-12	-4.42732679739709e-23\\
1.48333142507851e-12	-2.8013432101179e-23\\
1.50740034276854e-12	-1.81290946006427e-23\\
1.53146926045857e-12	-1.20218567885183e-23\\
1.58146926045857e-12	-5.51111144277711e-24\\
1.63146926045857e-12	-2.75292122436219e-24\\
1.68146926045857e-12	-1.47556271345057e-24\\
1.73146926045857e-12	-8.38192908057419e-25\\
1.78146926045857e-12	-4.99771526843144e-25\\
1.83146926045857e-12	-3.10427362016718e-25\\
1.88146926045857e-12	-1.99676367161381e-25\\
1.93146926045857e-12	-1.32375069697969e-25\\
1.94860194534393e-12	-1.15702200653595e-25\\
1.96573463022928e-12	-1.01428242529418e-25\\
1.98286731511464e-12	-8.91667985723935e-26\\
2e-12	-7.8600082214069e-26\\
};
\end{axis}
\end{tikzpicture}%
\caption{Trajectory for two helium atoms colliding head on and interacting according to a Lennard-Jones potential with an initial separation of $20$ nm and a velocity of $1350$ m/s.\label{fig:trajectory}}
\end{figure}

From \fig{trajectory}, we find that the interaction appears as a brief but intense kick between the two systems. As a result, the norm of the Hamiltonian changes dramatically throughout the evolution and we expect simulation algorithms with $L^1$-norm scaling to be advantageous over previous approaches. We leave a detailed study of such an advantage as a subject for future work.

%%%%%%%%%%%%%%%%%%%%%%%%%%%%%%%%%%%%%%%%%%%%%%%%%%%%%%%%%%%%%%%%%%%%%%%%%%%%%%
\section{Discussion}
\label{sec:discussion}
%!TEX root = L1NormScalingSimulation.tex

We have shown that a time-dependent Hamiltonian $H(\tau)$ can be simulated for the time interval $0\leq\tau\leq t$ with gate complexity that scales according to the $L^1$ norm $\int_{0}^{t}\mathrm{d}\tau\norm{H(\tau)}$. We designed new algorithms based on classical sampling and improved the previous Dyson-series approach to achieve this scaling. This is a polynomial speedup in terms of the norm dependence, an advantage that can be favorable in practice. In particular, our result has potential applications to simulating scattering processes in quantum chemistry. Our analysis also matches the intuition that the difficulty of simulating a quantum system should depend on the norm of the Hamiltonian instantaneously. This dual interpretation suggests that the $L^1$-norm dependence of our result cannot be significantly improved. However, further speedup might be possible if we know a priori the energy range of the initial state, as is suggested in \cite{Pou15,LC172}.

The rescaled Dyson-series approach is nearly optimal with respect to all parameters of interest. Indeed, a lower bound of $\Omega\big(d\norm{H}_{\max}t+\frac{\log(1/\epsilon)}{\log\log(1/\epsilon)}\big)$ was given in \cite[Theorem 2]{BCK15} for simulating time-independent sparse Hamiltonians, which of course also holds for the more general time-dependent case. The query complexity \eq{dyson_sm} of the rescaled Dyson-series approach matches this dependence on $d\norm{H}_{\max} t$ and on $\epsilon$, except that it scales as the product of the two terms instead of the sum (so, as in all quantum simulation algorithms prior to the advent of quantum signal processing \cite{LC17}, it does not achieve the optimal tradeoff between $t$ and $\epsilon$). However, this approach requires computing the rescaling function \eq{oracle_var} and the Hamiltonian norm \eq{oracle_norm} in quantum superposition, which may introduce large overhead in practice. In comparison, continuous qDRIFT relies on classical sampling and may be better suited to near-term simulation. Its complexity has no dependence on the parameter $L$ in the LCU decomposition (\cor{qdrift_lcu}), which is advantageous for Hamiltonians consisting of many terms.

For most of our analysis, we have assumed that the Hamiltonian $H(\tau)$ is continuously differentiable. This assumption can be relaxed to allow finitely many discontinuities. Indeed, if $H(\tau)$ is discontinuous at the times $0=t_0<t_1<\cdots<t_r=t$ but otherwise continuously differentiable, we may divide the evolution into $r$ segments and apply a time-dependent Hamiltonian simulation algorithm within each time interval $[t_j,t_{j+1}]$. For the Dyson-series approach, the complexity depends linearly on the $L^1$ norm, so concatenation gives a simulation of the entire evolution with $L^1$-norm scaling. The assumptions about the Hamiltonian can be even further relaxed: the continuous qDRIFT algorithm works properly provided only that $H(\tau)$ is Lebesgue integrable. Further discussion of this point is beyond the scope of this paper, and we refer the reader to \cite{bk:dollard_friedman} for details.

Our analysis can also be adapted to simulate time-dependent Hamiltonians that have countably many zeros. Indeed, since the equation $H(\tau)=0$ has at most countably many solutions, we can find $c\in\R$ such that $H(\tau)+cI$ is nonzero everywhere. Then, $\exp_{\mathcal{T}}\bigl(-i\int_{0}^{t}\mathrm{d}\tau\, (H(\tau)+cI)\bigr)=e^{-ict}\exp_{\mathcal{T}}\bigl(-i\int_{0}^{t}\mathrm{d}\tau\, H(\tau)\bigr)$,
so the result is only off by a global phase. Note that this assumption can be completely dropped if we use continuous qDRIFT: we define the exceptional set
\begin{equation}
\mathcal{B}_0:=p^{-1}(0)=\{\tau:p(\tau)=0\}=\{\tau:\norm{H(\tau)}_\infty=0\}=\{\tau:H(\tau)=0\}
\end{equation}
and redefine $\CU(t,0)$ as
\begin{equation}
\label{eq:qdrift}
\CU(t,0)(\rho):=
\int_{[0,t]\backslash\mathcal{B}_0}\mathrm{d}\tau\ p(\tau)e^{-i \frac{H(\tau)}{p(\tau)}}\rho e^{i \frac{H(\tau)}{p(\tau)}},\qquad
p(\tau):=\frac{\norm{H(\tau)}_\infty}{\norm{H}_{\infty,1}}.
\end{equation}
We note that $\CU(t,0)$ is a valid quantum channel and can be implemented with unit cost. Indeed, for any input state $\rho$, we randomly sample a value $\tau$ according to $p(\tau)$ and perform $e^{-i {H(\tau)}/{p(\tau)}}$ if $\tau\in[0,t]\backslash\mathcal{B}_0$, and the identity operation otherwise. This implements
\begin{equation}
\begin{aligned}
\int_{[0,t]\backslash\mathcal{B}_0}\mathrm{d}\tau\ p(\tau)e^{-i \frac{H(\tau)}{p(\tau)}}\rho e^{i \frac{H(\tau)}{p(\tau)}}
+\int_{\mathcal{B}_{0}}\mathrm{d}\tau\ p(\tau)\rho
=\CU(t,0)(\rho).
\end{aligned}
\end{equation}
The remaining analysis proceeds as in \sec{qdrift}.

The qDRIFT protocol that we analyzed here only achieves first-order accuracy. It is natural to ask if sampling a different probability distribution could lead to an algorithm with better performance. The answer seems to be ``no'' if we only use a univariate distribution. To see this, consider the discrete case where $H=\sum_{l=1}^{L}H_l$ is a Hamiltonian consisting of $L$ terms. We sample according to a probability vector $p\in[0,1]^L$. Upon getting outcome $l$, we perform the unitary $e^{-it H_l/p_l}$. Effectively, we implement the quantum channel $\CU(t)(\rho):=\sum_{l=1}^{L}p_l e^{-it \frac{H_l}{p_l}}\rho e^{it \frac{H_l}{p_l}}$,
which is a first-order approximation to the ideal evolution $\CE(t)(\rho):=e^{-it\sum_{l=1}^{L}H_l}\rho e^{it\sum_{l=1}^{L}H_l}$. In particular, the difference between $\CU(t)(\rho)$ and $\CE(t)(\rho)$ admits an integral representation
\begin{equation}
\begin{aligned}
\CU(t)(\rho)-\CE(t)(\rho)
=&\int_{0}^{t}\mathrm{d}u\int_{0}^{u}\mathrm{d}v\ 
\left\{\sum_{l=1}^{L}p_l e^{-iv \frac{H_l}{p_l}}\bigg[-i\frac{H_l}{p_l},\bigg[-i\frac{H_l}{p_l},\rho\bigg]\bigg] e^{iv \frac{H_l}{p_l}}\right. \\ 
& \left.-e^{-iv\sum_{l=1}^{L}H_l}
\bigg[-i\sum_{l=1}^{L}H_l,\bigg[-i\sum_{l=1}^{L}H_l,\rho\bigg]\bigg]
e^{iv\sum_{l=1}^{L}H_l}\right\}.
\end{aligned}
\end{equation}
To estimate the diamond-norm error $\norm{\CU(t)-\CE(t)}_\diamond$, we take $\sigma$ to be a state on the joint system of the original register and an ancilla register with the same dimension. We compute
\begin{equation}
\begin{aligned}
\norm{(\CU(t)\otimes\mathds{1})(\sigma)-(\CE(t)\otimes\mathds{1})(\sigma)}_1
\leq&\int_{0}^{t}\mathrm{d}u\int_{0}^{u}\mathrm{d}v\ 
\left\{\sum_{l=1}^{L}p_l\norm{\bigg[-i\frac{H_l}{p_l}\otimes\mathds{1},\bigg[-i\frac{H_l}{p_l}\otimes\mathds{1},\sigma\bigg]\bigg]}_1\right. \\
&\qquad\qquad\quad \left.+\norm{\bigg[-i\sum_{l=1}^{L}H_l\otimes\mathds{1},\bigg[-i\sum_{l=1}^{L}H_l\otimes\mathds{1},\sigma\bigg]\bigg]}_1\right\} \\
\leq&\ 2t^2\Bigg(\sum_{l=1}^{L}\frac{\norm{H_l}_\infty^2}{p_l}+\norm{H}_{\infty,1}^2\Bigg).
\end{aligned}
\end{equation}
By Jensen's inequality,
\begin{equation}
\sum_{l=1}^{L}\frac{\norm{H_l}_\infty^2}{p_l}
=\sum_{l=1}^{L}p_l\Bigg(\frac{\norm{H_l}_\infty}{p_l}\Bigg)^2
\geq\Bigg(\sum_{l=1}^{L}p_l\frac{\norm{H_l}_\infty}{p_l}\Bigg)^2
=\norm{H}_{\infty,1}^2,
\end{equation}
with equality if and only if all ${\norm{H_l}_\infty}/{p_l}$ are equal, implying that the probability distribution $p_l:={\norm{H_l}_\infty}/{\norm{H}_{\infty,1}}$ is optimal. A similar optimality result holds for continuous qDRIFT (though the proof is more involved).

However, this does not preclude the existence of a higher-order qDRIFT protocol using more complicated sampling \cite{OWC20}. For example, besides the basic evolutions $e^{-it H_l/p_l}$, one could evolve under commutators $[H_j,H_k]$ or anticommutators $\{H_j,H_k\}$. We could also use a multivariate distribution and correlate different steps of the qDRIFT protocol. For future work, it would be interesting to find a higher-order protocol, or prove that such a protocol cannot exist.

The fractional-query algorithm described in \sec{prelim_fracquery} provides a natural approach to simulating time-dependent Hamiltonians whose query complexity scales with the $L^1$-norm. While we believe such a scaling also holds for the gate complexity, it would be highly nontrivial to give an explicit implementation. In any case, the fractional-query approach is streamlined by the Dyson-series approach and the latter can be rescaled to achieve $L^1$-norm scaling.

The rescaling principle that we proposed can potentially be applied to improve other quantum simulation algorithms. For example, we can use the product-formula algorithm \cite{WBHS10} to simulate the rescaled Hamiltonian  $\widetilde{H}(\varsigma):=H\big(f^{-1}(\varsigma)\big)/\norm{H\big(f^{-1}(\varsigma)\big)}_\infty$ for time $s=\norm{H}_{\infty,1}$. The difficulty here is that the derivative of the rescaled Hamiltonian can be larger than the original one, making the rescaled algorithm perform worse. We leave a thorough study of this issue as a subject for future work.

Finally, it would be interesting to identify further applications of our $L^1$-norm scaling result, such as to designing new quantum algorithms and to improving the performance of quantum chemistry simulation. It might also be of interest to demonstrate these approaches experimentally, for applications such as implementing adiabatic algorithms with quantum circuits.

%%%%%%%%%%%%%%%%%%%%%%%%%%%%%%%%%%%%%%%%%%%%%%%%%%%%%%%%%%%%%%%%%%%%%%%%%%%%%%
\section*{Acknowledgements}
We thank Ryan Babbush, Earl Campbell, Andr\'{a}s Gily\'{e}n, M\'{a}ria Kieferov\'{a}, Robin Kothari, Tong\-yang Li, Guang Hao Low, Jarrod R. McClean, Yuval Sanders, and Leonard Wossnig for helpful discussions. DWB is funded by Australian Research Council Discovery Projects DP160102426 and DP190102633. This work was also supported in part by the Army Research Office (MURI award W911NF-16-1-0349), the Canadian Institute for Advanced Research, the National Science Foundation (grant 1813814), and the U.S. Department of Energy, Office of Science, Office of Advanced Scientific Computing Research, Quantum Algorithms Teams and Quantum Testbed Pathfinder programs. In addition, YS is supported by the Google Ph.D.\ Fellowship program. XW acknowledges support from the Department of Defense. NW is supported in part by a Google Quantum Research Award.

%%%%%%%%%%%%%%%%%%%%%%%%%%%%%%%%%%%%%%%%%%%%%%%%%%%%%%%%%%%%%%%%%%%%%%%%%%%%%%
\bibliographystyle{plainnat}
\bibliography{L1NormScalingSimulation}

%%%%%%%%%%%%%%%%%%%%%%%%%%%%%%%%%%%%%%%%%%%%%%%%%%%%%%%%%%%%%%%%%%%%%%%%%%%%%%
\appendix
\section{Continuous qDRIFT and Hamiltonian averaging}
\label{append:poulin}
%!TEX root = L1NormScalingSimulation.tex

Poulin, Qarry, Somma, and Verstraete developed an algorithm for time-dependent Hamiltonian simulation based on techniques of Hamiltonian averaging and Monte Carlo estimation \cite{PQSV11}. In this section, we discuss the relation between their algorithm and our continuous qDRIFT.

Let $H(\tau)$ be a time-dependent Hamiltonian defined for $0\leq\tau\leq t$. Assume that $H(\tau)$ is continuous, nonzero everywhere, and efficiently simulable for each particular $\tau$. Then, Poulin et al.'s approach simulates $H(\tau)$ for time $\tau\in[0,t]$ in two steps: (i) they replace the evolution $\exp_{\mathcal{T}}\big({-}i\int_{0}^{t}\mathrm{d}\tau\, H(\tau)\big)$ by an ordinary matrix exponential $e^{-itH_{\text{av}}}$ of the average Hamiltonian $H_{\text{av}}:=\frac{1}{t}\int_{0}^{t}\mathrm{d}\tau\, H(\tau)$ with an error that scales like $O\big((t\norm{H}_{\infty,\infty})^2\big)$; (ii) they further implement $\int_{0}^{t}\mathrm{d}\tau\, H(\tau)$ with Monte Carlo estimation by picking $m$ random times and approximating $\int_{0}^{t}\mathrm{d}\tau\ H(\tau)\approx \frac{1}{m}\sum_{k=1}^{m}H(\tau_k)$ with error $O\big(t\norm{H}_{\infty,\infty}/\sqrt{m}\big)$, the result of which is further approximated by product formulas.

The approach of \cite{PQSV11} is essentially a sampling-based algorithm and thus similar in spirit to our continuous qDRIFT, except for a notable difference: their algorithm scales with the $L^\infty$ norm instead of the $L^1$ norm. Unfortunately, this drawback cannot be remedied merely by a better analysis of the same algorithm. Indeed, they use a uniform distribution to pick random times during the Monte Carlo estimation. This sampling ignores the instantaneous norm $\norm{H(\tau)}_\infty$ of the Hamiltonian and therefore the resulting algorithm cannot scale with the $L^1$ norm $\int_{0}^{t}\mathrm{d}\tau \norm{H(\tau)}_\infty$.

Instead, continuous qDRIFT uses a probability distribution that biases toward those times with larger instantaneous norm. In \sec{qdrift}, we proved that such a sampling gives a direct simulation of time-dependent Hamiltonians with complexity that scales with the $L^1$ norm. We now give an indirect implementation: (i') we show in \append{poulin_average} that the error of replacing $\exp_{\mathcal{T}}\big(-i\int_{0}^{t}\mathrm{d}\tau\, H(\tau)\big)$ by an ordinary matrix exponential of $H_{\text{av}}$ scales like $O\big(\norm{H}_{\infty,1}^2\big)$, improving the analysis of \cite{PQSV11}; (ii') we further prove in \append{poulin_qdrift} that the average Hamiltonian can be simulated by continuous qDRIFT with $L^1$-norm scaling. Combining these two steps, we see that the Monte Carlo estimation approach of \cite{PQSV11} is superseded by continuous qDRIFT.

%%%%%%%%%%%%%%%%%%%%%%%%%%%%%%%%%%%%%%%%%%%%%%%%%%%%%%%%%%%%%%%%%%%%%%%%%%%%%%
\subsection{Hamiltonian averaging}
\label{append:poulin_average}
Let $H(\tau)$ be a time-dependent Hamiltonian defined for $0\leq\tau\leq t$ and assume that it is continuous and nonzero everywhere. Define
\begin{equation}
	\E(s,0):=\exp_{\mathcal{T}}\bigg(-i\int_{0}^{s}\mathrm{d}\tau\, H(\tau)\bigg),\qquad
	\E_{\text{av}}(s):=e^{-isH_{\text{av}}},
\end{equation}
where $H_{\text{av}}:=\frac{1}{t}\int_{0}^{t}\mathrm{d}\tau\, H(\tau)$ is the average Hamiltonian. Our goal is to bound the distance between $\E(s,0)$ and $\E_{\text{av}}(s)$ at $s=t$. Using the initial condition $\E(0,0)=\E_{\text{av}}(0)=I$, we have
\begin{equation}
\begin{aligned}
\norm{\E(t,0)-\E_{\text{av}}(t)}_\infty
=&\norm{\E_{\text{av}}^\dagger(t) \E(t,0)-I}_\infty
=\norm{\int_{0}^{t}\mathrm{d}s\frac{\mathrm{d}}{\mathrm{d}s}\big[\E_{\text{av}}^\dagger(s) \E(s,0)\big]}_\infty.
\end{aligned}
\end{equation}

By the Schr\"{o}dinger equation
\begin{equation}
\frac{\mathrm{d}}{\mathrm{d}s}\E_{\text{av}}(s)=-iH_{\text{av}}\E_{\text{av}}(s),\qquad
\frac{\mathrm{d}}{\mathrm{d}s}\E(s,0)=-iH(s)\E(s,0),
\end{equation}
we obtain
\begin{equation}
\begin{aligned}
\int_{0}^{t}\mathrm{d}s\frac{\mathrm{d}}{\mathrm{d}s}\big[\E_{\text{av}}^\dagger(s) \E(s,0)\big]
&=\int_{0}^{t}\mathrm{d}s \left\{\E_{\text{av}}^\dagger(s) \big[iH_{\text{av}}\big] \E(s,0)
+\E_{\text{av}}^\dagger(s) \big[{-iH(s)}\big] \E(s,0)\right\} \\
&=\frac{1}{t}\int_{0}^{t}\mathrm{d}s\int_{0}^{t}\mathrm{d}\tau \left\{ \E_{\text{av}}^\dagger(s) \big[iH(\tau)\big] \E(s,0)
+\E_{\text{av}}^\dagger(s) \big[{-iH(s)}\big] \E(s,0)\right\} \\
&=\frac{1}{t}\int_{0}^{t}\mathrm{d}s\int_{0}^{t}\mathrm{d}\tau \left\{\E_{\text{av}}^\dagger(s) \big[iH(\tau)\big] \E(s,0)
+\E_{\text{av}}^\dagger(\tau) \big[{-iH(\tau)}\big] \E(\tau,0)\right\},
\end{aligned}
\end{equation}
which implies, by telescoping, that
\begin{equation}
\begin{aligned}
\norm{\int_{0}^{t}\mathrm{d}s\frac{\mathrm{d}}{\mathrm{d}s}\big[\E_{\text{av}}^\dagger(s) \E(s,0)\big]}_\infty
&\leq\frac{1}{t}\int_{0}^{t} \mathrm{d}s\int_{0}^{t}\mathrm{d}\tau\, \big( \norm{\E_{\text{av}}(s)-\E_{\text{av}}(\tau)}_\infty\norm{H(\tau)}_\infty\\
&\qquad\qquad\qquad\quad+\norm{\E(s,0)-\E(\tau,0)}_\infty\norm{H(\tau)}_\infty\big).
\end{aligned}
\end{equation}

By the fundamental theorem of calculus, the first term of the integrand can be bounded as
\begin{equation}
\norm{\E_{\text{av}}(s)-\E_{\text{av}}(\tau)}_\infty
\leq\norm{H_{\text{av}}}_\infty|s-\tau|
\leq\frac{1}{t}\int_{0}^{t}\mathrm{d}u\norm{H(u)}_\infty|s-\tau|
\leq\norm{H}_{\infty,1}.
\end{equation}
To handle the second term, we use \lem{generator}. Observe that the generator of $\E(s,0)$ is $H(u),\ 0\leq u\leq s$, whereas the generator of $\E(\tau,0)$ is $H(u),\ 0\leq u\leq \tau$. So they only differ on the interval $\big[\min\{s,\tau\},\max\{s,\tau\}\big]$. Consequently,
\begin{equation}
\norm{\E(s,0)-\E(\tau,0)}_\infty
\leq\int_{\min\{s,\tau\}}^{\max\{s,\tau\}}\mathrm{d}u\norm{H(u)}_\infty
\leq\int_{0}^{t}\mathrm{d}u\norm{H(u)}_\infty=\norm{H}_{\infty,1}.
\end{equation}
Altogether, we have
\begin{equation}
\begin{aligned}
\norm{\E(t,0)-\E_{\text{av}}(t)}_\infty
&=\norm{\int_{0}^{t}\mathrm{d}s\frac{\mathrm{d}}{\mathrm{d}s}\big[\E_{\text{av}}(s)^\dagger \E(s,0)\big]}_\infty\\
&\leq\frac{1}{t}\int_{0}^{t}\mathrm{d}s\int_{0}^{t}\mathrm{d}\tau\ \left(\norm{\E_{\text{av}}(s)-\E_{\text{av}}(\tau)}_\infty\norm{H(\tau)}_\infty \right. \\
&\qquad\qquad\qquad\quad \left. +\norm{\E(s,0)-\E(\tau,0)}_\infty\norm{H(\tau)}_\infty  \right) \\
&\leq 2\norm{H}_{\infty,1}^2 .
\end{aligned}
\end{equation}

\begin{theorem}[Hamiltonian simulation by averaging (spectral-norm distance)]
	\label{thm:average_spectral}
	Let $H(\tau)$ be a time-dependent Hamiltonian defined for $0\leq\tau\leq t$ and assume that it is continuous and nonzero everywhere. Define $\E(t,0):=\exp_{\mathcal{T}}\big(-i\int_{0}^{t}\mathrm{d}\tau\, H(\tau)\big)$ and $\E_{\text{av}}(t):=e^{-itH_{\text{av}}}$, where $H_{\text{av}}:=\frac{1}{t}\int_{0}^{t}\mathrm{d}\tau\, H(\tau)$ is the average Hamiltonian. Then,
	\begin{equation}
	\norm{\E(t,0)-\E_{\text{av}}(t)}_\infty
	\leq 2\norm{H}_{\infty,1}^2.
	\end{equation}
\end{theorem}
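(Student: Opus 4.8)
The plan is to exploit unitarity to reduce the two-sided comparison to a comparison with the identity, and then to express the resulting operator as an integral of a derivative that can be bounded term-by-term using \lem{generator}. First I would use the fact that the spectral norm is invariant under multiplication by unitaries to write
\begin{equation}
\norm{\E(t,0)-\E_{\text{av}}(t)}_\infty=\norm{\E_{\text{av}}^\dagger(t)\E(t,0)-I}_\infty.
\end{equation}
Since $\E_{\text{av}}^\dagger(0)\E(0,0)=I$, the fundamental theorem of calculus rewrites the right-hand side as $\norm{\int_0^t\mathrm{d}s\,\frac{\mathrm{d}}{\mathrm{d}s}[\E_{\text{av}}^\dagger(s)\E(s,0)]}_\infty$. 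Differentiating the product with the two Schr\"odinger equations $\frac{\mathrm{d}}{\mathrm{d}s}\E_{\text{av}}(s)=-iH_{\text{av}}\E_{\text{av}}(s)$ and $\frac{\mathrm{d}}{\mathrm{d}s}\E(s,0)=-iH(s)\E(s,0)$ produces the integrand $\E_{\text{av}}^\dagger(s)\,[iH_{\text{av}}-iH(s)]\,\E(s,0)$, so the whole discrepancy is generated by the difference between the averaged and the instantaneous Hamiltonian.

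Next I would unfold the average $H_{\text{av}}=\frac1t\int_0^t\mathrm{d}\tau\,H(\tau)$ to turn the single $s$-integral into a symmetric double integral over $s$ and $\tau$. The manipulation that ultimately yields $L^1$ scaling is to relabel $s\leftrightarrow\tau$ in the instantaneous term, so that both contributions carry the common factor $H(\tau)$; the integrand becomes $\E_{\text{av}}^\dagger(s)[iH(\tau)]\E(s,0)-\E_{\text{av}}^\dagger(\tau)[iH(\tau)]\E(\tau,0)$. Inserting and subtracting the cross term $\E_{\text{av}}^\dagger(\tau)[iH(\tau)]\E(s,0)$ splits this into $(\E_{\text{av}}^\dagger(s)-\E_{\text{av}}^\dagger(\tau))[iH(\tau)]\E(s,0)$ plus $\E_{\text{av}}^\dagger(\tau)[iH(\tau)](\E(s,0)-\E(\tau,0))$. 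Using submultiplicativity of the spectral norm and unitarity of the surrounding factors, each piece is at most $\norm{H(\tau)}_\infty$ times a difference of evolution operators evaluated at the two times $s$ and $\tau$.

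Finally I would bound the two evolution differences so that both reduce to $\norm{H}_{\infty,1}$. For the averaged evolution I would use Lipschitz continuity of the ordinary exponential, $\norm{\E_{\text{av}}(s)-\E_{\text{av}}(\tau)}_\infty\leq\norm{H_{\text{av}}}_\infty|s-\tau|$, together with $\norm{H_{\text{av}}}_\infty\leq\frac1t\int_0^t\mathrm{d}u\,\norm{H(u)}_\infty$ and $|s-\tau|\leq t$; for the time-ordered evolution I would invoke \lem{generator}, noting that the generators of $\E(s,0)$ and $\E(\tau,0)$ agree off the interval $[\min\{s,\tau\},\max\{s,\tau\}]$, so their distance is at most $\int_0^t\mathrm{d}u\,\norm{H(u)}_\infty=\norm{H}_{\infty,1}$. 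Substituting these bounds, the $s$-integral contributes a factor of $t$ that cancels the $1/t$ from the average, the remaining $\tau$-integral of $\norm{H(\tau)}_\infty$ supplies another factor of $\norm{H}_{\infty,1}$, and summing the two identical contributions gives the claimed $2\norm{H}_{\infty,1}^2$. The main obstacle is the bookkeeping in the middle step: one must organize the symmetrization and telescoping so that every estimate is expressed through the integral $\int_0^t\norm{H(u)}_\infty\,\mathrm{d}u$ rather than through $t\max_u\norm{H(u)}_\infty$, and the use of \lem{generator} to replace a full-interval integral by the shorter $[\min\{s,\tau\},\max\{s,\tau\}]$ integral is precisely what avoids the worse $L^\infty$-type scaling of earlier averaging analyses.
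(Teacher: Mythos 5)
Your proposal is correct and follows essentially the same route as the paper's proof in \append{poulin_average}: unitary invariance plus the fundamental theorem of calculus, differentiation via the two Schr\"{o}dinger equations, symmetrization of the average into a double integral with the relabeling $s\leftrightarrow\tau$, the telescoping split (which you make more explicit than the paper's ``by telescoping''), and the two evolution-difference bounds, with \lem{generator} supplying the short-interval estimate for $\norm{\E(s,0)-\E(\tau,0)}_\infty$. The final accounting, with the $s$-integral canceling the $1/t$ and the $\tau$-integral producing the second factor of $\norm{H}_{\infty,1}$, matches the paper exactly.
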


The above bound on the spectral-norm error can be converted to a bound on the diamond-norm error using \lem{diamond_bound}.

\begin{thmbis}{thm:average_spectral}[Hamiltonian simulation by averaging (diamond-norm distance)]
	\label{thm:average_diamond}
	Let $H(\tau)$ be a time-dependent Hamiltonian defined for $0\leq\tau\leq t$ and assume that it is continuous and nonzero everywhere. Define unitary operators $\E(t,0):=\exp_{\mathcal{T}}\big(-i\int_{0}^{t}\mathrm{d}\tau\ H(\tau)\big)$, $\E_{\text{av}}(t):=e^{-itH_{\text{av}}}$ and let $\CE(t,0)(\cdot):=\E(t,0)(\cdot)\E(t,0)^\dagger$, $\CE_{\text{av}}(t)(\cdot)=\E_{\text{av}}(t)(\cdot)\E_{\text{av}}(t)^\dagger$ be the corresponding channels. Then,
	\begin{equation}
	\norm{\CE(t,0)-\CE_{\text{av}}(t)}_\diamond
	\leq 4\norm{H}_{\infty,1}^2.
	\end{equation}
\end{thmbis}

%%%%%%%%%%%%%%%%%%%%%%%%%%%%%%%%%%%%%%%%%%%%%%%%%%%%%%%%%%%%%%%%%%%%%%%%%%%%%%
\subsection{Implementing Hamiltonian averaging by continuous qDRIFT}
\label{append:poulin_qdrift}
Let $H(\tau)$ be a time-dependent Hamiltonian defined for $0\leq\tau\leq t$ and assume that it is continuous and nonzero everywhere. We have showed that the ideal evolution can be approximated by an evolution under the average Hamiltonian with error that scales with the $L^1$ norm. We now show that such a Hamiltonian averaging can be implemented by continuous qDRIFT, again with $L^1$-norm scaling. This improves over the algorithm of \cite{PQSV11} which scales with the $L^\infty$ norm.

\begin{theorem}[Hamiltonian averaging by continuous qDRIFT]
	\label{thm:average_qdrift}
	Let $H(\tau)$ be a time-dependent Hamiltonian defined for $0\leq\tau\leq t$ and assume that it is continuous and nonzero everywhere. Define $\E_{\text{av}}(t):=e^{-itH_{\text{av}}}$ and let $\CE_{\text{av}}(t)(\cdot)=\E_{\text{av}}(t)(\cdot)\E_{\text{av}}(t)^\dagger$ be the corresponding channels. Let $\CU(t,0)$ be the continuous qDRIFT channel
	\begin{equation}
	\CU(t,0)(\rho)=
	\int_0^t\mathrm{d}\tau\ p(\tau)e^{-i \frac{H(\tau)}{p(\tau)}}\rho e^{i \frac{H(\tau)}{p(\tau)}},
	\end{equation}
	where $p(\tau)=\norm{H(\tau)}_{\infty}/\norm{H}_{\infty,1}$. Then,
	\begin{equation}
	\norm{\CE_{\text{av}}(t)-\CU(t,0)}_\diamond\leq 4\norm{H}_{\infty,1}^2.
	\end{equation}
\end{theorem}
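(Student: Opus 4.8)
The plan is to mirror the direct argument used to prove \thm{qdrift_short_time}, rather than combining \thm{average_diamond} with \thm{qdrift_short_time} through the triangle inequality. The latter route would yield only the weaker bound $8\norm{H}_{\infty,1}^2$, since each of those two estimates already contributes $4\norm{H}_{\infty,1}^2$; obtaining the claimed constant $4$ requires exploiting that the averaged evolution and the qDRIFT channel agree to first order in a common scaling parameter.

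Concretely, I would introduce the parametrized channels $\CE_{\text{av},s}(t)(\rho):=e^{-istH_{\text{av}}}\rho\, e^{istH_{\text{av}}}$ together with the same $\CU_s(t,0)(\rho)=\int_0^t\mathrm{d}\tau\, p(\tau)e^{-is H(\tau)/p(\tau)}\rho\, e^{isH(\tau)/p(\tau)}$ used in \thm{qdrift_short_time}, which interpolate from the identity channel at $s=0$ to $\CE_{\text{av}}(t)$ and $\CU(t,0)$ at $s=1$. The first step is to check that their first derivatives at $s=0$ coincide: differentiating $e^{-istH_{\text{av}}}$ gives $-itH_{\text{av}}=-i\int_0^t\mathrm{d}\tau\,H(\tau)$, so $\frac{\mathrm{d}}{\mathrm{d}s}\CE_{\text{av},s}(t)(\rho)\big\rvert_{s=0}=\big[-i\int_0^t\mathrm{d}\tau\,H(\tau),\rho\big]$, which matches $\frac{\mathrm{d}}{\mathrm{d}s}\CU_s(t,0)(\rho)\big\rvert_{s=0}$ exactly as in the proof of \thm{qdrift_short_time}.

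With the first-order terms cancelling, I would apply the fundamental theorem of calculus twice to write $\CE_{\text{av},1}(t)(\rho)-\CU_1(t,0)(\rho)=\int_0^1\mathrm{d}s\int_0^s\mathrm{d}v\,\frac{\mathrm{d}^2}{\mathrm{d}v^2}\big[\CE_{\text{av},v}(t)(\rho)-\CU_v(t,0)(\rho)\big]$ and bound the integrand term by term. Since $H_{\text{av}}$ is time independent, the needed derivative bounds are immediate: $\norm{\frac{\mathrm{d}}{\mathrm{d}v}e^{-ivtH_{\text{av}}}}_\infty=\norm{tH_{\text{av}}}_\infty\leq\norm{H}_{\infty,1}$ and $\norm{\frac{\mathrm{d}^2}{\mathrm{d}v^2}e^{-ivtH_{\text{av}}}}_\infty=\norm{tH_{\text{av}}}_\infty^2\leq\norm{H}_{\infty,1}^2$, where I use $\norm{tH_{\text{av}}}_\infty=\norm{\int_0^t\mathrm{d}\tau\,H(\tau)}_\infty\leq\norm{H}_{\infty,1}$. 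These are precisely the bounds that $\frac{\mathrm{d}}{\mathrm{d}v}\E_v(t,0)$ and $\frac{\mathrm{d}^2}{\mathrm{d}v^2}\E_v(t,0)$ satisfied in \thm{qdrift_short_time}, and the $\CU_v$ contribution of $4\norm{H}_{\infty,1}^2$ is unchanged because $\norm{H(\tau)/p(\tau)}_\infty=\norm{H}_{\infty,1}$. Collecting the three terms from $\CE_{\text{av},v}$ and the one from $\CU_v$ gives an integrand bounded by $2\norm{H}_{\infty,1}^2+2\norm{H}_{\infty,1}^2+4\norm{H}_{\infty,1}^2=8\norm{H}_{\infty,1}^2$, and the double integral $\int_0^1\mathrm{d}s\int_0^s\mathrm{d}v=\tfrac12$ produces the claimed $4\norm{H}_{\infty,1}^2$.

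The calculation is essentially routine once the setup is in place; the only genuine subtlety — and the reason to avoid the triangle-inequality shortcut — is the first-order agreement established in the second paragraph, which is exactly what halves the constant relative to the naive estimate. As in \thm{qdrift_short_time}, the passage from the trace-norm bound on $\norm{\CE_{\text{av},1}(t)(\rho)-\CU_1(t,0)(\rho)}_1$ to the diamond-norm bound follows by tensoring with an ancilla register and repeating the argument verbatim.
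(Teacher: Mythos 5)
Your proposal is correct and follows essentially the same route as the paper: parametrize both channels by a common scaling variable, observe the first derivatives agree at $s=0$, apply the fundamental theorem of calculus twice, and bound the resulting integrand by $8\norm{H}_{\infty,1}^2$ (the paper writes the second derivative of the averaged channel as a conjugated double commutator rather than expanding via the product rule, but this yields the identical bound). Your observation that the triangle-inequality shortcut only gives $8\norm{H}_{\infty,1}^2$ matches the remark the paper makes just before its proof.
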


Note that by applying the triangle inequality to \thm{qdrift_short_time} and \thm{average_diamond}, we obtain
\begin{equation}
\norm{\CE_{\text{av}}(t)-\CU(t,0)}_\diamond\leq 8\norm{H}_{\infty,1}^2.
\end{equation}
\thm{average_qdrift} improves the constant prefactor from $8$ to $4$.

\begin{proof}[Proof of {\thm{average_qdrift}}]
	We parametrize the two channels $\CE_{\text{av}}(t)$, $\CU(t)$ and define
	\begin{equation}
	\begin{aligned}
	\CE_{\text{av},u}(t)(\rho):=e^{-i u\int_{0}^{t}\mathrm{d}\tau H(\tau)}\rho e^{i u\int_{0}^{t}\mathrm{d}\tau H(\tau)},\quad
	\CU_u(t,0)(\rho):=\int_0^t\mathrm{d}\tau\ p(\tau)e^{-i u\frac{H(\tau)}{p(\tau)}}\rho e^{i u\frac{H(\tau)}{p(\tau)}}.
	\end{aligned}
	\end{equation}
	Since $\CE_{\text{av},0}(t)(\rho)=\rho$, $\CE_{\text{av},1}(t)(\rho)=\CE_{\text{av}}(t)(\rho)$, $\CU_0(t,0)(\rho)=\rho$, and $\CU_1(t,0)(\rho)=\CU(\rho)$, the first derivative of $\CE_{\text{av},u}(t)(\rho)$ and $\CU_u(t,0)(\rho)$ agrees with each other at $u=0$
	\begin{equation}
	\begin{aligned}
	\frac{\mathrm{d}}{\mathrm{d}u}\CE_{\text{av},u}(t)(\rho)\bigg|_{u=0}
	=\bigg[{-i}\int_{0}^{t}\mathrm{d}\tau\, H(\tau),\rho\bigg]
	=\frac{\mathrm{d}}{\mathrm{d}u}\CU_u(t,0)(\rho)\bigg|_{u=0}.
	\end{aligned}
	\end{equation}
	Applying the fundamental theorem of calculus twice, we obtain
	\begin{equation}
	\begin{aligned}
	&\ \CE_{\text{av}}(t)(\rho)-\CU(t,0)(\rho)\\
	=&\ \big(\CE_{\text{av},1}(t)(\rho)-\CE_{\text{av},0}(t)(\rho)\big)-\big(\CU_1(t,0)(\rho)-\CU_0(t,0)(\rho)\big)\\
	=&\ \int_{0}^{1}\mathrm{d}u\int_{0}^{u}\mathrm{d}v\ \frac{\mathrm{d}^2}{\mathrm{d}v^2}\big[\CE_{\text{av},v}(t)(\rho)-\CU_v(t,0)(\rho)\big] \\
	=&\ \int_{0}^{1}\mathrm{d}u\int_{0}^{u}\mathrm{d}v\ 
	\bigg\{e^{-iv \int_{0}^{t}\mathrm{d}\tau H(\tau)}
	\bigg[-i\int_{0}^{t}\mathrm{d}\tau H(\tau),\bigg[-i\int_{0}^{t}\mathrm{d}\tau H(\tau),\rho\bigg]\bigg]
	e^{iv\int_{0}^{t}\mathrm{d}\tau H(\tau)}\\
	&\ \qquad\qquad\qquad-\int_0^t\mathrm{d}\tau\ p(\tau)e^{-iv \frac{H(\tau)}{p(\tau)}}
	\bigg[-i\frac{H(\tau)}{p(\tau)},\bigg[-i\frac{H(\tau)}{p(\tau)},\rho\bigg]\bigg]
	e^{iv \frac{H(\tau)}{p(\tau)}}\bigg\}.
	\end{aligned}
	\end{equation}
	
	We take $\sigma$ to be a state on the joint system of the original register and an ancilla register with the same dimension. Using properties of the Schatten norms, we have
	\begin{equation}
	\begin{aligned}
	&\norm{\big(\CE_{\text{av}}(t)\otimes\mathds{1}\big)(\sigma)-\big(\CU(t)\otimes\mathds{1}\big)(\sigma)}_1\\
	\leq&\int_{0}^{1}\mathrm{d}u\int_{0}^{u}\mathrm{d}v\ 
	\bigg\{\norm{\bigg[-i\int_{0}^{t}\mathrm{d}\tau H(\tau)\otimes\mathds{1},\bigg[-i\int_{0}^{t}\mathrm{d}\tau H(\tau)\otimes\mathds{1},\sigma\bigg]\bigg]}_1\\
	&\qquad\qquad\quad+\int_0^t\mathrm{d}\tau\ p(\tau)
	\norm{\bigg[-i\frac{H(\tau)}{p(\tau)}\otimes\mathds{1},\bigg[-i\frac{H(\tau)}{p(\tau)}\otimes\mathds{1},\sigma\bigg]\bigg]}_1\bigg\} \\
	\leq&\int_{0}^{1}\mathrm{d}u\int_{0}^{u}\mathrm{d}v\ \bigg[4\norm{H}_{\infty,1}^2+4\int_0^t\mathrm{d}\tau\ \frac{\norm{H(\tau)}_\infty^2}{p(\tau)}\bigg].
	\end{aligned}
	\end{equation}
	Using the definition $p(\tau)=\norm{H(\tau)}_{\infty}/\norm{H}_{\infty,1}$, the second term of the integrand can be further simplified as
	\begin{equation}
	\begin{aligned}
	\int_0^t\mathrm{d}\tau\ \frac{\norm{H(\tau)}_\infty^2}{p(\tau)}
	=\norm{H}_{\infty,1}^2,
	\end{aligned}
	\end{equation}
	giving
	\begin{equation}
	\norm{\big(\CE_{\text{av}}(t)\otimes\mathds{1}\big)(\sigma)-\big(\CU(t,0)\otimes\mathds{1}\big)(\sigma)}_1
	\leq 4\norm{H}_{\infty,1}^2.
	\end{equation}
	Optimizing over $\sigma$ proves the claimed bound.
\end{proof}

\end{document}